\begin{document}

\theoremstyle{definition}
\newtheorem{theorem}{Theorem}
\newtheorem{definition}[theorem]{Definition}
\newtheorem{problem}[theorem]{Problem}
\newtheorem{assumption}[theorem]{Assumption}
\newtheorem{corollary}[theorem]{Corollary}
\newtheorem{proposition}[theorem]{Proposition}
\newtheorem{example}[theorem]{Example}
\newtheorem{lemma}[theorem]{Lemma}
\newtheorem{observation}[theorem]{Observation}
\newtheorem{fact}[theorem]{Fact}
\newtheorem{question}[theorem]{Open Question}
\newtheorem{conjecture}[theorem]{Conjecture}
\newtheorem{addendum}[theorem]{Addendum}
\newtheorem{Wdefn*}[theorem]{Definition(Weihrauch \cite{weihrauchd})}
\newtheorem{Mdefn*}[theorem]{Definition(Moschovakis \cite{moschovakis})}
\newcommand{\uint}{{[0, 1]}}
\newcommand{\Cantor}{{\{0,1\}^\mathbb{N}}}
\newcommand{\name}[1]{\textsc{#1}}
\newcommand{\id}{\textrm{id}}
\newcommand{\dom}{\operatorname{dom}}
\newcommand{\Dom}{\operatorname{Dom}}
\newcommand{\codom}{\operatorname{CDom}}
\newcommand{\Baire}{\mathbb{N}^\mathbb{N}}
\newcommand{\hide}[1]{}
\newcommand{\mto}{\rightrightarrows}
\newcommand{\Sierp}{Sierpi\'nski }
\newcommand{\BC}{\mathcal{B}}
\newenvironment{Wdefn}[1]{\vspace{5pt}\begin{Wdefn*}[#1]}{\end{Wdefn*}\vspace{5pt}}
\newenvironment{Mdefn}[1]{\vspace{5pt}\begin{Mdefn*}[#1]}{\end{Mdefn*}\vspace{5pt}}
\newcommand\tboldsymbol[1]{%
\protect\raisebox{0pt}[0pt][0pt]{%
$\underset{\widetilde{}}{\boldsymbol{#1}}$}\mbox{\hskip 1pt}}

\newcommand{\bcode}{{\rm BC}}
\newcommand{\bcodefun}{\pi}
\newcommand{\bolds}{\tboldsymbol{\Sigma}}
\newcommand{\boldp}{\tboldsymbol{\Pi}}
\newcommand{\boldd}{\tboldsymbol{\Delta}}
\newcommand{\boldg}{\tboldsymbol{\Gamma}}
\newcommand{\pointcl}{\boldsymbol{\Lambda}}

\title{A comparison of concepts from computable analysis and effective descriptive set theory}

\author{
Vassilios Gregoriades
\institute{Department of Mathematics \\ Technische Universit\"at Darmstadt, Germany}
\email{gregoriades@mathematik.tu-darmstadt.de}
\and
Tam\'as Kisp\'eter
\institute{Computer Laboratory\\ University of Cambridge, United Kingdom}
\email{tk407@cam.ac.uk}
\and
Arno Pauly
\institute{Computer Laboratory\\ University of Cambridge, United Kingdom}
\email{Arno.Pauly@cl.cam.ac.uk}
}

\def\titlerunning{Comparing computable analysis and effective DST}
\def\authorrunning{V. Gregoriades, T. Kisp\'eter \& A. Pauly}
\maketitle

\begin{abstract}
Computable analysis and effective descriptive set theory are both concerned with complete metric spaces, functions between them and subsets thereof in an effective setting. The precise relationship of the various definitions used in the two disciplines has so far been neglected, a situation this paper is meant to remedy.

As the role of the Cauchy completion is relevant for both effective approaches to Polish spaces, we consider the interplay of effectivity and completion in some more detail.
\end{abstract}



\section{Introduction}
Both computable analysis (\name{Weihrauch} \cite{weihrauch,weihrauchd}) and effective descriptive set theory (\name{Moschovakis} \cite{moschovakis}) have a notion of computability on (complete, separable) metric spaces as a core concept. Nevertheless, the definitions are prima facie different, and the precise relationship has received little attention so far (contrast e.g.~the well-established connections between \name{Weihrauch}'s and \name{Pour-El} \& \name{Richard}'s approach \cite{pourel} to computable analysis).

The lack of exchange between the two approaches becomes even more regrettable in the light of recent developments that draw on both computable analysis and descriptive set theory: \begin{itemize}
\item The study of Weihrauch reducibility often draws on concepts from descriptive set theory via results that identify various classes of measurable functions as lower cones for Weihrauch reducibility \cite{brattka,paulybrattka,paulydebrecht}. The Weihrauch lattice is used as the setting for a metamathematical investigation of the computable content of mathematical theorems \cite{brattka3,gherardi,paulyincomputabilitynashequilibria}.
\item In fact, Weihrauch reducibility was introduced partly as an analogue to Wadge reducibility for functions (see the original papers by \name{Weihrauch} \cite{weihrauchb,weihrauchc,weihrauchd} and subsequent work by \name{Hertling} \cite{hertling}), and as such, can itself be seen as a subfield of (effective) descriptive set (or rather function) theory.
\item The Quasi-Polish spaces \cite{debrecht6} introduced by \name{de Brecht} allow the generalization of many results from descriptive set theory to a much larger class of spaces (e.g.~\cite{debrecht5,schlicht}), and admit a very natural characterization in terms of computable analysis as those countably based spaces with a total admissible Baire-space representation.
\item Even more so, the suggested \emph{synthetic descriptive set theory} \cite{pauly-descriptive,pauly-descriptive-lics} by the third author and \name{de Brecht} would extend some fundamental results from descriptive set theory even further, to general represented spaces \cite{pauly-synthetic-arxiv}. This could pave the way to apply some very strong results by \name{Kihara} \cite{kihara4} to the long-outstanding questions regarding generalizations of the Jayne-Rogers theorem \cite{jaynerogers,ros,ros2,semmes}.
\end{itemize}

Our goal with the present paper is to facilitate the transfer of results between the two frameworks by pointing out both similarities and differences between definitions. For example, it turns out that the requirements of an effective metric space (as used by \name{Moschovakis}) are strictly stronger than those \name{Weihrauch} imposes on a computable metric space -- however, this is only true for specific metrics, by moving to an equivalent metric, the stronger requirements can always be satisfied. Hence, effective Polish spaces and computable Polish spaces are the same concept.

Besides the fundamental layer of metric spaces, we shall also consider the computability structure on hyperspaces such as all $\Sigma_2$-measurable subsets of some given Polish spaces. While these spaces do not carry a meaningful topology, they can nevertheless be studied as represented spaces. This was done implicitly in \cite{moschovakis}, and more explicitly in \cite{brattka,selivanov5,paulydebrecht} and \cite{pauly-descriptive,pauly-descriptive-lics}.

As a digression, we will consider a more abstract view point on the Cauchy completion to illuminate the different approaches to metric spaces.

\section{Effective Polish Spaces and Computable Polish Spaces}
\label{sec:spaces}
We begin by contrasting the definitions of the fundamental structure on metric spaces used to derive computability notions; Moschovakis defines a \textit{recursively presented metric space} (RPMS) and Weihrauch a \textit{computable metric space} (CMS). Throughout the text, by $\nu_\mathbb{Q} : \mathbb{N} \to \mathbb{Q}$ we denote some standard bijection.
\begin{Mdefn}{3B}
\label{def:rpms}
Suppose $ \mathcal{X} $ is a separable, complete metric space with distance function $ d $. A recursive presentation of $  \mathcal{X} $ is any function $ \mathbf{r} : \mathbb{N} \rightarrow \mathcal{X} $ whose image $  \mathbf{r}[\mathbb{N}] = {r_0,r_1,...} $ is dense in $ \mathcal{X} $ and such that the relations
\[ P^{d,\mathbf{r}}(i,j,k) \Longleftrightarrow d(r_i, r_j) \leq \nu_\mathbb{Q}(k) \]
\[ Q^{d,\mathbf{r}}(i,j,k) \Longleftrightarrow d(r_i, r_j) < \nu_\mathbb{Q}(k) \]
are recursive.\smallskip

A \emph{recursively presented metric space} is a triple $(\mathcal{X}, d, \mathbf{r})$ as above. To every recursively presented metric space $(\mathcal{X}, d, \mathbf{r})$ we assign the \emph{nbhd system} $\{N(\mathcal{X},s) \mid s \in \mathbb{N}\}$, where
\[
N(\mathcal{X},2^{i+1} \cdot 3^{k+1}) = \{x \in \mathcal{X} \mid d(x,r_i) < \nu_{\mathbb{Q}}(k)\},
\]
and $N(\mathcal{X},s)$ is the empty set if $s$ does not have the form $2^{i+1} \cdot 3^{k+1}$.\smallskip

When referring to recursively presented metric space, we usually omit the metric and the recursive presentation, and simply write $\mathcal{X}$.
\end{Mdefn}

\begin{Wdefn}{cf~8.1.2}
\label{def:cms}
We define a computable metric space with its Cauchy representation as follows:
\begin{enumerate}
\item An effective metric space is a tuple $\mathbf{M} = (M, d, (a_n)_{n \in \mathbb{N}})$ such that $(M,d)$ is a metric space and $(a_n)_{n \in\mathbb{N}}$ is a dense sequence in $(M,d)$.
\item The Cauchy representation $ \delta_{\mathbf{M}}  : \subseteq \Baire  M $ associated with the effective metric space $ \mathbf{M} =  (M, d, (a_n)_{n\in \mathbb{N}}) $ is defined by
\[ \delta_{\mathbf{M}}(p) = x \: : \: \Longleftrightarrow  \begin{cases}
      d(a_{p(i)}, a_{p(k)}) \leq 2^{-i} \text{ for } i < k\\
   \text{and } x = \lim\limits_{i\rightarrow \infty}a_{p(i)}
  \end{cases} \]
\item Finally, a computable metric space is an effective metric space such that the following relation (involving a standard numbering $\nu_\mathbb{Q} : \mathbb{N} \to \mathbb{Q}$)
\[ \{(t,u,v,w) \: |\: \nu_{\mathbb{Q}}(t) < d(a_u, a_v) <\nu_{\mathbb{Q}}(w) \} \text{ is r.e.} \]
\end{enumerate}
\end{Wdefn}

Both definitions can only ever apply to separable metric spaces, however, a noticeable difference is Moschovakis' requirement of completeness, which is not demanded by Weihrauch. This is only a superfluous distinction, though:

\begin{observation}
\label{obs:cmscompletion}
If $ \mathbf{M} = (M, d, (a_n)_{n \in\mathbb{N}}) $ is a computable metric space (CMS) with a Cauchy representation then its completion $ \overline{\mathbf{M}} = (\overline{M}, \overline{d}, (a_n)_{n \in\mathbb{N}}) $ (where $ \overline{d} $ is the expanded distance function for the completion, specifically $ \overline{d}|_{M} = d $) is also a CMS.
\end{observation}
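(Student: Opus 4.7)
The plan is to observe that both defining conditions of a computable metric space only refer to the dense sequence $(a_n)_{n \in \mathbb{N}}$ and to distances between its elements, and that neither of these changes when passing to the completion.

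More concretely, I would proceed in two short steps. First, I would check that $(a_n)_{n \in \mathbb{N}}$ remains dense in $\overline{M}$: any point of $\overline{M}$ is a limit of a Cauchy sequence in $M$, and each term of that sequence is itself approximable by elements of $(a_n)_{n \in \mathbb{N}}$ because $(a_n)$ was dense in $M$; a standard diagonal argument then exhibits an $(a_n)$-subsequence converging to the given point of $\overline{M}$. Second, I would verify the r.e.\ condition for $\overline{\mathbf{M}}$: since $\overline{d}|_M = d$ and every $a_n$ lies in $M$, we have $\overline{d}(a_u,a_v) = d(a_u,a_v)$ for all $u,v \in \mathbb{N}$, so
\[
\{(t,u,v,w) \mid \nu_{\mathbb{Q}}(t) < \overline{d}(a_u,a_v) < \nu_{\mathbb{Q}}(w)\} = \{(t,u,v,w) \mid \nu_{\mathbb{Q}}(t) < d(a_u,a_v) < \nu_{\mathbb{Q}}(w)\},
\]
and the right-hand side is r.e.\ by hypothesis on $\mathbf{M}$.

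There is no genuine obstacle here; the statement is essentially a bookkeeping observation. The only point worth being careful about is conceptual rather than technical: the r.e.\ condition for a CMS is formulated purely in terms of the dense sequence, so the computability structure is transported to the completion automatically, and one does not need to say anything about how limits in $\overline{M}$ interact with the relation — which is fortunate, because $\overline{d}(x,y)$ for arbitrary $x,y \in \overline{M}$ is in general not decidable in any reasonable sense.
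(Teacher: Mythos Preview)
Your proposal is correct and complete. The paper itself gives no proof for this observation --- it is stated as a self-evident remark --- so your two-step verification (density of $(a_n)$ in $\overline{M}$, and the r.e.\ condition being literally the same set because $\overline{d}(a_u,a_v)=d(a_u,a_v)$) is exactly the intended justification, only spelled out explicitly.
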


A more substantial difference lies in the decidability-requirement of distances between basic points and rational numbers. For Weihrauch's definition, being able to semi-decide $q < d(a_u, a_w)$ and $d(a_u, a_w) < q$ is enough, whereas Moschovakis demands these to be decidable. By identifying\footnote{That this identification actually makes sense follows from the investigation of the class of computable functions between spaces in Section \ref{sec:functions}.} $(a_n)_{n \in\mathbb{N}}$ and $r : \mathbb{N} \to \mathcal{X}$, we immediately find:
\begin{observation}
\label{obsmoschimpweih}
Every recursively presented metric space is a computable metric space.
\end{observation}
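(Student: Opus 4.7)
The plan is to observe that the Weihrauch r.e.~condition is an immediate logical consequence of Moschovakis' two decidability conditions, after identifying the dense sequences $(a_n)_{n\in\mathbb{N}}$ with $\mathbf{r}$ as suggested by the authors' footnote. Concretely, given an RPMS $(\mathcal{X},d,\mathbf{r})$, I would set $a_n := r_n$ and verify each clause of Definition~\ref{def:cms}: the space is a separable metric space with a dense sequence by assumption, so it only remains to check that
\[
\{(t,u,v,w) \mid \nu_{\mathbb{Q}}(t) < d(a_u,a_v) < \nu_{\mathbb{Q}}(w)\}
\]
is r.e.

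The key observation is that this set is in fact \emph{recursive}, not merely r.e. Indeed, the defining predicate decomposes as the conjunction
\[
\bigl(\nu_{\mathbb{Q}}(t) < d(r_u,r_v)\bigr) \;\wedge\; \bigl(d(r_u,r_v) < \nu_{\mathbb{Q}}(w)\bigr),
\]
whose first conjunct is the negation of $P^{d,\mathbf{r}}(u,v,t)$ and whose second conjunct is $Q^{d,\mathbf{r}}(u,v,w)$. Since both $P^{d,\mathbf{r}}$ and $Q^{d,\mathbf{r}}$ are recursive by hypothesis, this Boolean combination is recursive, hence a fortiori r.e., which is exactly what Weihrauch's definition requires.

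There is essentially no obstacle: the content of the observation is just that replacing ``decidable'' by ``semidecidable'' in both directions of the inequality is a weakening, and having access to both $P$ and $Q$ simultaneously is strictly more information than having the two-sided r.e.~relation. The only thing worth flagging is that the identification $a_n \leftrightarrow r_n$ is legitimate -- which is what the footnote defers to Section~\ref{sec:functions} -- but at the level of the raw definitions the implication from RPMS to CMS goes through without further work.
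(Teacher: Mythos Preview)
Your argument is correct and matches the paper's treatment: the paper does not give a proof at all, merely noting that the observation follows ``immediately'' once the dense sequences are identified, and you have spelled out exactly the Boolean manipulation (negation of $P^{d,\mathbf{r}}$ conjoined with $Q^{d,\mathbf{r}}$) that makes this immediate.
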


The converse fails in general:
\begin{example}
Consider the following CMS: Let the base set be $X = \mathbb{N} \uplus \mathbb{N}$, the dense set also $X$ (with a standard bijection) and the distance function be defined as follows (assuming $ n_i $ is the $i$th element of the first copy of $ \mathbb{N} $, $n'_i$ from the second):

\[ d(n_i, n_j) := |n_i-n_j| \]
\[ d(n_i, n'_i) := 1 + \dfrac{1}{s_i} \]
Where $ s_i $ is the step count of the $ i $th Turing machine started with no arguments if it halts
\[ d(n_i, n'_i) := 1 \]
if it does not. Then, to ensure the validity of the triangle inequality, we set
\[ d(n_j, n'_i) := d(n_i, n'_i) + d(n_i, n_j) \]
\[  d(n'_j, n'_i) := d(n_j, n'_i) + d(n'_j, n_j) \]

This space is a CMS but not an RPMS.
\begin{proof}
To output the upper bound $ d(n_i, n'_i) < 1 + \dfrac{1}{k}\leq q_i $ one only has to simulate $ \varphi_i $, the $ i $th program for $ k $ steps, if it did not halt yet, output $ q_i $, if it did halt it will be a lower bound. We can avoid outputting the exact term for the exact step count in case it halts. Similarly we semidecide the other types of distances.

This will form a CMS (with the representation of eventually constant sequences of points).

Suppose towards a contradiction that $(X,d)$ admits a recursive presentation ${\bf r}: \mathbb{N} \to X$. Since the set ${\bf r}[\mathbb{N}]$ is dense in  $(X,d)$ and the latter space is discrete we have that ${\bf r}$ is surjective. It follows easily that there exists a recursive function $f: \mathbb{N} \times \{0,1\} \to \mathbb{N}$ such that ${\bf r}(f(i,0)) = n_i$ and ${\bf r}(f(i,1)) = n'_i$.

The decidability-requirements now imply in particular that $d({\bf r}(f(i,0)), {\bf r}(f(i,1))) = 1$ is a decidable property -- but by the construction of $d$, this would mean that the Halting problem is decidable, providing the desired contradiction.
 \end{proof}
 \end{example}

We will proceed to find a weaker counterpart to Observation \ref{obsmoschimpweih}. First, note that in a recursively presented metric space we can decide whether $r_n = r_m?$, whereas we cannot decide $a_w = a_u?$ in a computable metric space. It is possible, however, to avoid having duplicate points in the dense sequence even in the latter case. First we shall provide a general criterion for when two dense sequences give rise to homeomorphic computable metric spaces (which presumably is a folklore result):

\begin{lemma}
\label{lemma:idcomputable}
For two CMSs $\mathbf{X} = (M, d, (a_i)_{i \in \mathbb{N}}), \mathbf{X'} = (M,d, (a'_i)_{i \in \mathbb{N}})$ if $a_i$ is uniformly computable in $\mathbf{X'}$ then the
\[ id: \mathbf{X} \rightarrow \mathbf{X'}\]
identity function is computable.
\end{lemma}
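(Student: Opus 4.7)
The plan is to exhibit a computable realizer for the identity: given a $\delta_{\mathbf{X}}$-name $p$ of a point $x \in M$, produce a $\delta_{\mathbf{X}'}$-name $q$ of the same $x$. The hypothesis that $(a_i)_{i \in \mathbb{N}}$ is uniformly computable in $\mathbf{X}'$ provides a computable function $g : \mathbb{N} \to \Baire$ such that for every $i$, $g(i)$ is a $\delta_{\mathbf{X}'}$-name of $a_i$. In particular, $d(a'_{g(i)(k)}, a_i) \leq 2^{-k}$ for all $k$.

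Now, the naive choice $q(n) := g(p(n))(n)$ almost works, but the triangle inequality gives $d(a'_{q(n)}, x) \leq d(a'_{g(p(n))(n)}, a_{p(n)}) + d(a_{p(n)}, x) \leq 2^{-n} + 2^{-n} = 2^{-(n-1)}$, which is off by a factor of two. So I would introduce a small shift and define
\[
q(n) \;:=\; g(p(n+2))(n+2).
\]
The map $p \mapsto q$ is computable since $g$ is. Moreover, by the same triangle estimate,
\[
d(a'_{q(n)}, x) \leq d\bigl(a'_{g(p(n+2))(n+2)}, a_{p(n+2)}\bigr) + d(a_{p(n+2)}, x) \leq 2^{-(n+2)} + 2^{-(n+2)} = 2^{-(n+1)}.
\]

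From this single inequality both conditions in the definition of $\delta_{\mathbf{X}'}$ follow at once. Indeed, $\lim_n a'_{q(n)} = x$ is immediate, and for $i < j$ we have
\[
d(a'_{q(i)}, a'_{q(j)}) \leq d(a'_{q(i)}, x) + d(x, a'_{q(j)}) \leq 2^{-(i+1)} + 2^{-(j+1)} \leq 2^{-i},
\]
so $q$ is a fast Cauchy index sequence for $x$ relative to $\mathbf{X}'$, i.e.\ $\delta_{\mathbf{X}'}(q) = x = \delta_{\mathbf{X}}(p)$.

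There is essentially no hard step here: the entire argument is a bookkeeping exercise in combining two fast Cauchy sequences via the triangle inequality, and the only mild subtlety is getting the index shift right so that the two sources of $2^{-k}$-error telescope into the required rate $2^{-i}$ rather than $2^{-(i-1)}$. No appeal beyond the definitions of CMS and Cauchy representation is needed.
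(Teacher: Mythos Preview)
Your proof is correct and follows essentially the same approach as the paper: interpret the hypothesis as a computable map producing $\delta_{\mathbf{X}'}$-names of the $a_i$, then compose with the given $\delta_{\mathbf{X}}$-name of $x$ and shift indices to absorb the doubled error from the triangle inequality. If anything, you are slightly more careful than the paper: the paper shifts by $1$ and only records $d(a'_{q(j)},x)\leq 2^{-j}$, which by itself does not immediately give the pairwise Cauchy bound $d(a'_{q(i)},a'_{q(j)})\leq 2^{-i}$ required by the stated definition of $\delta_{\mathbf{X}'}$; your shift by $2$ and explicit verification of that condition close this small gap.
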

\begin{proof}
The assumption means that given $n, k \in \mathbb{N}$, we can compute some $i_{n,k}$ such that $d(a'_{i_{n,k}}, a_n) < 2^{-k}$. Now we are given some $x \in \mathbf{X}$ via some sequence $(a_{n_j})_{j \in \mathbb{N}}$ such that $d(a_{n_j}, x) < 2^{-j}$. Consider the sequence $(a'_{i_{n_{j+1},(j+1)}})_{j \in \mathbb{N}}$. We find that $d(a'_{i_{n_{j+1},(j+1)}}, x) \leq d(a'_{i_{n_{j+1},(j+1)}}, a_{n_{j+1}}) + d(a_{n_{j+1}}, x) \leq 2^{-j-1} + 2^{-j-1} = 2^{-j}$; thus this sequence constitutes a name for $x \in \mathbf{X}'$.
\end{proof}

In general, we shall write $\mathbf{X} \cong \mathbf{Y}$ iff there is a bijection $\lambda : \mathbf{X} \to \mathbf{Y}$ such that $\lambda$ and $\lambda^{-1}$ are computable. In this paper, $\lambda$ will generally be the identity on the underlying sets.

\begin{corollary}\label{TCompIdDenseSets}
For two CMSs $\mathbf{X} = (M, d, (a_i)_{i \in \mathbb{N}}), \mathbf{X'} = (M,d, (a'_i)_{i \in \mathbb{N}})$ if $a_i$ is uniformly computable in $\mathbf{X'}$ and $a'_i$ is uniformly computable in $\mathbf{X}$ then $\mathbf{X} \cong \mathbf{X'}$
\end{corollary}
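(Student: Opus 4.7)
The plan is to simply invoke Lemma \ref{lemma:idcomputable} twice. The hypotheses of the corollary are symmetric in $\mathbf{X}$ and $\mathbf{X'}$, and each half supplies exactly the assumption of the lemma in one direction.

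First, since $a_i$ is uniformly computable in $\mathbf{X'}$, Lemma \ref{lemma:idcomputable} applied to the pair $(\mathbf{X}, \mathbf{X'})$ yields that $\id : \mathbf{X} \to \mathbf{X'}$ is computable. Second, since $a'_i$ is uniformly computable in $\mathbf{X}$, the same lemma applied to the pair $(\mathbf{X'}, \mathbf{X})$ yields that $\id : \mathbf{X'} \to \mathbf{X}$ is computable. The identity map $\id : M \to M$ is obviously a bijection, and the map $\id : \mathbf{X'} \to \mathbf{X}$ is its set-theoretic inverse. By the convention introduced just before the corollary, these two computabilities together give $\mathbf{X} \cong \mathbf{X'}$, with $\lambda = \id$.

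There is no real obstacle here; the corollary is essentially a symmetrization of the lemma. The only thing to note is the mild bookkeeping that the $\cong$-relation is defined via a bijection $\lambda$ together with computability of $\lambda$ and $\lambda^{-1}$, so one should make the (trivial) remark that $\id_M$ serves as $\lambda$ and is its own inverse, matching the paper's comment that in practice $\lambda$ will be the identity on underlying sets.
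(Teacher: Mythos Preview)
Your proposal is correct and matches the paper's approach exactly: the corollary is stated without proof immediately after Lemma~\ref{lemma:idcomputable}, as it is just the symmetric two-fold application of that lemma with $\lambda = \id$. There is nothing to add.
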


As a slight detour, we will prove a more general, but ultimately too weak result. We recall from \cite{pauly-synthetic-arxiv} that a represented space is called computably Hausdorff, if inequality is recognizable. Inequality is (computably) recognizable in a represented space $ \textbf{X} $ iff the function $ \neq \ :\ \textbf{X}\times \textbf{X}\rightarrow \mathbb{S} $ is computable, where $ \mathbb{S} $ is the \Sierp space (with underlying set $ \{\bot, \top \} $ and open sets $ \emptyset, \{\top \} $, $ \{\bot, \top \}  $) and $ \neq(x,x) = \bot $ and $ \neq(x,y) = \top $ otherwise.  Note that every computable metric space is computably Hausdorff. We define a multivalued map $\textrm{RemoveDuplicates} : \subseteq \mathcal{C}(\mathbb{N}, \mathbf{X}) \mto \mathcal{C}(\mathbb{N}, \mathbf{X})$ by $\dom(\textrm{RemoveDuplicates}) = \{(x_n)_{n \in \mathbb{N}} \mid \omega = |\{x_n \mid n \in \mathbb{N}\}|\}$ and $(y_n)_{n \in \mathbb{N}} \in \textrm{RemoveDuplicates}((x_n)_{n\in \mathbb{N}})$ iff $\{y_n \mid n \in \mathbb{N}\} = \{x_n \mid n \in \mathbb{N}\}$ and $\forall n \neq m \in \mathbb{N} \ . \ y_n \neq y_m$. In words, $\textrm{RemoveDuplicates}$ takes a sequence with infinite range, and produces a sequence with the same range but without duplicates.

\begin{proposition}
\label{prop:removeduplicates}
 Let $\mathbf{X}$ be computably Hausdorff. Then $\textrm{RemoveDuplicates} : \subseteq \mathcal{C}(\mathbb{N}, \mathbf{X}) \mto\mathcal{C}(\mathbb{N}, \mathbf{X})$ is computable.
\begin{proof}
 Given a sequence $(x_n)_{n \in \mathbb{N}}$ in a computable Hausdorff space, we can compute $\{n \in \mathbb{N} \mid \forall i < n \ x_i \neq x_n\} \in \mathcal{O}(\mathbb{N})$, i.e.~as a recursively enumerable set (relative to the sequence). By assumption on the range of the sequence, this set is infinite. It is a basic result from recursion theory that any infinite recursively enumerable set is the range of an injective computable function, and this holds uniformly. Let $\lambda$ be such a function. Then $y_n = x_{\lambda(n)}$ satisfies the criteria for the output.
 \end{proof}
 \end{proposition}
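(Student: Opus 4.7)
The plan is to extract from the input sequence $(x_n)_{n \in \mathbb{N}}$ the indices of first occurrences and then re-enumerate those indices injectively. I would consider the set
\[
A := \{n \in \mathbb{N} \mid \forall i < n \ . \ x_i \neq x_n\}.
\]
Because $\mathbf{X}$ is computably Hausdorff, the inequality map $\neq\ :\ \mathbf{X}\times\mathbf{X}\to\mathbb{S}$ is computable, so each assertion $x_i \neq x_n$ is semi-decidable uniformly in $(x_n)_n$ and in $i,n$. For fixed $n$, membership in $A$ is a finite ($n$-fold) conjunction of semi-decidable assertions, hence itself semi-decidable; taken over all $n$, this shows $A \in \mathcal{O}(\mathbb{N})$, obtained uniformly from a name of the input sequence.

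Next, I would argue that $A$ is infinite. This follows from the standing assumption that $\{x_n \mid n \in \mathbb{N}\}$ is infinite: every value of the range has a least index of occurrence, distinct values give distinct least indices, and each such least index lies in $A$.

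Finally, I would invoke the standard recursion-theoretic fact that every infinite r.e. set is the range of an injective computable function, and that this conversion is uniform, so from an r.e.\ index of $A$ one obtains an index of an injective computable $\lambda : \mathbb{N} \to \mathbb{N}$ with $\lambda[\mathbb{N}] = A$. Setting $y_n := x_{\lambda(n)}$ then yields a sequence computable from the input. It contains no repetitions, because $\lambda$ is injective and each $\lambda(n) \in A$ forces $x_{\lambda(n)}$ to differ from all earlier $x_i$; and it has the same range as $(x_n)_n$, because every value of the original sequence occurs at its least index, which lies in $A = \lambda[\mathbb{N}]$.

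The only real obstacle I anticipate is being careful with uniformity: one must check that the construction of $A$ as an r.e.\ set, and the subsequent passage to an injective computable enumeration, both go through uniformly in names of $(x_n)_n$ rather than only pointwise. Once that is confirmed, everything else is immediate from the definitions.
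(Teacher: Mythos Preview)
Your proof is correct and follows essentially the same approach as the paper: both form the set $A=\{n\mid \forall i<n\ x_i\neq x_n\}$, observe it is r.e.\ via computable Hausdorffness and infinite since the range is infinite, and then invoke the uniform injective enumeration of an infinite r.e.\ set to define $y_n=x_{\lambda(n)}$. Your version simply spells out in more detail the verification of the output conditions and the uniformity concerns.
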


The combination of Lemma \ref{lemma:idcomputable} and Proposition \ref{prop:removeduplicates} allows us to conclude that for any infinite computable metric space $\mathbf{X}$, there is a computable metric space $\mathbf{X}'$ with the same underlying set and metric, and a repetition-free dense sequence such that $\id : \mathbf{X}' \to \mathbf{X}$ is computable -- but we cannot guarantee computability of $\id : \mathbf{X} \to \mathbf{X}'$ thus. Consequently, we shall employ a more complicated construction:

\begin{theorem}\label{TCompIdentCMS}
For any infinite CMS $ \mathbf{X} = (M, d, (a_i)_{i \in \mathbb{N}})$, there is a repetition-free sequence $(a'_i)_{i \in \mathbb{N}}$ such that $\mathbf{X}' = (M, d, (a'_i)_{i \in \mathbb{N}})$ is a CMS with $\mathbf{X} \cong \mathbf{X}'$.
\end{theorem}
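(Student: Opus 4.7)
The plan is to reuse the first-occurrence enumeration underlying Proposition~\ref{prop:removeduplicates} directly on the dense sequence $(a_i)$ of $\mathbf{X}$, and then to verify both directions of the resulting isomorphism via Corollary~\ref{TCompIdDenseSets}. Observe first that in any CMS the relation $a_u \neq a_v$ on basic points is r.e.: it amounts to searching for $(t, w)$ with $\nu_{\mathbb{Q}}(t) > 0$ and $(t, u, v, w)$ in the r.e.\ set of Definition~\ref{def:cms}. Therefore $S := \{n \in \mathbb{N} \mid \forall i < n \ . \ a_i \neq a_n\}$ is r.e., and since $(a_i)$ is dense in the infinite space $M$ its range is infinite, whence $S$ is infinite. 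Fix an injective computable enumeration $\lambda : \mathbb{N} \to S$ (a standard uniform recursion-theoretic fact) and set $a'_i := a_{\lambda(i)}$.

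I would then verify that $\mathbf{X}' := (M, d, (a'_i)_{i \in \mathbb{N}})$ is a CMS with a repetition-free dense sequence. Repetition-freeness is immediate from the definition of $S$: if $\lambda(i) < \lambda(i')$ then $a_{\lambda(i')} \neq a_{\lambda(i)}$. Density of $(a'_i)$ follows from the fact that every $a_n$ equals $a_m$ for $m = \min\{j \le n \mid a_j = a_n\} \in S$, so the ranges of $(a_i)$ and $(a'_i)$ coincide as subsets of $M$. The r.e.\ condition transports along the computable $\lambda$: the relation $\nu_{\mathbb{Q}}(t) < d(a'_u, a'_v) < \nu_{\mathbb{Q}}(w)$ reduces to $\nu_{\mathbb{Q}}(t) < d(a_{\lambda(u)}, a_{\lambda(v)}) < \nu_{\mathbb{Q}}(w)$, which is r.e.\ by assumption on $\mathbf{X}$.

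To conclude $\mathbf{X} \cong \mathbf{X}'$ via Corollary~\ref{TCompIdDenseSets}, I would show that each dense sequence is uniformly computable in the other. That $a'_i$ is uniformly computable in $\mathbf{X}$ is trivial, since $\lambda(i)$ is a computable index into $(a_n)$ and gives the eventually constant name $(\lambda(i), \lambda(i), \ldots)$. For the other direction, given $(n, k)$, I would enumerate $j \in \mathbb{N}$ and run in parallel the semi-decisions of $d(a'_j, a_n) < 2^{-k}$ provided by the CMS condition on $\mathbf{X}'$; density of $(a'_j)$ in $M$ guarantees that some $j$ eventually succeeds, yielding a computable locator $(n,k) \mapsto j$. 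The main obstacle is precisely this point: exact equality $a'_j = a_n$ is not decidable in a CMS, but only approximation to arbitrary precision is needed, and that reduces to semi-decidable strict inequality. This is the additional ingredient that the paragraph preceding the theorem flags as missing from a naive combination of Proposition~\ref{prop:removeduplicates} and Lemma~\ref{lemma:idcomputable}.
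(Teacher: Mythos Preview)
Your argument is correct, and in fact simpler than the route the paper takes. One small slip: when you write ``the semi-decisions of $d(a'_j, a_n) < 2^{-k}$ provided by the CMS condition on $\mathbf{X}'$'', you are really using the CMS condition on the \emph{original} space $\mathbf{X}$ --- the condition on $\mathbf{X}'$ only speaks about pairs $a'_u, a'_v$, whereas $a_n$ is an arbitrary old basic point. Since $a'_j = a_{\lambda(j)}$ with $\lambda$ computable, the relation $d(a'_j, a_n) < 2^{-k}$ reduces to $d(a_{\lambda(j)}, a_n) < 2^{-k}$, which is indeed semi-decidable from the r.e.\ distance relation of $\mathbf{X}$. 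With that correction, your termination argument (the ranges of $(a_i)$ and $(a'_i)$ coincide, so some $j$ gives distance $0$) goes through cleanly and supplies exactly what the paragraph preceding the theorem says is missing from the black-box combination of Proposition~\ref{prop:removeduplicates} and Lemma~\ref{lemma:idcomputable}.

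The paper instead abandons the first-occurrence idea and builds $(a'_i)$ by a staged procedure: it maintains a finite prefix $A'$ and a buffer $B$, and at precision level $n$ only emits a candidate $b$ into $A'$ if its distance to all of $A'$ exceeds $2^{-n-1}$. This sequence need not have the same range as $(a_i)$ --- genuinely non-duplicate points may be permanently suppressed --- but the bookkeeping guarantees that whenever $a_i$ is skipped at stage $n$, some already-emitted $a'_l$ lies within $2^{-n}$ of it, which directly yields the $\mathbf{X}'$-name of $a_i$. Your approach is more elementary (no staged construction, just the standard enumeration of an infinite r.e.\ set) and exploits the extra fact that the ranges coincide; the paper's construction is more self-contained in that the computability of $\id : \mathbf{X} \to \mathbf{X}'$ is read off from the algorithm itself rather than from a separate search argument.
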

\begin{proof}
We will first describe an algorithm obtaining the sequence $(a'_i)_{i \in \mathbb{N}}$ from the original sequence $(a_i)_{i \in \mathbb{N}}$.
\begin{enumerate}
\item At any stage, let $ A' $ be the finite prefix sequence of the $(a'_i)_{i \in \mathbb{N}}$ emitted so far. We also keep track of a precision parameter $n$, starting with $n := 1$.
\item In the first stage, we emit $a_0$ into $ A' $ (i.e.~we set $a'_0 := a_0$)
\item Do the following iteration:
\begin{enumerate}
\item Take the next element from $(a_i)_{i \in \mathbb{N}}$ and place it in an auxiliary set $ B $, increment $ n $
\item For all elements  $b \in B $, we can compute the number $ \min_b = \min\{d(a,b) \; | \; a \in A'\} \in \mathbb{R}$ where $ A' $ is the finite sequence of $ a'_i $s emitted so far.
\item For each $ \min_b $, check (non-deterministically) in parallel: if $ \min_b < 2^{-n} $ skip $ b $, if $ \min_b > 2^{-n-1} $ emit $ b $, remove $ b $ from $ B $, emit $ b $ as an $ a'_i $  (thus also suffix it on $ A' $),  repeat.
\item If all elements in $ B $ were skipped, repeat.
\end{enumerate}
\end{enumerate}

The parallel test in $3 (c)$ is a common trick in computable analysis. The relations by themselves are not decidable, but as at least one of them has to be true, we can wait until we recognize a true proposition. If there are multiple $ b $ such that $ \min_b$ lies between $ 2^{-n-1} $ and $ 2^{-n} $, then the choice is non-deterministic in the high level view of real numbers as inputs. If all codings and implementations are fixed, then the choice here is determined, too, though.

First, we shall argue that $(a'_i)_{i \in \mathbb{N}}$ is dense and repetition free. If $(a'_i)_{i \in \mathbb{N}}$ were not dense, then there would be some $m, k \in \mathbb{N}$ such that $\forall i \in \mathbb{N} \ d(a_m, a'_i) > 2^{-k}$. However, then once $m$ has been placed into $B$ and $n$ incremented beyond $k + 1$, $m$ would have been chosen for $A'$ -- contradiction. The sequence $(a'_i)_{i \in \mathbb{N}}$ cannot have repetitions, because a duplicate element could never satisfy the test in $3 (c)$.

In remains to show that $(a'_i)_{i \in \mathbb{N}}$ is computable in $(a_i)_{i \in\mathbb{N}}$ and vice versa. From Corollary~\ref{TCompIdDenseSets} we would then know that $ (M, d, (a_i)_{i \in\mathbb{N}}) \cong (M,d,(a'_i)_{i \in \mathbb{N}}) $.

By construction $(a'_i)_{i \in \mathbb{N}}$ is computable in $(M, d, A)$: Given $i \in \mathbb{N}$, just follow the construction above in order to identify which $a_j$ is the $i$-th element to be put into $A'$, then we have $a'_i = a_j$.

Now to prove that $(a_i)_{i \in\mathbb{N}}$ is computable in $ (M, d, (a'_i)_{i \in \mathbb{N}}) $; i.e.~that given some $i \in \mathbb{N}$ we can compute a sequence $(n_j)_{j \in \mathbb{N}}$ such that $d(a'_{n_j}, a_i) < 2^{-j}$. For this, we inspect the algorithm above beginning from the point when $a_i$ is put into $B$. If $a_i$ is moved into $A'$ as the $k$-th element to enter $A'$, then $d(a'_k, a_i) = 0$, and we can continue the sequence $(n_j)_{j \in \mathbb{N}}$ as the constant sequence $k$. If $a_i$ is not moved into $A'$ in the $j$-th round, then this is due to $ \min_{a_i} < 2^{-j} $, and there must be some $l$ such that $a'_l$ witnesses this distance, i.e.~$d(a_i, a'_l) < 2^{-j}$. Thus, continuing the sequence with $n_j := l$ works.
\end{proof}

In \cite{gregoriades} it is proved that for every recursively presented metric space $(X,d)$ there exists a recursive real $0<\alpha < 1$ such that the metric $\alpha \cdot d$ takes values in $\mathbb{R} \setminus \mathbb{Q} \cup \{0\}$ on the dense sequence. This idea combined with Theorem \ref{TCompIdentCMS} gives the following result.

\begin{theorem}\label{theorem:from_cms_to_rps}
For every CMS $\mathbf{X} = (M, d, (a_i)_{i \in \mathbb{N}})$ there is a CMS $\mathbf{X}' = (M, \alpha d, (a'_i)_{i \in \mathbb{N}})$ with a computable real $\alpha \leq 1$ such that $\mathbf{X} \cong \mathbf{X}'$, and $\mathbf{X}'$ satisfies the criteria for a recursively presented metric space.
\begin{proof}
If $\mathbf{X}$ is finite, the result is straight-forward. If $\mathbf{X}$ is infinite, we may assume by Theorem \ref{TCompIdentCMS} that $(a_i)_{i \in \mathbb{N}}$ is repetition-free. Let the following be computable bijections:
 \begin{enumerate}
 \item $\langle, \rangle_{-\Delta} : \left (\mathbb{N} \times \mathbb{N} \setminus \{(n, n) \mid n \in \mathbb{N}\} \right ) \to \mathbb{N}$
  \item $\nu_\mathbb{Q}^+ : \mathbb{N} \to \{q \in \mathbb{Q} \mid q > 0\}$
 \item $\langle, \rangle : \mathbb{N} \times \mathbb{N} \to \mathbb{N}$
 \end{enumerate}
 Then consider the computable sequence defined via $D_{\langle k, \langle i, j\rangle_{-\Delta}\rangle} = \frac{\nu_\mathbb{Q}^+(k)}{d(a_i, a_j)}$. We can diagonalize to find a computable real number $ \alpha $ not in  $(D_n)_{n \in \mathbb{N}}$ with $0.5 \leq \alpha \leq 1$. By choice of $\alpha$, we find $\alpha d(a_i, a_j) \notin \mathbb{Q}$ for $i \neq j$, hence, the problematic case in the requirements for a recursively presented metric space becomes irrelevant. To compute the identity $\id : \mathbf{X} \to \mathbf{X}'$, one just needs to map a fast Cauchy sequence $(x_i)_{i \in \mathbb{N}}$ to $(x_{i+1})_{i \in \mathbb{N}}$ (as $\alpha \geq 0.5$), the identity in the other direction does not require any changes at all.
\end{proof}
\end{theorem}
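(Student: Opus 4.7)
The plan is to reduce the problem to a statement about eliminating the troublesome case in the recursive-presentation decidability conditions by forcing the pairwise distances to be irrational, while controlling the scaling factor so that the Cauchy representations are mutually computable.

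First, I would dispose of the case where $M$ is finite (the required conditions are trivially decidable once we pick the canonical rational distances, or after a trivial scaling). For the infinite case, I would invoke Theorem \ref{TCompIdentCMS} to pass, without loss of generality, to a CMS whose dense sequence $(a_i)_{i \in \mathbb{N}}$ is repetition-free and such that the identity to the original CMS is computable in both directions. Composing with the rescaling below, this pre-processing step is harmless.

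Next, I would construct a suitable $\alpha \in [1/2,1]$. The ``bad'' scaling factors are precisely those reals $\alpha$ for which $\alpha d(a_i,a_j) = q$ for some $i \neq j$ and positive rational $q$; equivalently, the countable set $D = \{ q/d(a_i,a_j) : i\neq j,\ q \in \mathbb{Q}_{>0}\}$. Using the CMS structure (which gives arbitrarily good rational upper and lower bounds on $d(a_i,a_j)$), I can computably enumerate $D$ as a sequence of reals whose approximations we can control, and then run a standard diagonal construction to produce a computable real $\alpha$ in $[1/2,1]$ that avoids $D$. With this $\alpha$, the scaled distances $\alpha d(a_i,a_j)$ are irrational for $i\neq j$ and zero for $i=j$.

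Finally, I would verify that $\mathbf{X}' = (M,\alpha d,(a_i)_{i\in \mathbb{N}})$ satisfies the RPMS decidability conditions and that $\id : \mathbf{X} \to \mathbf{X}'$ is computable in both directions. For the RPMS conditions: for $i\neq j$, the predicate $\alpha d(a_i,a_j) \leq \nu_\mathbb{Q}(k)$ is equivalent to the strict predicate because equality cannot occur, and both strict inequalities are semidecidable from the CMS data of $\mathbf{X}$ together with the computability of $\alpha$; combined they give decidability. For $i=j$ everything is trivial. For the identity in both directions, since $1/2 \leq \alpha \leq 1$, a $2^{-i}$-fast Cauchy sequence for $d$ becomes at worst a $2^{-i}$-fast Cauchy sequence for $\alpha d$ after dropping one initial term, and a $2^{-i}$-fast Cauchy sequence for $\alpha d$ is already $2^{-i}$-fast for $d$; so Lemma \ref{lemma:idcomputable} (or its corollary) yields the desired equivalence $\mathbf{X} \cong \mathbf{X}'$.

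The main obstacle is the diagonal construction of $\alpha$: one must enumerate the countable set $D$ using only the CMS data (hence only semidecidable rational bounds on $d(a_i,a_j)$, not exact values), and then arrange the diagonalization so that the resulting $\alpha$ is both computable and avoids every element of $D$ while remaining in $[1/2,1]$. Once that is done, the decidability upgrade from CMS-style semidecidability to RPMS-style decidability is automatic from the irrationality of the distances.
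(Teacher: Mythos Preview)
Your proposal is correct and follows essentially the same route as the paper: reduce to a repetition-free dense sequence via Theorem~\ref{TCompIdentCMS}, diagonalize against the countable set $\{q/d(a_i,a_j): i\neq j,\ q\in\mathbb{Q}_{>0}\}$ to find a computable $\alpha\in[1/2,1]$ making all scaled distances irrational, and then observe that the Cauchy representations translate by at most a one-step shift. One small quibble: the final appeal to Lemma~\ref{lemma:idcomputable} is not quite apt (that lemma fixes the metric and varies the dense sequence, whereas here it is the reverse), but your direct argument with shifted Cauchy sequences already suffices, exactly as in the paper.
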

\section{The induced computability structures}
\label{sec:functions}
Having proved that computable- and recursively presented metric spaces are inter-connected through a computable rescaling of the metric, it is natural to compare some of the basic objects derived from them. There are three fundamental types of objects in recursively presented metric spaces: recursive- sets, functions and points, all of which have the corresponding analogue in computable metric spaces. We will see that these concepts do coincide.

A comparison of the computability structure induced by recursive presentations and computable metric spaces respectively is more illuminating in the framework of represented spaces. We recall some notions from \cite{pauly-synthetic-arxiv}, and then prove some basic facts about them -- these results are known and included here for completeness. A \emph{represented space} is a pair $\mathbf{X} = (X, \delta_X)$ of a set $X$ and a partial surjection $\delta_X : \subseteq \Baire \to X$. A multi-valued function between represented spaces is a multi-valued function between the underlying sets. For $f : \subseteq \mathbf{X} \mto \mathbf{Y}$ and $F : \subseteq \Baire \to \Baire$, we call $F$ a realizer of $f$ (notation $F \vdash f$), iff $\delta_Y(F(p)) \in f(\delta_X(p))$ for all $p \in \dom(f\delta_X)$. A map between represented spaces is called computable (continuous), iff it has a computable (continuous) realizer. Similarly, we call a point $x \in \mathbf{X}$ computable, iff there is some computable $p \in \Baire$ with $\delta_\mathbf{X}(p) = x$. Any computable metric space induces a represented space via its Cauchy representation, and a function between computable metric spaces is called computable, iff it is computable between the induced representations. Note that the realizer-induced notion of continuity coincides with ordinary metric continuity is a basic fact about admissible representation \cite{weihrauchd}.

The category of represented spaces is cartesian closed, meaning we have access to a general function space construction as follows: Given two represented spaces $\mathbf{X}$, $\mathbf{Y}$ we obtain a third represented space $\mathcal{C}(\mathbf{X}, \mathbf{Y})$ of functions from $X$ to $Y$ by letting $0^n1p$ be a $[\delta_X \to \delta_Y]$-name for $f$, if the $n$-th Turing machine equipped with the oracle $p$ computes a realizer for $f$. As a consequence of the UTM theorem, $\mathcal{C}(-, -)$ is the exponential in the category of continuous maps between represented spaces, and the evaluation map is even computable.

Still drawing from \cite{pauly-synthetic-arxiv}, we consider the \Sierp space $\mathbb{S}$, which allows us to formalize semi-decidability. An explicit representation for this space is $\delta_\mathbb{S} : \Baire \to \mathbb{S}$ where $\delta_\mathbb{S}(0^\mathbb{N}) = \bot$ and $\delta_\mathbb{S}(p) = \top$ for $p \neq 0^\mathbb{N}$. The computable functions $f : \mathbb{N} \to \mathbb{S}$ are exactly those where $f^{-1}(\{\top\})$ is recursively enumerable (and thus $f^{-1}(\{\bot\})$ co-recursively enumerable). In general, for any represented space $\mathbf{X}$ we obtain two spaces of subsets of $\mathbf{X}$; the space of open sets $\mathcal{O}(\mathbf{X})$ by identifying $f \in \mathcal{C}(\mathbf{X}, \mathbb{S})$ with $f^{-1}(\{\top\})$, and the space of closed sets $\mathcal{A}(\mathbf{X})$ by identifying $f \in \mathcal{C}(\mathbf{X}, \mathbb{S})$ with $f^{-1}(\{\bot\})$. In particular, the computable elements of $\mathcal{O}(\mathbb{N})$ are precisely the recursively enumerable sets. An explicit representation for $\mathcal{O}(\mathbb{N})$ is found in $\delta_{rng} : \Baire \to \mathcal{O}(\mathbb{N})$ defined via $n \in \delta_{rng}(p)$ iff $\exists i \in \mathbb{N} \ p(i) = n + 1$.

The focus of computable analysis has traditionally been on the computable admissible spaces. Following \name{Schr\"oder} \cite{schroder5} we call a space $\mathbf{X}$ computably admissible, iff the canonic map $\kappa : \mathbf{X} \to \mathcal{O}(\mathcal{O}(\mathbf{X}))$ has a computable inverse. This essentially means that a point can be effectively recovered from its neighborhood filter. The computably admissible spaces are those represented spaces that correspond to topological spaces.

We proceed to introduce the notion of an effective countable base. The effectively countably based computable admissible spaces are exactly the computable topological spaces studied by \name{Weihrauch} (e.g.~\cite[Definition 3.1]{rettinger2}). The countably based admissible spaces that admit a total Baire space representation are the Quasi-Polish spaces introduced by \name{de Brecht} \cite{debrecht6}.

\begin{definition}
An effective countable base for $\mathbf{X}$ is a computable sequence $(U_i)_{i \in \mathbb{N}} \in \mathcal{C}(\mathbb{N}, \mathcal{O}(\mathbf{X}))$ such that the multivalued partial map $\textrm{Base} :\subseteq \mathbf{X} \times \mathcal{O}(\mathbf{X}) \mto \mathbb{N}$ is computable. Here $\dom(\textrm{Base}) = \{(x, U) \mid x \in U\}$ and $n \in \textrm{Base}(x, U)$ iff $x \in U_n \subseteq U$. Note that the requirement on $ \textrm{Base} $ also gives that $ (U_i)_{i \in \mathbb{N}} $ forms a basis of $ \mathbf{X} $.
\end{definition}

\begin{proposition}[(\footnote{As mentioned in the introduction to this section, this result is folklore. It has appeared e.g.~as \cite[Theorem 2.3]{korovina}.})]
\label{prop:cmshasbase}
Let $\mathbf{X} = (M, d, (a_i)_{i \in \mathbb{N}})$ be a CMS. Then $B_{\langle i, j\rangle} = \{x \in M \mid d(x, a_i) < 2^{-j}\}$ provides an effective countable base for $\mathbf{X}$.
\begin{proof}
We start to prove that this is a computable sequence. By the definition of $\mathcal{O}$, it suffices to show that given $x$, $i$, $j$ we can recognize $d(x, a_i) < 2^{-j}$. Let $\delta_{M}(p) = x$, i.e.~$\forall k \ d(x, a_{p(k)}) < 2^{-k}$. Now $d(x, a_i) < 2^{-j}$ iff $\exists k \in \mathbb{N} \ d(a_{p(k)}, a_i) < 2^{-j} - 2^{-k}$. By the conditions on a CMS, the property is r.e., and existential quantification over an r.e.~property still produces an r.e.~property.

Next, we need to argue that $\textrm{Base}$ is computable. Given some $x \in M$ and some open set $U \in \mathcal{O}(\mathbf{X})$ with $x \in U$, we do know by definition of $\mathcal{O}(\mathbf{X})$ that $x \in U$ will be recognized at some finite stage. Moreover, we can simulate the computation until this happens. At this point, only some finite prefix of the $\delta_M$-name $p$ of $x$ has been read, say of length $N$. But then we must have $x \in \bigcap_{k \leq N} B_{\langle p(k), k\rangle} \subseteq U$. It is easy to verify that we can identify a particular ball inside the intersection still containing $x$.
\end{proof}
\end{proposition}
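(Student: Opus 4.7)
The plan is to verify the two conditions defining an effective countable base: computability of the sequence $(B_{\langle i, j\rangle})_{i, j \in \mathbb{N}}$ as an element of $\mathcal{C}(\mathbb{N}, \mathcal{O}(\mathbf{X}))$, and computability of the multivalued map $\textrm{Base}$.

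For the first condition, since $\mathcal{O}(\mathbf{X}) = \mathcal{C}(\mathbf{X}, \mathbb{S})$, it suffices to show that the ternary map $(x, i, j) \mapsto (d(x, a_i) < 2^{-j})$ is computable into $\mathbb{S}$. Given $x$ via a Cauchy name $p$ with $d(a_{p(k)}, x) \leq 2^{-k}$, the triangle inequality gives that $d(x, a_i) < 2^{-j}$ holds if and only if there exists $k$ with $d(a_{p(k)}, a_i) < 2^{-j} - 2^{-k}$: one direction is immediate from $d(x, a_i) \leq d(x, a_{p(k)}) + d(a_{p(k)}, a_i)$, and for the other, take $k$ large enough that $2^{-k+1} < 2^{-j} - d(x, a_i)$. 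Existential quantification over $k$ of the r.e.~relation supplied by the CMS definition then yields the required $\mathbb{S}$-valued computation.

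For the $\textrm{Base}$ map, I would start as follows. Given $x$ with Cauchy name $p$ and $U \in \mathcal{O}(\mathbf{X})$ with $x \in U$, simulate the realizer of $U$ on input $p$ until it commits to the output $\top$; let $N$ be the length of the prefix of $p$ consumed at that moment. Since the output depends only on this prefix, every valid Cauchy name that agrees with $p$ on its first $N$ entries witnesses membership in $U$. A short density argument (choose $a_{q(N)}$ close enough to $y$ that the slack $2^{-k} - d(y, a_{p(k)}) > 0$ absorbs the Cauchy-compatibility constraint for $k < N$, then extend arbitrarily) shows that any $y \in \bigcap_{k < N} B_{\langle p(k), k\rangle}$ admits such an extension; hence this intersection lies inside $U$ and contains $x$.

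It remains to extract a single basic ball $B_{\langle m, j\rangle}$ with $x \in B_{\langle m, j\rangle} \subseteq \bigcap_{k < N} B_{\langle p(k), k\rangle}$. The strategy is a parallel search over pairs $(m, j)$ that simultaneously semi-decides $d(x, a_m) < 2^{-j}$ (using the map built in the first step) and the finite conjunction $d(a_m, a_{p(k)}) < 2^{-k} - 2^{-j}$ for all $k < N$ (directly from the CMS condition); a triangle-inequality computation converts the latter into $B_{\langle m, j\rangle} \subseteq B_{\langle p(k), k\rangle}$. Existence of a witnessing pair is secured by choosing $j$ with $2^{-j}$ smaller than, say, one-third of $\min_{k < N}(2^{-k} - d(x, a_{p(k)})) > 0$, and then $m$ with $a_m$ still closer to $x$. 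I expect the main obstacle to be the quantitative book-keeping in this last step; conceptually the step is standard, and the parallel search is exactly the idiom that sidesteps the impossibility of deciding strict real inequalities exactly.
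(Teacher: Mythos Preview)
Your proof is correct and follows essentially the same approach as the paper: both arguments recognize membership in basic balls via the r.e.\ relation on distances and the existential over $k$, and both compute $\textrm{Base}$ by simulating the realizer of $U$ on the Cauchy name until acceptance, then extracting a single basic ball from the finite intersection $\bigcap_{k} B_{\langle p(k), k\rangle}$. The paper simply leaves the extension-of-names argument and the final ball-extraction as ``easy to verify'', whereas you spell out the density/slack and parallel-search details explicitly.
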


We now have the ingredients to give a more specific characterization of both $\mathcal{C}(\mathbf{X}, \mathbf{Y})$ and $\mathcal{O}(\mathbf{X})$ for countably based spaces $\mathbf{X}$, $\mathbf{Y}$ and computably admissible $\mathbf{Y}$.

\begin{proposition}
\label{prop:countablybasedopens}
Let $\mathbf{X} = (X, \delta_\mathbf{X})$ have an effective countable base $(U_i)_{i \in \mathbb{N}}$ and let $(p_n)_{n \in \mathbb{N}}$ be a computable sequence that is dense in $\dom(\delta_\mathbf{X})$. Then the map $\bigcup : \mathcal{O}(\mathbb{N}) \to \mathcal{O}(\mathbf{X})$ defined via $\bigcup (S) = \bigcup_{i \in S} U_i$ is computable and has a computable multivalued inverse.
\begin{proof}
That the map is computable follows from \cite[Proposition 4.2(4), Proposition 3.3(4)]{pauly-synthetic-arxiv}. For the inverse, fix some computable realizer of $\textrm{Base}$. Given some $U \in \mathcal{O}(\mathbf{X})$, test for any $n \in\mathbb{N}$ if $\delta_\mathbf{X}(p_n) \in U$. If this is confirmed, compute $m_n := \textrm{Base}(\delta_\mathbf{X}(p_n),U)$ and list it in $\bigcup^{-1}(U)$.

We will now argue that $\bigcup \bigcup^{-1}(U) = U$ with the algorithm described above. If $m \in \bigcup^{-1}(U)$, then by construction $U_m \subseteq U$, hence $\bigcup \bigcup^{-1}(U) \subseteq U$. On the other hand, let $x = \delta_\mathbf{X}(q) \in U$. The realizer for $\textrm{Base}$ will choose some $m_q$ on input $q$, $U$. As this happens after some finite time, there is some $p_{i_q}$ so close to $x$ that the realizer works in exactly the same way\footnote{For this, it is important to fix one realizer of $\textrm{Base}$ and to use the same name of $U$ for all calls.}. This ensures that $m_q$ is listed in $\bigcup^{-1}(U)$, thus $x \in \bigcup \bigcup^{-1}(U)$.
\end{proof}
\end{proposition}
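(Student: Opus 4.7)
The plan is in two halves, one for each direction of the proposition.

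For computability of $\bigcup$, I would argue that the category of represented spaces is closed under computable countable unions of opens: if $(V_i)_{i \in \mathbb{N}}$ is a computable sequence in $\mathcal{O}(\mathbf{X})$ and $S \in \mathcal{O}(\mathbb{N})$, then $\bigcup_{i \in S} V_i \in \mathcal{O}(\mathbf{X})$ computably in both inputs. Concretely, to semi-decide $x \in \bigcup_{i \in S}U_i$, one enumerates $i \in S$ in parallel with semi-deciding $x \in U_i$. Instantiating with the given base sequence $(U_i)_{i \in \mathbb{N}}$ yields that $\bigcup$ is computable. This is a standard fact for which I would cite the relevant statement from \cite{pauly-synthetic-arxiv}.

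For the computable multivalued inverse, I would fix once and for all a computable realizer $F$ of $\textrm{Base}$. Given $U \in \mathcal{O}(\mathbf{X})$ (as a name of a map $\mathbf{X} \to \mathbb{S}$), I enumerate $n \in \mathbb{N}$ and in parallel semi-decide $\delta_\mathbf{X}(p_n) \in U$ using the fact that the $p_n$ are computable. Whenever this is confirmed, I run $F$ on the pair consisting of the name $p_n$ and the fixed name of $U$ to obtain some index $m_n$ with $\delta_\mathbf{X}(p_n) \in U_{m_n} \subseteq U$, and I list $m_n$ in the output $S$.

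The inclusion $\bigcup_{m \in S} U_m \subseteq U$ is immediate from the defining property of $\textrm{Base}$. The reverse inclusion is the only nontrivial point. Let $x = \delta_\mathbf{X}(q) \in U$. Running $F$ on $(q, U)$ outputs some $m$ with $x \in U_m \subseteq U$ after reading only a finite prefix $\sigma$ of $q$. By density of $(p_n)_{n \in \mathbb{N}}$ in $\dom(\delta_\mathbf{X})$ with respect to the Baire topology, some $p_n$ extends $\sigma$, so $F$ produces the same output $m$ on $(p_n, U)$. Correctness of $F$ then yields $\delta_\mathbf{X}(p_n) \in U_m \subseteq U$, so the test $\delta_\mathbf{X}(p_n) \in U$ succeeds in the enumeration and $m$ is listed in $S$. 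Since $x \in U_m$ from the original call, $x \in \bigcup_{m' \in S} U_{m'}$.

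The main subtlety, and the step I expect to have to state carefully, is the use of a \emph{fixed} realizer of $\textrm{Base}$ together with a \emph{fixed} name of $U$ throughout the enumeration. Without this, the finite-prefix-continuity argument transferring behaviour from $q$ to a nearby $p_n$ breaks down, since different realizers could output different indices on the same prefix. Aside from this bookkeeping, the argument is routine.
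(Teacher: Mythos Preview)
Your argument is essentially the paper's own: cite closure of open sets under computable indexed unions for the forward direction, and for the inverse enumerate the dense $p_n$, test membership in $U$, and feed successes into a fixed realizer of $\textrm{Base}$, with correctness coming from a finite-prefix/density argument. You even flag exactly the subtlety the paper isolates in its footnote (fixed realizer, fixed name of $U$).

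There is one small circularity in your write-up, however. In the reverse inclusion you say: ``$F$ produces the same output $m$ on $(p_n,U)$. Correctness of $F$ then yields $\delta_\mathbf{X}(p_n)\in U_m\subseteq U$, so the test $\delta_\mathbf{X}(p_n)\in U$ succeeds.'' But correctness of $F$ at $(p_n,U)$ is only guaranteed when $(\delta_\mathbf{X}(p_n),U)\in\dom(\textrm{Base})$, i.e.\ when $\delta_\mathbf{X}(p_n)\in U$ --- precisely what you are trying to conclude. The fix is immediate and in the same spirit: also take a finite prefix $\sigma'$ of $q$ after which the membership test (a fixed realizer of $\mathalpha{\in}:\mathbf{X}\times\mathcal{O}(\mathbf{X})\to\mathbb{S}$) has already written a non-zero output, and choose $p_n$ extending both $\sigma$ and $\sigma'$. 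Then the test genuinely succeeds on $p_n$ (hence $\delta_\mathbf{X}(p_n)\in U$ by correctness of the membership realizer), $F(p_n,U)=m$ is listed, and $x\in U_m$ from the original call on $q$ finishes the argument. The paper's phrasing ``so close to $x$ that the realizer works in exactly the same way'' is vague enough to be read as covering both realizers at once; your version just needs to make that explicit.
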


\begin{proposition}
\label{prop:countablybasesubsetscott}
Let $\mathbf{X} = (X, \delta_\mathbf{X})$ be computably admissible and have an effective countable base $(U_i)_{i \in \mathbb{N}}$, and let $(p_n)_{n \in \mathbb{N}}$ be a computable sequence that is dense in $\dom(\delta_\mathbf{X})$. Then $x \mapsto \{i \mid x \in U_i\} : \mathbf{X} \to \mathcal{O}(\mathbb{N})$ is a computable embedding.
\begin{proof}
That the map is computable is straight-forward. For the inverse, we shall first argue that $\{i \in \mathbb{N} \mid x \in U_i\} \mapsto \{U \in \mathcal{O}(\mathbf{X}) \mid x \in U\} : \mathcal{O}(\mathbb{N}) \to \mathcal{O}(\mathcal{O}(\mathbf{X}))$ is computable. By type-conversion, this is equivalent to $(\{i \in \mathbb{N} \mid x \in U_i\}, U) \mapsto (x \in U?) : \mathcal{O}(\mathbb{N}) \times \mathcal{O}(\mathbf{X}) \to \mathbb{S}$. Here we understand $(x \in U?) = \top$ if $x \in U$ and $(x \in U?) = \bot$ if $x \notin U$. By employing Proposition \ref{prop:countablybasedopens}, this follows from $\mathalpha{\in} : \mathcal{O}(\mathbb{N}) \times \mathcal{O}(\mathcal{O}(\mathbb{N})) \to \mathbb{S}$ being computable.

Finally, we can compute $x$ from $\{U \in \mathcal{O}(\mathbf{X}) \mid x \in U\}$ as $\mathbf{X}$ is computably admissible.
\end{proof}
\end{proposition}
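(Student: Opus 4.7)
The plan is to verify two things: that the map $e : x \mapsto \{i \mid x \in U_i\}$ is computable, and that it admits a computable multivalued inverse. For the forward direction, I would use that $(U_i)_{i \in \mathbb{N}} \in \mathcal{C}(\mathbb{N}, \mathcal{O}(\mathbf{X}))$ is computable by hypothesis, so by evaluation the predicate $(x, i) \mapsto (x \in U_i?) : \mathbf{X} \times \mathbb{N} \to \mathbb{S}$ is computable. Cartesian closedness then gives computability of its curried form $\mathbf{X} \to \mathcal{C}(\mathbb{N}, \mathbb{S}) = \mathcal{O}(\mathbb{N})$, which is exactly $e$.

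For the inverse, I would exploit computable admissibility of $\mathbf{X}$: by definition, $\kappa^{-1} : \mathcal{O}(\mathcal{O}(\mathbf{X})) \to \mathbf{X}$ is computable, so it suffices to compute the neighborhood filter $\mathcal{N}(x) := \{U \in \mathcal{O}(\mathbf{X}) \mid x \in U\}$ as an element of $\mathcal{O}(\mathcal{O}(\mathbf{X}))$ from $e(x) \in \mathcal{O}(\mathbb{N})$. By type conversion, this reduces to computing the evaluation predicate
\[
\textrm{ev} : \mathcal{O}(\mathbb{N}) \times \mathcal{O}(\mathbf{X}) \to \mathbb{S}, \qquad (A, U) \mapsto (x \in U?),
\]
where we must show that $x \in U$ is recognizable uniformly from $A := e(x)$ and $U$.

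For this recognizability step, the key tool is Proposition \ref{prop:countablybasedopens}, which supplies a computable multivalued inverse $\bigcup^{-1} : \mathcal{O}(\mathbf{X}) \mto \mathcal{O}(\mathbb{N})$ of the union map. So from a name of $U$ we can compute some $S \in \mathcal{O}(\mathbb{N})$ with $U = \bigcup_{i \in S} U_i$; then $x \in U$ iff $A \cap S \neq \emptyset$, and non-emptiness of the intersection of two recursively enumerable sets is semi-decidable. Composing these pieces — $e(x) \mapsto \mathcal{N}(x) \mapsto \kappa^{-1}(\mathcal{N}(x)) = x$ — yields the desired computable inverse.

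The main obstacle is the middle step: seeing that computable admissibility by itself does not immediately reconstruct $x$ from $e(x)$, and that one must route through Proposition \ref{prop:countablybasedopens} to promote arbitrary open names to names given as unions of basic opens. Once that observation is in place, the rest is a routine type-conversion argument using the cartesian closed structure of represented spaces.
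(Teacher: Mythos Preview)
Your proposal is correct and follows essentially the same route as the paper: compute the neighborhood filter from $e(x)$ via type-conversion and Proposition~\ref{prop:countablybasedopens}, then apply computable admissibility. The only difference is cosmetic---where the paper abstracts the recognizability step as an instance of $\mathalpha{\in} : \mathcal{O}(\mathbb{N}) \times \mathcal{O}(\mathcal{O}(\mathbb{N})) \to \mathbb{S}$, you spell it out concretely as checking $A \cap S \neq \emptyset$ for $S \in \bigcup^{-1}(U)$, which is the same computation unpacked.
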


As a consequence of Proposition \ref{prop:countablybasedopens}, we see that for countably based spaces $\mathbf{X}$, we may conceive of open sets being given by enumeration of basic open sets exhausting them. For computable metric spaces in particular, an open set is given by an enumeration of open balls with basic points as centers and radii of the form $2^{-i}$ (or equivalently, rational radii):

\begin{Wdefn}{cf~4.1.2}
Given a computable metric space $\mathbf{X}$, we define a numbering $\text{I}$ for the open balls with basic centers and radii of the form $2^{-i}$ via $\text{I}(\langle n, k\rangle) = B(a_n, 2^{-k})$. For convenience, we shall assume that $\text{I}(0) = \emptyset$.
\end{Wdefn}

\begin{Wdefn}{5.1.15.4}
Given a computable metric space $\mathbf{X}$, we define the representation $ \theta_<^{en} : \Baire \to \mathcal{O}(\mathbf{X})$ by
\[ \theta_<^{en}(p) \, :=\,  \bigcup_{n \in \mathbb{N}} \text{I}(p(n))\]
which is intuitively a name consisting of the descriptions of open balls that exhaust the particular set (but not necessarily all of them).\smallskip

An open $V \subseteq \mathbf{X}$ is \emph{computably open} if $V = \theta_<^{en}(p)$ for some recursive $p \in \Baire$.
\end{Wdefn}

The analogous notion in effective descriptive set theory is the following.

\begin{Mdefn}{1B.1}
Given a recursively presented metric space $\mathcal{X}$, a pointset $V \subseteq \mathcal{X} $ is semirecursive (or else $ \Sigma_1^0 $) if
\[ V = \bigcup_n N(\mathcal{X}, \varepsilon(n)) \]
with some recursive $\varepsilon \: : \: \omega \rightarrow \omega.$
\end{Mdefn}

The following lemma is simple but useful tool.

\begin{lemma}
\label{lemma:auxiliary_to_comparison}
Suppose that $(\mathcal{X},d,\mathbf{r})$ is a recursively presented metric space, which by Obesrvation \ref{obsmoschimpweih} is a computable metric space. Then there exist computable functions
\[
\tau: \mathbb{N} \to \mathbb{N}, \quad \sigma: \mathbb{N} \to \mathbb{N}^2
\]
such that
\[
\text{I}(w) = N(\mathcal{X}(\tau(w))), \quad \text{and} \quad N(\mathcal{X},s) = \bigcup_{n}\text{I}(\sigma(s,n))
\]
for all $w,n$.
\end{lemma}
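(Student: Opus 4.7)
The plan is to unfold both indexings explicitly and transport between them using the arithmetic of the RPMS encoding together with the fact that in an RPMS the distances $d(r_i, r_j)$ admit decidable comparisons with rationals. For $\tau$ the construction is essentially a change of coordinates; for $\sigma$ a covering argument is needed in order to express a basic $\nu_\mathbb{Q}$-radius ball as a countable union of dyadic-radius balls.

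For $\tau$: given $w \in \mathbb{N}$, decompose $w = \langle n, k\rangle$, so (identifying $a_n$ with $r_n$) we have $\text{I}(w) = \{x \mid d(x, r_n) < 2^{-k}\}$. Since $\nu_\mathbb{Q}$ is a standard numbering, the index $m := \nu_\mathbb{Q}^{-1}(2^{-k})$ is computable in $k$, and we define $\tau(w) := 2^{n+1} \cdot 3^{m+1}$. Then directly from the definition of the nbhd system, $N(\mathcal{X}, \tau(w)) = \{x \mid d(x, r_n) < \nu_\mathbb{Q}(m)\} = \text{I}(w)$.

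For $\sigma$: checking whether $s$ has the form $2^{i+1} \cdot 3^{k+1}$ is decidable, and in that case $i, k$ can be computably extracted; otherwise $N(\mathcal{X}, s) = \emptyset = \text{I}(0)$ and we set $\sigma(s, n) := 0$ for all $n$. The same choice works when $q := \nu_\mathbb{Q}(k) \leq 0$. Otherwise, consider the set $E_s$ of pairs $(j,m) \in \mathbb{N}^2$ satisfying $d(r_j, r_i) \leq q - 2^{-m}$ (where $q - 2^{-m}$ is a rational computable from $k, m$, so membership in $E_s$ is decidable by the RPMS property of $d$). The claim is that
\[
N(\mathcal{X}, s) \;=\; \bigcup_{(j,m) \in E_s} B(r_j, 2^{-m}).
\]
The inclusion $\supseteq$ is the triangle inequality. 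For $\subseteq$, given $x$ with $d(x,r_i) < q$, pick $m$ so that $2 \cdot 2^{-m} < q - d(x, r_i)$; density of $(r_j)$ yields $j$ with $d(r_j, x) < 2^{-m}$, and then $d(r_j, r_i) < d(x, r_i) + 2^{-m} \leq q - 2^{-m}$, so $(j, m) \in E_s$ and $x \in B(r_j, 2^{-m})$. Effectively enumerate $E_s$ by dove-tailing the decidable test over $(j, m)$, and let $\sigma(s, n)$ return $\langle j, m\rangle$ for the $n$-th pair found (if $E_s$ is finite, pad with $0$'s, which contribute $\emptyset$).

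The only subtlety is the strict-vs.-non-strict inequality in the covering condition: $(j,m) \in E_s$ should guarantee $B(r_j, 2^{-m}) \subseteq N(\mathcal{X}, s)$, which requires $d(r_j, r_i) + 2^{-m} \leq q$ (a closed, and thus RPMS-decidable, relation), while in the density argument $x$ is strictly inside the ball and we need strict slack to find a covering $B(r_j, 2^{-m})$. The factor-of-two slack in the choice of $m$ reconciles both sides, and no further care is needed.
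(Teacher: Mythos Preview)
Your argument is correct and follows essentially the same approach as the paper: $\tau$ is a direct re-encoding, and $\sigma$ is obtained by covering $N(\mathcal{X},s)$ with dyadic balls $B(r_j,2^{-m})$ whose centers satisfy a decidable distance condition relative to $r_i$. The paper uses the strict inequality $d(r_i,r_j) < \nu_\mathbb{Q}(k) - 2^{-m}$ and defines $\sigma$ directly by a case split on the pair-code $n = 2^{j+1}\cdot 3^{m+1}$ (outputting $0$ when the test fails), which sidesteps your enumeration-then-padding description; since your membership test is decidable, your $\sigma$ can be defined the same way, so this is only a cosmetic difference.
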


\begin{proof}
The existence of such a function $\tau$ follows easily since in the definitions we use computable encoding. Regarding $\sigma$, we first claim that
\[
N(\mathcal{X},s) = \cup_{j: \ r_j \in N(\mathcal{X},s)} \cup_{m: \ 2^{-m} < \nu_{\mathbb{Q}(k)} - d(r_i,r_j)} B(r_j,2^{-m}),
\]
where $s = 2^{i+1} \cdot 3^{k+1}$.

To see the latter, assume first that $x \in N(\mathcal{X},s)$, where $s = 2^{i+1} \cdot 3^{k+1}$. Then $d(x,r_i) < \nu_{\mathbb{Q}}(k)$. We choose $m$ large enough such that $2^{-(m+1)} < \nu_{\mathbb{Q}}(k) - d(x,r_i)$, from which it follows that $2^{-m} <  \nu_{\mathbb{Q}}(k) - d(x,r_i) - 2^{-m}$. Now we consider some $r_j \in B(x, 2^{-m})$. Clearly $x$ belongs to $B(r_j,2^{-m})$, and
\[
d(r_i,r_j) \leq d(r_i,x) + d(x,r_j) < d(r_i,x) + 2^{-m} < d(r_i,x) + \nu_{\mathbb{Q}}(k) - d(x,r_i) - 2^{-m} = \nu_{\mathbb{Q}}(k)  - 2^{-m},
\]
hence $2^{-m} < \nu_{\mathbb{Q}(k)} - d(r_i,r_j)$. This also implies that $d(r_j,r_i) < \nu_{\mathbb{Q}}(k)$ and so $r_j \in N(\mathcal{X},s)$. Hence $(j,m)$ is a suitable pair of naturals such that $x \in B(r_j,2^{-m})$. Conversely if $j,m$ are such that $2^{-m} < \nu_{\mathbb{Q}(k)} - d(r_i,r_j)$, and $x$ is a member of $B(r_j,2^{-m})$, then we have that
\[
d(x,r_i) \leq d(x,r_j) + d(r_j, r_i) < 2^{-m} +d(r_j, r_i) < \nu_{\mathbb{Q}}(k),
\]
and we have proved the preceding equality. Now we consider some recursive function $w^\ast: \mathbb{N}^2 \to \mathbb{N}$ such that $B(r_j,2^{-m}) = \text{I}(w^\ast(j,m))$ and we define
\[
\sigma(2^{i+1} \cdot 3^{k+1}, 2^{j+1} \cdot 3^{m+1}) =
\begin{cases}
w^\ast(j,m),& \ \text{if} \ 2^{-m} < \nu_{\mathbb{Q}(k)} - d(r_i,r_j)\\
0, & \ \text{else}.
\end{cases}
\]
We also let $\sigma(s,n)$ be $0$ if the naturals $s,n$ do not have the form above.
\end{proof}

We are now ready to compare the notions of computably-open and semirecursive set.

\begin{theorem}
\label{theorem:computably_open_sets_the_same}
Suppose that $\mathcal{X}$ is a recursively presented metric space, $\mathbf{X} = (M,d,(a_i)_{i \in \mathbb{N}})$ is a computable metric space, and $\mathbf{X}'\cong \mathbf{X}$ is as in Theorem \ref{theorem:from_cms_to_rps}. Then:
\begin{enumerate}
\item For every $V \subseteq \mathcal{X}$,\vspace*{-4mm}
\[
V \ \text{is semirecursive} \iff V \ \text{is computably open},\vspace*{-2mm}
\]
{(}recall from Observation \ref{obsmoschimpweih} that $\mathcal{X}$ is also a computable metric space\textup{)}.
\item  For every $U \subseteq M$,\vspace*{-2mm}
\[
U \ \text{is computably open in $\mathbf{X}$ (equivalently in $\mathbf{X}'$)}  \iff U \ \text{is semirecursive in} \ \mathbf{X}',\vspace*{-2mm}
\]
(recall that $\mathbf{X}'$ is recursively presented).\smallskip
\end{enumerate}
In particular, the family of all semirecursive subsets of a recursively presented metric space is also the family of all computably open subsets of the latter space; and the family of all computably open subsets of a computable metric space is the family of all semirecursive subsets of a recursive presented metric space, which is $\cong$-equivalent to the original one.
\end{theorem}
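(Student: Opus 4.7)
The plan is to reduce Part 2 to Part 1 via the isomorphism supplied by Theorem \ref{theorem:from_cms_to_rps}, and to settle Part 1 by direct index-chasing using the translation maps $\tau$ and $\sigma$ from Lemma \ref{lemma:auxiliary_to_comparison}. Since the lemma already does the substantive work of relating the basic balls $\text{I}(w)$ of the Weihrauch framework to the neighborhoods $N(\mathcal{X},s)$ of the Moschovakis framework, the remaining content is largely bookkeeping.

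For Part 1, I would argue both directions of the equivalence directly. For $(\Leftarrow)$, suppose $V = \bigcup_{n} \text{I}(p(n))$ for some recursive $p$. By Lemma \ref{lemma:auxiliary_to_comparison}, $\text{I}(w) = N(\mathcal{X},\tau(w))$ with $\tau$ recursive, so setting $\varepsilon(n) := \tau(p(n))$ yields $V = \bigcup_{n} N(\mathcal{X},\varepsilon(n))$ with $\varepsilon$ recursive, witnessing that $V$ is semirecursive. For $(\Rightarrow)$, suppose $V = \bigcup_{n} N(\mathcal{X},\varepsilon(n))$ with $\varepsilon$ recursive. Using the second identity from the lemma, $N(\mathcal{X},s) = \bigcup_{m} \text{I}(\sigma(s,m))$, so $V = \bigcup_{n,m} \text{I}(\sigma(\varepsilon(n),m))$. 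Fixing a computable bijection $\mathbb{N} \to \mathbb{N}^2$ and composing, we obtain a recursive enumeration $p : \mathbb{N} \to \mathbb{N}$ with $V = \bigcup_{n} \text{I}(p(n)) = \theta_<^{en}(p)$, which is the definition of computably open.

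For Part 2, the key observation is that the relation $\mathbf{X} \cong \mathbf{X}'$ means that the identity on $M$ is a computable isomorphism between the two represented spaces induced by $\mathbf{X}$ and $\mathbf{X}'$. Since $\mathcal{O}(-)$ is functorial, computable openness of a subset $U \subseteq M$ is preserved in both directions under this isomorphism; explicitly, if $f : U \to \mathbb{S}$ is a computable realizer of $U$ in $\mathbf{X}$, then $f \circ \id$ is a computable realizer in $\mathbf{X}'$, and vice versa. Thus $U$ is computably open in $\mathbf{X}$ iff it is computably open in $\mathbf{X}'$. Since $\mathbf{X}'$ is recursively presented, Part 1 applied to $\mathbf{X}'$ then gives the equivalence with semirecursiveness in $\mathbf{X}'$.

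I expect no real obstacle here: the main conceptual step has already been isolated in Lemma \ref{lemma:auxiliary_to_comparison}, and the only point that requires a moment's thought is the invariance of the class of computably open sets under the $\cong$-isomorphism of Theorem \ref{theorem:from_cms_to_rps}, which is a routine consequence of the functorial nature of $\mathcal{O}(-)$ on represented spaces. The final concluding sentence of the theorem then follows by assembling the two parts: every RPMS comes equipped with its family of semirecursive sets, which by Part 1 coincides with its family of computably open sets as a CMS; and by Part 2, the computably open subsets of any CMS are obtained as the semirecursive subsets of a $\cong$-equivalent RPMS.
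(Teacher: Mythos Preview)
Your proposal is correct and follows essentially the same route as the paper: both directions of Part~1 are handled via the translation functions $\tau$ and $\sigma$ of Lemma~\ref{lemma:auxiliary_to_comparison}, and Part~2 is reduced to Part~1 using that $\mathbf{X}'$ is recursively presented and that computable openness is invariant under the computable isomorphism $\mathbf{X}\cong\mathbf{X}'$. The only cosmetic difference is that the paper phrases the re-indexing steps as appeals to closure properties (of computably open sets under computable countable union, and of semirecursive sets under $\exists^\omega$), whereas you spell out the composed recursive functions explicitly.
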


\begin{proof}

The second assertion follows from the first one and the fact that the metric space $\mathbf{X}'$ is recursively presented, so let us prove the first assertion. Let $V \subseteq \mathcal{X}$ and assume that $V$ is semirecursive. Then $V = \cup_{m} N(\mathcal{X},\varepsilon(m))$ for some recursive $\varepsilon$. From Lemma \ref{lemma:auxiliary_to_comparison} we have that
\[
V = \bigcup_{m} N(\mathcal{X},\varepsilon(m)) = \bigcup_{m,n}\text{I}(\sigma(\varepsilon(m),n))
\]
and $V$ is computably open from the closure properties of the latter class of sets, cf.~\cite[Proposition 6 (4)]{pauly-synthetic-arxiv}. The converse follows again from Lemma \ref{lemma:auxiliary_to_comparison} by using the function $\tau$ and the closure properties of semirecursive sets, cf.~\cite{moschovakis} 3C.1 (closure under $\exists^\omega$).
\end{proof}

We now shift our attention to computable/recursive functions.

\begin{proposition}
For two represented spaces $\mathbf{X}$, $\mathbf{Y}$ the map $f \mapsto \{(x, U) \mid f(x) \in U\} : \mathcal{C}(\mathbf{X},\mathbf{Y}) \to \mathcal{O}(\mathbf{X} \times \mathcal{O}(\mathbf{Y}))$ is computable. If $\mathbf{Y}$ is computably admissible, then this map admits a computable inverse.
\begin{proof}
That $f \mapsto \{(x, U) \mid f(x) \in U\}$ is computable follows by combining computability of $f \mapsto f^{-1}: \mathcal{C}(\mathbf{X},\mathbf{Y})\to\mathcal{C}(\mathcal{O}(\mathbf{Y}),\mathcal{O}(\mathbf{X}))$ and computability of $\mathalpha{\in} : \mathbf{X} \times \mathcal{O}(\mathbf{X}) \to \mathbb{S}$ and using type conversion.

For the inverse direction, recall that for computably admissible $\mathbf{Y}$ the map $\{U \in \mathcal{O}(\mathbf{Y}) \mid y \in U\} \mapsto y :\subseteq \mathcal{O}(\mathcal{O}(\mathbf{Y})) \to \mathbf{Y}$ is computable. By computability of $\operatorname{Cut} : \mathbf{X} \times \mathcal{O}(\mathbf{X} \times \mathcal{O}(\mathbf{Y})) \to \mathcal{O}(\mathcal{O}(\mathbf{Y}))$ defined via $\operatorname{Cut}(x,V) = \{U \mid (x,U) \in V\}$ and composition, we find that $(x_0,  \{(x, U) \mid f(x) \in U\}) \mapsto f(x_0) : \subseteq \mathbf{X} \times \mathcal{O}(\mathbf{X} \times \mathcal{O}(\mathbf{Y})) \to \mathbf{Y}$ is computable. Currying produces the claim.
\end{proof}
\end{proposition}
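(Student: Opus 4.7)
The plan is to exploit the cartesian closed structure of represented spaces together with the computability of evaluation and of the membership relation $\mathalpha{\in} : \mathbf{Y} \times \mathcal{O}(\mathbf{Y}) \to \mathbb{S}$, both of which were recalled earlier in the section. For the forward direction, observe that via type conversion an open subset $W \in \mathcal{O}(\mathbf{X} \times \mathcal{O}(\mathbf{Y}))$ corresponds to a continuous characteristic map $\mathbf{X} \times \mathcal{O}(\mathbf{Y}) \to \mathbb{S}$. Under this identification the assignment $f \mapsto \{(x, U) \mid f(x) \in U\}$ becomes $f \mapsto ((x, U) \mapsto \mathalpha{\in}(f(x), U))$, and the inner map factors through the computable evaluation $\operatorname{eval} : \mathcal{C}(\mathbf{X}, \mathbf{Y}) \times \mathbf{X} \to \mathbf{Y}$ followed by the computable membership predicate. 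Currying out $f$ yields the claimed computability.

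For the inverse direction, assume $\mathbf{Y}$ is computably admissible, so that the inverse of the canonic neighborhood-filter map $\kappa : \mathbf{Y} \to \mathcal{O}(\mathcal{O}(\mathbf{Y}))$ is computable on its image. Given $V = \{(x, U) \mid f(x) \in U\}$, I would recover $f$ pointwise: for each $x \in \mathbf{X}$ first compute the \emph{cut} $\operatorname{Cut}(x, V) := \{U \in \mathcal{O}(\mathbf{Y}) \mid (x, U) \in V\}$, which for such $V$ is precisely $\kappa(f(x))$, and then apply $\kappa^{-1}$ to obtain $f(x)$. Currying $(x, V) \mapsto f(x)$ with respect to $V$ then exhibits the inverse as a computable map $\mathcal{O}(\mathbf{X} \times \mathcal{O}(\mathbf{Y})) \to \mathcal{C}(\mathbf{X}, \mathbf{Y})$.

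The one point requiring explicit verification is the computability of $\operatorname{Cut} : \mathbf{X} \times \mathcal{O}(\mathbf{X} \times \mathcal{O}(\mathbf{Y})) \to \mathcal{O}(\mathcal{O}(\mathbf{Y}))$. By another round of type conversion this reduces to computability of $((x,V), U) \mapsto \mathalpha{\in}((x,U), V)$, which is immediate: pair $x$ with $U$ and apply the membership predicate for $\mathbf{X} \times \mathcal{O}(\mathbf{Y})$ against its open sets. Everything else is a straightforward assembly of computable maps from \cite{pauly-synthetic-arxiv} (evaluation, pairing, and $\mathalpha{\in}$), so the real challenge is less mathematical than bookkeeping: keeping track of which currying and which $\mathalpha{\in}$-instance is being used at each step, and making sure that on the inverse direction the recovered $f$ is produced in the $\mathcal{C}(\mathbf{X},\mathbf{Y})$-representation rather than only pointwise.
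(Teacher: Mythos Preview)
Your proposal is correct and follows essentially the same route as the paper: the inverse direction is identical (Cut followed by the admissibility inverse $\kappa^{-1}$, then currying), and for the forward direction you use evaluation composed with $\mathalpha{\in}_{\mathbf{Y}}$ where the paper uses the preimage map $f\mapsto f^{-1}$ composed with $\mathalpha{\in}_{\mathbf{X}}$, which is the same computation viewed from the other side. Your explicit verification of the computability of $\operatorname{Cut}$ via type conversion is a welcome addition that the paper leaves implicit.
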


\begin{corollary}
\label{corollary:computable_function}
Let $\mathbf{Y}$ be computably admissible. Then $f :\mathbf{X} \to \mathbf{Y}$ is computable iff $\{(x, U) \mid f(x) \in U\} \in \mathcal{O}(\mathbf{X} \times \mathcal{O}(\mathbf{Y}))$ is computable.
\end{corollary}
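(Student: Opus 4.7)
The plan is to derive this directly from the preceding proposition by observing that computable maps between represented spaces preserve computability of points, and that under the admissibility hypothesis the map $\Phi : f \mapsto \{(x,U) \mid f(x) \in U\}$ is a computable isomorphism in both directions (with a computable single-valued inverse, since $\Phi$ is injective: any $f$ is determined by the collection of pairs $(x, U)$ with $f(x) \in U$, because $\mathbf{Y}$ is computably admissible and hence in particular $T_0$ at the level of the represented space structure).

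For the forward direction, suppose $f \in \mathcal{C}(\mathbf{X},\mathbf{Y})$ is computable, i.e.~$f$ is a computable point of $\mathcal{C}(\mathbf{X},\mathbf{Y})$. By the first half of the preceding proposition, the map $\Phi$ itself is a computable function $\mathcal{C}(\mathbf{X},\mathbf{Y}) \to \mathcal{O}(\mathbf{X} \times \mathcal{O}(\mathbf{Y}))$. Computable functions send computable points to computable points (apply the computable realizer of $\Phi$ to a computable name of $f$), so $\{(x,U) \mid f(x) \in U\}$ is a computable element of $\mathcal{O}(\mathbf{X} \times \mathcal{O}(\mathbf{Y}))$.

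For the converse, assume $V := \{(x,U) \mid f(x) \in U\}$ is a computable open set. The second half of the preceding proposition (using the computable admissibility of $\mathbf{Y}$) supplies a computable multivalued inverse $\Phi^{-1}$ mapping $V$ back to $f$; since the choice of $f$ is in fact forced by $V$, any realizer of this inverse applied to a computable name of $V$ produces a computable name for $f$ in $\mathcal{C}(\mathbf{X},\mathbf{Y})$. Hence $f$ is a computable point of the function space, which is exactly what it means for $f : \mathbf{X} \to \mathbf{Y}$ to be computable.

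There is no real obstacle here beyond invoking the general principle that computable morphisms of represented spaces preserve computability of points, and noting that applying this to both $\Phi$ and $\Phi^{-1}$ yields the two implications. The only subtlety to flag explicitly is that the ``iff'' at the level of computability of individual maps/sets is a point-wise shadow of the uniform computable isomorphism established in the preceding proposition.
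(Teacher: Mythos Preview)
Your argument is correct and matches the paper's intended approach: the paper presents this statement as an immediate corollary of the preceding proposition (with no separate proof), and your derivation---applying $\Phi$ and its computable inverse to carry computable points to computable points---is exactly the intended one-line justification. The only cosmetic remark is that the paper's proposition gives a genuine (single-valued, partial) computable inverse rather than a multivalued one, so your phrase ``computable multivalued inverse'' is slightly loose; but you already note that $f$ is forced by $V$, so this does not affect the argument.
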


\begin{lemma}
\label{lemma:translate_graphs}
$U \mapsto \{(x,n) \in \mathbf{X} \times \mathbb{N} \mid \exists V \in \mathcal{O}(\mathbb{N}) \ n \in V \wedge (x,V) \in U\} : \mathcal{O}(\mathbf{X} \times \mathcal{O}(\mathbb{N})) \to \mathcal{O}(\mathbf{X} \times \mathbb{N})$ is computable.
\begin{proof}
We start with $U \mapsto \{(V,x,n) \in \mathcal{O}(\mathbb{N} \times \mathbf{X} \times \mathbb{N} \mid n \in V \wedge (x,V) \in U\} : \mathcal{O}(\mathbf{X} \times \mathcal{O}(\mathbb{N})) \to \mathcal{O}(\mathcal{O}(\mathbb{N}) \times \mathbf{X} \times \mathbb{N})$. That this map is computable follows from open sets being effectively closed under products and intersection.

As there is a total representation $\delta_{\mathcal{O}(\mathbb{N})} : \Baire \to \mathcal{O}(\mathbb{N})$, it follows that $\mathcal{O}(\mathbb{N})$ is computably overt (\cite[Proposition 19]{pauly-synthetic-arxiv}). By \cite[Proposition 40]{pauly-synthetic-arxiv}, the existential quantifier over an overt set is a computable map from open sets to open sets, thus the claim follows.
\end{proof}
\end{lemma}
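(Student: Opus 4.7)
The strategy is to recognize the given map as a composition of two standard operations in the calculus of represented spaces: first build an auxiliary open set in a larger product by using computable closure properties, then project out the $\mathcal{O}(\mathbb{N})$ coordinate using the fact that $\mathcal{O}(\mathbb{N})$ is computably overt. This reduces the problem to two facts that have been developed elsewhere in the synthetic framework of represented spaces.

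First I would construct the auxiliary open set
\[ U' \ := \ \{(V,x,n) \in \mathcal{O}(\mathbb{N}) \times \mathbf{X} \times \mathbb{N} \mid n \in V \wedge (x,V) \in U\} \]
and argue that $U \mapsto U'$ is computable as a map $\mathcal{O}(\mathbf{X} \times \mathcal{O}(\mathbb{N})) \to \mathcal{O}(\mathcal{O}(\mathbb{N}) \times \mathbf{X} \times \mathbb{N})$. This is a pointwise finite intersection of two open-set pullbacks: the pullback of the graph of the computable evaluation $\in \,:\, \mathcal{O}(\mathbb{N}) \times \mathbb{N} \to \mathbb{S}$ along the projection that forgets $x$, and the pullback of the evaluation $\in \,:\, (\mathbf{X} \times \mathcal{O}(\mathbb{N})) \times \mathcal{O}(\mathbf{X} \times \mathcal{O}(\mathbb{N})) \to \mathbb{S}$ along the computable map sending $(V,x,n)$ together with a name of $U$ to $((x,V), U)$. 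Open sets are uniformly closed under both pullback along computable maps and finite intersection, so this step is immediate.

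Next I would note that $\mathcal{O}(\mathbb{N})$ is computably overt. This follows from the total representation $\delta_{rng} : \Baire \to \mathcal{O}(\mathbb{N})$ mentioned earlier in the excerpt, together with the known principle that a represented space with a total Baire-space representation is computably overt. Overtness is precisely what is needed to push existential quantification through the open-set functor: projection along a computably overt factor yields a uniformly computable map $\mathcal{O}(\mathbf{Z} \times \mathbf{Y}) \to \mathcal{O}(\mathbf{Z})$ when $\mathbf{Y}$ is computably overt.

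Finally, applying this projection principle with $\mathbf{Y} = \mathcal{O}(\mathbb{N})$, $\mathbf{Z} = \mathbf{X} \times \mathbb{N}$, and the set $U'$, I obtain precisely the target set $\{(x,n) \mid \exists V \in \mathcal{O}(\mathbb{N}) \ n \in V \wedge (x,V) \in U\}$ as an open subset of $\mathbf{X} \times \mathbb{N}$, uniformly computable from $U$. The only real obstacle is the first step: the bookkeeping of products, projections, and pullbacks has to be arranged so that the existential quantifier actually ranges over the overt coordinate, and so that the $U$-parameter is carried through cleanly. Once the target is rewritten as the projection of $U'$, everything else is an appeal to the closure properties already in place.
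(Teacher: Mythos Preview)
Your proposal is correct and follows essentially the same approach as the paper: build the auxiliary open set $U' \subseteq \mathcal{O}(\mathbb{N}) \times \mathbf{X} \times \mathbb{N}$ using effective closure of open sets under products/intersection (equivalently, pullbacks along computable maps), then project out the $\mathcal{O}(\mathbb{N})$-coordinate using that $\mathcal{O}(\mathbb{N})$ is computably overt via its total representation $\delta_{rng}$. The only difference is cosmetic---you spell out the first step as an intersection of two pullbacks, whereas the paper just invokes closure under products and intersection in one line.
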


A similar characterization of computability of functions is used in effective descriptive set theory:

\begin{Mdefn}{3D}
A function $ f\: :\: \mathcal{X} \rightarrow \mathcal{Y} $ is recursive if and only if the {\em neighborhood diagram} $G^f \subseteq \mathcal{X} \times \mathbb{N}$  of $f$ defined by
\[
G^f(x,s) \Longleftrightarrow f(x) \in N(\mathcal{Y},s),
\]
is semirecursive.
\end{Mdefn}

\begin{theorem}
\label{theorem:computable_functions_the_same}
Suppose that $\mathcal{X}$, $\mathcal{Y}$ are recursively presented metric spaces, $\mathbf{X} = (M^X,d^X,(a^X_i)_{i \in \mathbb{N}})$, $\mathbf{Y} = (M^Y,d^Y,(a^Y_i)_{i \in \mathbb{N}})$ are computable metric spaces with $\mathbf{Y}$ being admissible, and $\mathbf{X}'\cong \mathbf{X}$, $\mathbf{Y}'\cong \mathbf{Y}$ are as in Theorem \ref{theorem:from_cms_to_rps}. Then:
\begin{enumerate}
\item For every $f: \mathcal{X} \to \mathcal{Y}$, $f$ is recursive exactly when $f$ is computable.
\item  For every $g: M^X \to M^Y$, $g$ is $(\mathbf{X},\mathbf{Y})$-computable \textup{(}equivalently $(\mathbf{X}',\mathbf{Y}')$-computable\textup{)} exactly when $g$ is $(\mathbf{X}',\mathbf{Y}')$-recursive.
\end{enumerate}
\end{theorem}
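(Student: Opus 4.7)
The plan is to reduce both statements to Corollary~\ref{corollary:computable_function} (a function into a computably admissible space is computable iff its ``open graph'' $H_f := \{(x,U) \mid f(x) \in U\}$ is a computable element of $\mathcal{O}(\mathbf{X} \times \mathcal{O}(\mathbf{Y}))$) combined with Theorem~\ref{theorem:computably_open_sets_the_same} (equating computably open with semirecursive on an RPMS). The bridge between the Moschovakis-style neighborhood diagram $G^f \subseteq \mathcal{X} \times \mathbb{N}$ and the Weihrauch-style graph $H_f$ is the map $s \mapsto N(\mathcal{Y},s) : \mathbb{N} \to \mathcal{O}(\mathcal{Y})$, which is computable because by Lemma~\ref{lemma:auxiliary_to_comparison} we have the uniform presentation $N(\mathcal{Y},s) = \bigcup_n \text{I}(\sigma(s,n))$. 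Symmetrically, Proposition~\ref{prop:countablybasedopens} applied to the base $\{N(\mathcal{Y},s)\}_{s \in \mathbb{N}}$ says that any $U \in \mathcal{O}(\mathcal{Y})$ can be computably expanded as $\bigcup_{s \in S}N(\mathcal{Y},s)$ for some $S \in \mathcal{O}(\mathbb{N})$ obtained multivaluedly from $U$.

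For part~(1), first I would note that any RPMS, being a CMS with the Cauchy representation, is computably admissible, so Corollary~\ref{corollary:computable_function} applies to $\mathcal{Y}$. For the direction ``recursive $\Rightarrow$ computable'', assume $G^f$ is semirecursive. Given $(x,U) \in \mathcal{X} \times \mathcal{O}(\mathcal{Y})$, use the computable multivalued inverse of $\bigcup$ from Proposition~\ref{prop:countablybasedopens} to obtain $S \in \mathcal{O}(\mathbb{N})$ with $U = \bigcup_{s \in S} N(\mathcal{Y},s)$; then $f(x) \in U \Longleftrightarrow \exists s\, (s \in S \wedge (x,s) \in G^f)$, which is the existential quantification of an open condition over $\mathbb{N}$ and hence exhibits $H_f$ as computably open. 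By Corollary~\ref{corollary:computable_function}, $f$ is computable. For the converse, assume $H_f$ is computably open. Composing the computable map $(x,s) \mapsto (x, N(\mathcal{Y},s)) : \mathcal{X} \times \mathbb{N} \to \mathcal{X} \times \mathcal{O}(\mathcal{Y})$ with the open set $H_f$ produces $G^f \in \mathcal{O}(\mathcal{X} \times \mathbb{N})$ as computably open. Since $\mathcal{X} \times \mathbb{N}$ is again an RPMS, Theorem~\ref{theorem:computably_open_sets_the_same}(1) identifies computably open with semirecursive, so $G^f$ is semirecursive and $f$ is recursive.

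Part~(2) then follows immediately: by Theorem~\ref{theorem:from_cms_to_rps} the spaces $\mathbf{X}'$ and $\mathbf{Y}'$ are RPMSs and $\mathbf{X} \cong \mathbf{X}'$, $\mathbf{Y} \cong \mathbf{Y}'$, so computability of $g$ with respect to $(\mathbf{X},\mathbf{Y})$ coincides with computability with respect to $(\mathbf{X}',\mathbf{Y}')$; applying part~(1) to the RPMSs $\mathbf{X}',\mathbf{Y}'$ converts this to recursiveness.

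The main obstacle is essentially bookkeeping rather than conceptual: verifying that $\{N(\mathcal{Y},s)\}_{s \in \mathbb{N}}$ is a bona fide effective countable base (so that Proposition~\ref{prop:countablybasedopens} legitimately applies), checking that the construction extends to products like $\mathcal{X} \times \mathbb{N}$ as an RPMS, and making certain that the two translations ``basic open $\text{I}(w) \leftrightarrow N(\mathcal{Y},\tau(w))$'' and ``$N(\mathcal{Y},s) \leftrightarrow \bigcup_n \text{I}(\sigma(s,n))$'' from Lemma~\ref{lemma:auxiliary_to_comparison} are used coherently. Once these are in place, both directions collapse into the observation that $H_f$ and $G^f$ are computably interchangeable, and Theorem~\ref{theorem:computably_open_sets_the_same} takes care of the rest.
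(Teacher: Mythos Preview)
Your proposal is correct and uses the same core ingredients as the paper (Corollary~\ref{corollary:computable_function}, Theorem~\ref{theorem:computably_open_sets_the_same}, Lemma~\ref{lemma:auxiliary_to_comparison}, Proposition~\ref{prop:countablybasedopens}), but your organisation is slightly different and in fact more symmetric than the paper's.

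For the direction ``computable $\Rightarrow$ recursive'' the paper first passes from $H_f \in \mathcal{O}(\mathcal{X}\times\mathcal{O}(\mathcal{Y}))$ to a set in $\mathcal{O}(\mathcal{X}\times\mathcal{O}(\mathbb{N}))$ via Proposition~\ref{prop:countablybasedopens}, and then invokes a separate Lemma~\ref{lemma:translate_graphs} (which uses overtness of $\mathcal{O}(\mathbb{N})$ and a projection argument) to collapse to $G^f \in \mathcal{O}(\mathcal{X}\times\mathbb{N})$. Your direct pullback along the computable map $(x,s)\mapsto (x,N(\mathcal{Y},s))$ accomplishes the same in one step and bypasses Lemma~\ref{lemma:translate_graphs} entirely. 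For the direction ``recursive $\Rightarrow$ computable'' the paper instead computes the section map $x\mapsto\{n\mid(x,n)\in G^f\}$ via $\mathrm{Cut}$ and then appeals to Proposition~\ref{prop:countablybasesubsetscott} to recover $f(x)$; you stay with the graph $H_f$ throughout and appeal to Corollary~\ref{corollary:computable_function} for both directions, which is tidier. The bookkeeping you flag (that $\{N(\mathcal{Y},s)\}_s$ is an effective countable base, so that Proposition~\ref{prop:countablybasedopens} applies) is indeed routine: computability of $s\mapsto N(\mathcal{Y},s)$ comes from $\sigma$ in Lemma~\ref{lemma:auxiliary_to_comparison}, and the $\mathrm{Base}$ requirement reduces via $\tau$ to the known effective base $\{\text{I}(w)\}_w$ of Proposition~\ref{prop:cmshasbase}.
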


\begin{proof}

As before the second assertion follows from the first one and the fact that the metric spaces $\mathbf{X}'$, $\mathbf{Y}'$, are recursively presented. Regarding the first one, if $G^f$ is semirecursive then it is computably open by Theorem \ref{theorem:computably_open_sets_the_same}. By \cite[Proposition 6(7)]{pauly-synthetic-arxiv}, the map $\textrm{Cut}: \mathbf{X} \times \mathcal{O}(\mathbf{X} \times \mathbf{Y}) \to \mathcal{O}(\mathbf{Y})$ is computable. Thus, $x \mapsto \{n \in \mathbb{N} \mid (x,n) \in G^f\} : \mathcal{X} \to \mathcal{O}(\mathbb{N})$ is computable. Lemma \ref{lemma:auxiliary_to_comparison} together with Proposition \ref{prop:countablybasesubsetscott} show that we can compute $f(x)$ from $\{n \in \mathbb{N} \mid (x,n) \in G^f\}$. As the composition of computable functions is computable, we conclude that $f$ is computable.

Conversely if $f$ is computable then from Corollary \ref{corollary:computable_function} it follows that $\{(x, U) \mid f(x) \in U\} \in \mathcal{O}(\mathbf{X} \times \mathcal{O}(\mathbf{Y}))$ is computable. Using the characterization of the open sets in Proposition \ref{prop:countablybasedopens} together with Lemma \ref{lemma:auxiliary_to_comparison} shows that $\{(x, V) \mid \exists n \in V \ f(x) \in N(\mathcal{Y},n)\} \in \mathcal{O}(\mathbf{X}, \mathcal{O}(\mathbb{N}))$ is computable. Then Lemma \ref{lemma:translate_graphs} implies that $G^f$ is computably open, and so from Theorem \ref{theorem:computably_open_sets_the_same} we have that $G^f$ is semirecursive, i.e., $f$ is a recursive function.
\end{proof}

Finally we deal with points. A point $x_0$ in a computable metric space $\mathbf{X}$ is defined to be \emph{computable} if it has a computable name, i.e.~if it is the limit of a computable fast Cauchy sequence. On the other hand, point $x_1$ in a recursively presented metric space $\mathcal{X}$ is \emph{recursive} if the set $\{s \in \mathbb{N} \mid x \in N(\mathcal{X},s)\}$ is semirecursive, cf.~the comments preceding 3D.7 \cite{moschovakis}.

\begin{theorem}
\label{theorem:computable_points_the_same}
Suppose that $\mathcal{X}$ is a recursively presented metric space, $\mathbf{X} = (M,d,(a_i)_{i \in \mathbb{N}})$ is a computable metric space, and $\mathbf{X}'\cong \mathbf{X}$ is as in Theorem \ref{theorem:from_cms_to_rps}. Then:
\begin{enumerate}
\item For every $x \in \mathcal{X}$, $x$ is recursive exactly when it is computable.
\item  For every $x \in M$, $x$ is $\mathbf{X}$-computable (equivalently $\mathbf{X}'$)-computable exactly when it is $\mathbf{X}'$-recursive.
\end{enumerate}
\end{theorem}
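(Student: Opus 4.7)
The plan is to imitate the structure of the two preceding theorems: prove assertion (1) directly and then deduce (2) by appealing to the isomorphism $\mathbf{X} \cong \mathbf{X}'$ together with the fact that $\mathbf{X}'$ is, by construction (Theorem \ref{theorem:from_cms_to_rps}), recursively presented. Since a computable bijection preserves computability of points in both directions, $x$ is $\mathbf{X}$-computable iff $x$ is $\mathbf{X}'$-computable; applying assertion (1) to $\mathbf{X}'$ then identifies this with $\mathbf{X}'$-recursiveness, closing the chain.

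To prove assertion (1), I would first note that $\{s \in \mathbb{N} \mid x \in N(\mathcal{X},s)\} \subseteq \mathbb{N}$ is semirecursive iff it is r.e., i.e., iff it is a computable element of $\mathcal{O}(\mathbb{N})$. For the forward direction (computable $\Rightarrow$ recursive), I would use Lemma \ref{lemma:auxiliary_to_comparison} to write $N(\mathcal{X},s) = \bigcup_n \text{I}(\sigma(s,n))$ uniformly computably in $s$. Given a computable Cauchy name of $x$, the membership predicate $x \in \text{I}(m)$ is r.e.\ uniformly in $m$ — this is exactly what was shown in the first half of the proof of Proposition \ref{prop:cmshasbase}. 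Enumerating those $s$ for which some $\text{I}(\sigma(s,n))$ is witnessed to contain $x$ then shows that $\{s \mid x \in N(\mathcal{X},s)\}$ is r.e.

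For the converse (recursive $\Rightarrow$ computable), I would translate the semirecursiveness of $\{s \mid x \in N(\mathcal{X},s)\}$ into a statement about the basic open balls of $\mathcal{X}$ viewed as a CMS: using the computable $\tau$ from Lemma \ref{lemma:auxiliary_to_comparison} with $\text{I}(w) = N(\mathcal{X},\tau(w))$, the set $\{w \in \mathbb{N} \mid x \in \text{I}(w)\}$ is r.e., hence a computable element of $\mathcal{O}(\mathbb{N})$. Proposition \ref{prop:cmshasbase} exhibits $(\text{I}(w))_{w \in \mathbb{N}}$ as an effective countable base of $\mathcal{X}$, and the Cauchy representation of a CMS is admissible, so Proposition \ref{prop:countablybasesubsetscott} produces a computable map recovering $x$ from $\{w \mid x \in \text{I}(w)\}$, which is what computability of $x$ means.

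I do not anticipate any substantive obstacle beyond correctly invoking the infrastructure accumulated in Sections \ref{sec:spaces} and \ref{sec:functions}. All of the genuine work has been absorbed into Lemma \ref{lemma:auxiliary_to_comparison} and Propositions \ref{prop:cmshasbase} and \ref{prop:countablybasesubsetscott}, so the proof is essentially a cross-referencing exercise parallel to that of Theorem \ref{theorem:computable_functions_the_same}, with the constant map $\mathbf{1} \to \mathcal{X}$ playing the role of the function $f$ in the earlier argument.
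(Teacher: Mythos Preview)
Your proposal is correct and follows essentially the same route as the paper's proof, just spelled out in more detail: the paper compresses your two directions into a single invocation of Proposition~\ref{prop:countablybasesubsetscott} (the embedding $x \mapsto \{n \mid x \in \text{I}(n)\}$ is computable with computable inverse, so $x$ is computable iff $\{n \mid x \in \text{I}(n)\}$ is a computable point of $\mathcal{O}(\mathbb{N})$), then cites Lemma~\ref{lemma:auxiliary_to_comparison} and Theorem~\ref{theorem:computably_open_sets_the_same} to pass between the $\text{I}$- and $N(\mathcal{X},\cdot)$-indexed neighborhood filters. Your explicit unpacking via $\sigma$ and $\tau$ and the direct observation that semirecursive subsets of $\mathbb{N}$ are exactly the r.e.\ sets amounts to the same argument.
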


\begin{proof}
By Proposition \ref{prop:countablybasesubsetscott}, a point in a computable metric space is computable iff $\{n \in \mathbb{N} \mid x \in \text{I}(n)\}$ is computably open. Lemma \ref{lemma:auxiliary_to_comparison} and Theorem \ref{theorem:computably_open_sets_the_same} suffice to conclude the claim.
\end{proof}

\section{On Cauchy-completions}
As a digression, we shall explore the role Cauchy-completion plays in obtained effective versions of metric spaces. An effective version of Cauchy-completion underlies both the definition of computable metric spaces and recursively presented metric spaces. A crucial distinction, though classically vacuous, lies in the question whether spaces embed into their Cauchy-completion. Our goal in this section is to explore the variations upon effective Cauchy-completion, and to subsequently understand the origin of the discrepancy exhibited in Section \ref{sec:spaces}.

Given a represented space $\mathbf{X}$ and some metric $d$ on $X$, we define the space $\mathcal{S}_C^d(\mathbf{X}) \subseteq \mathcal{C}(\mathbb{N},\mathbf{X})$ of fast Cauchy sequences by $(x_n)_{n \in \mathbb{N}} \in \mathcal{S}_C^d(\mathbf{X})$ iff $\forall i,j \geq N \ d(x_i,x_j) < 2^{-N}$. If $\mathbf{X}$ is complete, the map $\lim^d_C : \mathcal{S}_C^d(\mathbf{X}) \to \mathbf{X}$ is of natural interest (if $\mathbf{X}$ is not complete, we can still study $\lim^d_C$ as a partial map). In fact, it can characterize admissibility as follows:

\begin{proposition}
Let $\mathbf{X}$ admit a computable dense sequence. Let $d :\mathbf{X} \times \mathbf{X} \to \mathbb{R}$ be a computable metric, and let $\lim^d_C :\subseteq \mathcal{S}_C^d(\mathbf{X}) \to \mathbf{X}$ be computable. Then $\mathbf{X}$ is computably admissible.
\begin{proof}
To show that $\mathbf{X}$ is computably admissible, we need to show that $\{U \in \mathcal{O}(\mathbf{X}) \mid x \in U\} \mapsto x : \subseteq \mathcal{O}(\mathcal{O}(\mathbf{X})) \to \mathbf{X}$ is computable. We search for some point $a_1$ such that $B(a_1, 2^{-2}) \in \{U \in \mathcal{O}(\mathbf{X}) \mid x \in U\}$. Then we search for $a_2$ with $B(a_2, 2^{-3}) \in \{U \in \mathcal{O}(\mathbf{X}) \mid x \in U\}$ etc. These points form a fast Cauchy sequence converging to $x$.
\end{proof}
\end{proposition}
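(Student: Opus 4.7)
The plan is to exhibit a computable realizer for the inverse of the canonical map $\kappa: \mathbf{X} \to \mathcal{O}(\mathcal{O}(\mathbf{X}))$; that is, given a name for the neighborhood filter $F = \{U \in \mathcal{O}(\mathbf{X}) \mid x \in U\}$, to compute a name for $x$. The natural strategy is to construct a fast Cauchy sequence of dense points converging to $x$ and then apply the given computable map $\lim^d_C$.

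First, I note that from computability of $d : \mathbf{X} \times \mathbf{X} \to \mathbb{R}$ together with the computable sequence $(a_n)_{n \in \mathbb{N}}$, the assignment $(n,k) \mapsto B(a_n, 2^{-k}) = \{y \in \mathbf{X} \mid d(y,a_n) < 2^{-k}\}$ produces a computable sequence in $\mathcal{O}(\mathbf{X})$ (since $\{r \in \mathbb{R} \mid r < 2^{-k}\}$ is a computable open subset of $\mathbb{R}$ and preimages under computable functions are computable in opens). Next, because evaluation $\mathalpha{\in} : \mathcal{O}(\mathbf{X}) \times \mathcal{O}(\mathcal{O}(\mathbf{X})) \to \mathbb{S}$ is computable, the property ``$B(a_n, 2^{-k}) \in F$'' is semidecidable uniformly in $n$, $k$, and the given name of $F$.

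Now, given $F$, I will compute a sequence $(a_{k_n})_{n \in \mathbb{N}}$ as follows: for each $n \in \mathbb{N}$, run the semidecision procedure ``$B(a_k, 2^{-n-1}) \in F$'' in parallel over all $k$, and let $k_n$ be the first index at which it succeeds. Density of $(a_k)_{k \in \mathbb{N}}$ guarantees that such a $k$ exists, since any open ball of positive radius centered at $x$ belongs to $F$ and must meet the dense sequence. For $m \geq n$ we have $d(a_{k_n}, x) < 2^{-n-1}$ and $d(a_{k_m}, x) < 2^{-m-1} \leq 2^{-n-1}$, hence $d(a_{k_n}, a_{k_m}) < 2^{-n}$, so $(a_{k_n})_{n \in \mathbb{N}} \in \mathcal{S}_C^d(\mathbf{X})$ and converges to $x$. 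Applying the hypothesized computable $\lim^d_C$ returns $x$, which completes the verification of computable admissibility.

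The main delicate point is ensuring that the search for $k_n$ genuinely terminates uniformly and produces a \emph{fast} Cauchy sequence rather than merely a Cauchy sequence; the choice of radius $2^{-n-1}$ (half the nominal precision) is what guarantees the bound $2^{-n}$ between later terms, and density supplies termination. Everything else is routine type-conversion using the computability of $d$, the evaluation map on $\mathcal{O}(\mathbf{X})$, and the ambient computable dense sequence.
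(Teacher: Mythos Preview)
Your proof is correct and follows essentially the same route as the paper's: given the neighborhood filter $F$, search the dense sequence for points $a_{k_n}$ with $x \in B(a_{k_n},2^{-n-1})$, verify this yields a fast Cauchy sequence, and apply $\lim^d_C$. You are simply more explicit than the paper about why the balls are computable opens, why membership in $F$ is semidecidable, and why the halved radius forces the fast-Cauchy bound.
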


\begin{proposition}
\label{prop:limcomputable}
Let $\mathbf{X} = (X, \delta_\mathbf{X})$ be computably admissible and let $\dom(\delta_\mathbf{X})$ contain some computable dense sequence. Let $d :\mathbf{X} \times \mathbf{X} \to \mathbb{R}$ be a computable metric, and let $\{B(a_i,2^{-n}) \mid i,n\in \mathbb{N}\}$ be an effective countable basis for $\mathbf{X}$. Then $\lim_C^d$ is computable.
\begin{proof}
We are given some fast Cauchy sequence $(x_i)_{i \in \mathbb{N}}$ converging to some $x$ with $d(x,x_i) < 2^{-i}$ as input. As $x \in B(a_i,2^{-n}) \Rightarrow x_n \in B(a_i,2^{-n+1})$ and $x_n \in B(a_i,2^{-n}) \Rightarrow x \in B(a_i,2^{-n+1})$, we can compute $\{\langle i, n\rangle \mid x \in B(a_i,2^{-n})\} \in \mathcal{O}(\mathbb{N})$. Then we can invoke Proposition \ref{prop:countablybasesubsetscott} to extract $x$.
\end{proof}
\end{proposition}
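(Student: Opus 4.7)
The plan is to reduce the problem to Proposition \ref{prop:countablybasesubsetscott}. Under the hypotheses at hand, that result shows the map $\iota : \mathbf{X} \to \mathcal{O}(\mathbb{N})$ sending $x$ to $\{\langle i, k\rangle \mid x \in B(a_i, 2^{-k})\}$ is a computable embedding, so it suffices to produce $\iota(x)$ computably from a fast Cauchy sequence $(x_m)_{m \in \mathbb{N}}$ with $d(x, x_m) < 2^{-m}$; once $\iota(x)$ has been obtained, composing with its computable inverse recovers $x$.

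For the enumeration, I would use the triangle inequality to transfer membership in a basic open ball to a test on the explicit terms $x_m$. Specifically, I list $\langle i, k\rangle$ into the output as soon as some $m \in \mathbb{N}$ and some rational $q$ are found with $d(x_m, a_i) < q$ and $q + 2^{-m} < 2^{-k}$, since then
\[
d(x, a_i) \leq d(x, x_m) + d(x_m, a_i) < 2^{-m} + q < 2^{-k},
\]
so $x \in B(a_i, 2^{-k})$. Because $d : \mathbf{X} \times \mathbf{X} \to \mathbb{R}$ is computable and each $x_m$ is available from the input, the existence of such a witness is r.e.\ in the input and in $\langle i, k\rangle$. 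Conversely, if $x \in B(a_i, 2^{-k})$, set $\varepsilon := 2^{-k} - d(x, a_i) > 0$ and pick $m$ large enough that $2^{-m} < \varepsilon/3$; a short calculation yields $d(x_m, a_i) < 2^{-k} - 2 \cdot 2^{-m}$, so a suitable rational $q$ exists and the pair $(m,q)$ is eventually discovered. Hence the enumeration is both sound and complete, and $\iota(x)$ is produced.

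I do not foresee a substantive obstacle: the only genuine work is the triangle-inequality bookkeeping with the $2^{-m}$ radii, which is standard ``half-and-half'' arithmetic. The hypothesis about a computable dense sequence in $\dom(\delta_\mathbf{X})$ does not enter my argument directly but is needed to invoke Proposition \ref{prop:countablybasesubsetscott}; likewise computable admissibility of $\mathbf{X}$ is used only through that proposition, to invert $\iota$.
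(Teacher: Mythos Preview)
Your proof is correct and follows essentially the same approach as the paper's: compute the neighbourhood filter $\{\langle i,k\rangle \mid x \in B(a_i,2^{-k})\}$ from the Cauchy sequence via the triangle inequality, then invoke Proposition~\ref{prop:countablybasesubsetscott} to recover $x$. Your version is in fact more explicit than the paper's terse sketch, spelling out both soundness and completeness of the enumeration with the rational witness $q$, whereas the paper just records the two implications $x \in B(a_i,2^{-n}) \Rightarrow x_n \in B(a_i,2^{-n+1})$ and $x_n \in B(a_i,2^{-n}) \Rightarrow x \in B(a_i,2^{-n+1})$ and leaves the rest to the reader.
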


This characterization of computable metric spaces in terms of fast Cauchy limits of course presupposes the represented space $\mathbb{R}$ with its canonical structure. In the beginnings of computable analysis, various non-standard representations of $\mathbb{R}$ have been investigated. We will investigate what happens to Cauchy completions, if some other represented space $\mathbf{R}$ (with again the reals as underlying set) is used in place of $\mathbb{R}$.

\begin{definition}
Let $\mathbf{X}$ be a represented space, such that the metric $d : \mathbf{X}\times \mathbf{X} \to \mathbf{R}$ is computable. We obtain its Cauchy-closure $\overline{\mathbf{X}}^{d,\mathbf{R}}$ by taking the usual quotient of $\mathcal{S}^d_C(\mathbf{X})$.
\end{definition}

\begin{observation}
Any computable metric space $\mathbf{X}$ embeds\footnote{A computable embedding $\mathbf{X} \hookrightarrow \mathbf{Y}$ is a computable injection $\iota : \mathbf{X} \to \mathbf{Y}$ such that the partial inverse $\iota^{-1}$ is computable, too.} into its Cauchy-closure $\overline{\mathbf{X}}^{d,\mathbb{R}}$, and $d$ can be extended canonically to $\overline{d} : \overline{\mathbf{X}}^{\overline{d},\mathbb{R}} \times \overline{\mathbf{X}}^{\overline{d},\mathbb{R}} \to \mathbb{R}$. Definition \ref{def:cms} reveals that a complete computable metric space is the Cauchy-closure of a countable metric space with continuous metric into $\mathbb{R}$.
\begin{proof}
The first part of the claim follows from Proposition \ref{prop:limcomputable} in conjunction with Proposition \ref{prop:cmshasbase}. The second part is essentially a reformulation of Observation \ref{obs:cmscompletion}. The third part is immediate from Definition \ref{def:cms}.
\end{proof}
\end{observation}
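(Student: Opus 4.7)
The plan is to handle the three assertions separately, with the computable structure of a CMS supplying each of them more or less directly. For the embedding, I would use the obvious map $\iota : \mathbf{X} \to \overline{\mathbf{X}}^{d,\mathbb{R}}$ sending $x$ to the equivalence class of the constant sequence $(x,x,x,\dots)$; this sequence lies trivially in $\mathcal{S}_C^d(\mathbf{X})$, and $\iota$ is visibly computable from the identity on names. For the partial inverse, I want to invoke Proposition \ref{prop:limcomputable}: a CMS is computably admissible, its Cauchy representation carries a computable dense sequence (the basic points $(a_n)_{n\in\mathbb{N}}$), $d$ is computable by the CMS axioms, and by Proposition \ref{prop:cmshasbase} the balls $B(a_i,2^{-n})$ form an effective countable base. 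All hypotheses of Proposition \ref{prop:limcomputable} are therefore in place, so $\lim_C^d$ is computable, and in particular $\iota^{-1}$ is computable on the image of $\iota$.

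For the canonical extension of the metric, I would define $\overline{d}([(x_n)],[(y_n)]) := \lim_n d(x_n,y_n)$. Existence of the limit is the standard estimate $|d(x_n,y_n)-d(x_m,y_m)| \leq d(x_n,x_m)+d(y_n,y_m)$, and well-definedness on equivalence classes is the same estimate applied to alternative representatives. Computability of $\overline{d}$ as a map into $\mathbb{R}$ is inherited uniformly from computability of $d$ on $\mathbf{X}$: given fast Cauchy representatives of the two arguments, the sequence $(d(x_n,y_n))_{n \in \mathbb{N}}$ is itself computable and converges at a controlled rate. The compatibility with $d$ under $\iota$ is immediate from the constant-sequence definition; this is essentially the content already noted in Observation \ref{obs:cmscompletion}.

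For the third assertion, I would unwind Definition \ref{def:cms}. A CMS equips $\mathbf{X}$ with a dense sequence $(a_n)_{n \in \mathbb{N}}$, viewed as a countable (sub)space, on which the restriction of $d$ is a continuous (indeed computable) map into $\mathbb{R}$. If $\mathbf{X}$ is additionally complete, then every fast Cauchy sequence drawn from the $a_n$ converges in $\mathbf{X}$, and conversely every point of $\mathbf{X}$ is obtained as such a limit by the very definition of the Cauchy representation. Thus $\mathbf{X}$ coincides with the Cauchy-closure of the countable metric space $((a_n)_{n \in \mathbb{N}}, d\!\upharpoonright)$ with values in $\mathbb{R}$.

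The main obstacle I expect is conceptual rather than technical: one has to check that the abstract Cauchy-closure $\overline{\mathbf{X}}^{d,\mathbb{R}}$, presented as a quotient of $\mathcal{S}_C^d(\mathbf{X}) \subseteq \mathcal{C}(\mathbb{N},\mathbf{X})$, induces the \emph{same} represented space as the Cauchy representation baked into Definition \ref{def:cms}, so that the embedding $\iota$ really is a computable embedding with computable partial inverse and not merely a continuous bijection. This is where Propositions \ref{prop:cmshasbase} and \ref{prop:limcomputable} pull their weight, by translating between the ``limit of a fast Cauchy sequence in $\mathbf{X}$'' view and the ``enumeration of basic balls containing $x$'' view of a point.
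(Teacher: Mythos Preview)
Your proposal is correct and follows essentially the same route as the paper: Propositions \ref{prop:cmshasbase} and \ref{prop:limcomputable} for the embedding, Observation \ref{obs:cmscompletion} for the metric extension, and Definition \ref{def:cms} for the third clause. You supply more detail than the paper (notably the explicit $\iota$ via constant sequences and the verification that $\overline{d}$ is well-defined and computable), but the structure and the key invocations coincide.
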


In order to find a contrasting picture of the recursively presented metric spaces, we first introduce the represented space $\mathbb{R}_{cf}$. Informally, any real number is encoded by its decimal expansion, with infinite repetitions clearly marked\footnote{For example, the unique $\mathbb{R}_{cf}$-name of $\frac{1}{3}$ is $0.\overline{3}$. The number $1$ has the names $0.\overline{9}$ and $1.\overline{0}$.}. This just ensures that $x \leq q?$ and $x \geq q?$ become both decidable for $x \in \mathbb{R}_{cf}$ and $q \in \mathbb{Q}$.

\begin{observation}
The space $\mathbb{R}_{cf}$ does not embed into $\overline{\mathbb{R}_{cf}}^{d,\mathbb{R}_{cf}}$. Let $d : \mathbf{X} \times \mathbf{X} \to \mathbb{R}_{cf}$ be a computable metric. In general, $\overline{d} : \overline{\mathbf{X}}^{d,\mathbb{R}_{cf}} \times \overline{\mathbf{X}}^{d,\mathbb{R}_{cf}} \to \mathbb{R}_{cf}$ may fail to be computable. Definition \ref{def:rpms} reveals that a recursively presented metric space is (essentially) the Cauchy-closure of a countable metric space with continuous metric into $\mathbb{R}_{cf}$.
\begin{proof}
The claims all follow from the observation that $\overline{\mathbb{R}_{cf}}^{d,\mathbb{R}_{cf}} \cong \mathbb{R} \ncong \mathbb{R}_{cf}$.
\end{proof}
\end{observation}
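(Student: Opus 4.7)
The plan is to reduce all three claims to the single computable isomorphism $\overline{\mathbb{R}_{cf}}^{d,\mathbb{R}_{cf}} \cong \mathbb{R}$, where $d(x,y)=|x-y|$ and $\mathbb{R}$ carries its standard fast-Cauchy representation. First I will build this isomorphism explicitly. In the forward direction, any fast Cauchy sequence $(x_n)$ with $\mathbb{R}_{cf}$-valued distance bounds $d(x_i,x_j)<2^{-N}$ is \emph{a fortiori} a fast Cauchy sequence when its points and distances are viewed in $\mathbb{R}$ (the identity $\mathbb{R}_{cf}\to\mathbb{R}$ is computable and transports $\mathbb{R}_{cf}$-names to $\mathbb{R}$-names), so we can read off a standard name of the limit. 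In the backward direction, given an $\mathbb{R}$-name of $x$, extract a fast Cauchy sequence of rationals $(q_n)$; each $q_n$ admits a canonical $\mathbb{R}_{cf}$-name, and every pairwise distance $|q_i-q_j|$ is itself a rational with a computable $\mathbb{R}_{cf}$-name, so $(q_n)$ lies in $\mathcal{S}^d_C(\mathbb{R}_{cf})$ with limit $x$.

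Once the isomorphism is in hand, claim (i) reduces to the non-existence of a computable embedding $\mathbb{R}_{cf}\hookrightarrow\mathbb{R}$. The key observation is that in $\mathbb{R}_{cf}$ equality to a rational $q$ is decidable (both $x\le q$ and $x\ge q$ are decidable from an $\mathbb{R}_{cf}$-name), so $\{q\}$ is a decidable singleton, whereas from a standard $\mathbb{R}$-name one cannot even semi-decide $x=q$. Hence any embedding's partial inverse would make equality-to-rationals decidable from $\mathbb{R}$-names, contradicting this standard non-computability fact.

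For claim (ii), I will exhibit a concrete example: take $\mathbf{X}=\mathbb{Q}$ with the restriction of $d$, which is patently a computable $\mathbb{R}_{cf}$-valued metric on rationals. Its Cauchy closure is again (isomorphic to) $\mathbb{R}$ by the construction above. If $\overline{d}$ were computable with values in $\mathbb{R}_{cf}$, then in particular $x\mapsto\overline{d}(x,0)=|x|$ would be a computable map $\mathbb{R}\to\mathbb{R}_{cf}$, i.e.\ a full decimal expansion of $|x|$ could be extracted from any fast Cauchy name of $x$ -- again impossible.

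For claim (iii), I unpack Definition \ref{def:rpms}: decidability of both $d(r_i,r_j)\le\nu_\mathbb{Q}(k)$ and $d(r_i,r_j)<\nu_\mathbb{Q}(k)$ is precisely the statement that $(i,j)\mapsto d(r_i,r_j)$ is a computable map $\mathbb{N}^2\to\mathbb{R}_{cf}$, and the nbhd system then presents $\mathcal{X}$ as the Cauchy closure of the discrete dense sequence with $\mathbb{R}_{cf}$-valued metric, matching our $\overline{(\cdot)}^{d,\mathbb{R}_{cf}}$ construction. The main obstacle I expect is making the non-embedding argument in (i) fully rigorous without ad hoc topological reasoning; a fallback is to note that the admissible topology on $\mathbb{R}_{cf}$ is strictly finer than the Euclidean topology at every rational, so no subspace of $\mathbb{R}$ can carry it homeomorphically, ruling out any computable embedding.
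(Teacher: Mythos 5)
Your overall strategy is exactly the paper's: the paper's entire proof is the single assertion $\overline{\mathbb{R}_{cf}}^{d,\mathbb{R}_{cf}} \cong \mathbb{R} \ncong \mathbb{R}_{cf}$, and you correctly make the isomorphism $\overline{\mathbb{R}_{cf}}^{d,\mathbb{R}_{cf}} \cong \mathbb{R}$ explicit (limits of fast Cauchy sequences one way, extraction of rational fast Cauchy sequences with their canonical $\mathbb{R}_{cf}$-names the other way). Your treatments of the second claim (the map $x \mapsto \overline{d}(x,0) = |x|$ would be a computable map $\mathbb{R} \to \mathbb{R}_{cf}$, which fails by the standard ``cannot commit to a digit'' argument) and of the third claim (the decidability of $P^{d,\mathbf{r}}$ and $Q^{d,\mathbf{r}}$ is exactly computability of $(i,j) \mapsto d(r_i,r_j)$ into $\mathbb{R}_{cf}$) are correct and, if anything, more careful than the paper.

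There is, however, a gap in your argument for the first claim if ``does not embed'' is read as ruling out \emph{every} computable embedding $\iota : \mathbb{R}_{cf} \to \mathbb{R}$, which is how you set it up. You argue that the computable partial inverse $\iota^{-1}$ would make ``equality to a rational'' decidable from $\mathbb{R}$-names. But $\iota^{-1}$ is only defined on the image $S = \iota[\mathbb{R}]$, and what you actually obtain is decidability of $x = \iota(q)$ for $x$ ranging over $S$; this is not contradictory on its own --- it merely says that $\iota(q)$ is an isolated point of the subspace $S$, which a countable set of points of $S$ is perfectly entitled to be. The ``standard non-computability fact'' you invoke concerns names of reals ranging over a set in which the test point is not isolated, so it does not apply directly. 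Your topological fallback has the same issue: the final topology of $\mathbb{R}_{cf}$ has a countable dense set of isolated points, and showing that no uncountable subspace of $\mathbb{R}$ can realize this together with the behaviour at the non-isolated points requires a genuine argument (and, since $\mathbb{R}_{cf}$ is not admissible, possibly a realizer-level rather than purely topological one). Note that the paper's own justification only supports the weaker reading that the \emph{canonical} map $\mathbb{R}_{cf} \to \overline{\mathbb{R}_{cf}}^{d,\mathbb{R}_{cf}}$ is not an embedding, since its inverse would be $\id : \mathbb{R} \to \mathbb{R}_{cf}$; under that reading your decidability argument is exactly right and the gap disappears. Either restrict the claim to the canonical embedding, or supply a genuine argument against arbitrary embeddings.
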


It is not the case, however, that the space $\mathbb{R}$ would be the only space usable in place of $\mathbf{R}$ when defining the Cauchy-closure to obtain an embedding of a space into its completion. One other example is $\mathbb{R}'$, the jump\footnote{The jump of a represented space is discussed in \cite{ziegler2,gherardi4,pauly-descriptive,pauly-descriptive-lics}.} of $\mathbb{R}$.

\section{Representations of point classes}
With a correspondence of the spaces, the continuous/computable functions and the open sets in place, we shall conclude this paper by considering higher-order classes of sets (typically called pointclasses), such as $\bolds_n^0$-sets ($n > 1$), Borel sets or analytic sets. These have traditionally received little attention in the computable analysis community, with the exception of \cite{brattka} by \name{Brattka} and \cite{selivanov5} by \name{Selivanov}. One reason for this presumably was the focus on admissible representations, i.e.~spaces carrying a topology -- and the natural representations of these classes of sets generally fail to be admissible. The ongoing development of synthetic descriptive set theory does provide representations of all the natural pointclasses.

In descriptive set theory the usual representation of pointclasses is through universal sets and good universal systems. Let $\pointcl$ be a pointclass, and $\mathbf{Z}$, $\mathbf{X}$ two spaces\footnote{Usually the spaces involved would be restricted to Polish spaces. However, the formalism is useful for us in a more general setting.}. For any $P \subseteq \mathbf{Z} \times\mathbf{X}$ and $z \in \mathbf{Z}$, we write $P_z := \{x \in \mathbf{X} \mid (z,x) \in P\}$. We write $\pointcl \upharpoonright \mathbf{X}$ for all the $\pointcl$-subsets of $\mathbf{X}$. Now we call $G \in \pointcl \upharpoonright (\mathbf{Z} \times \mathbf{X})$ a $\mathbf{Z}$-universal set for $\pointcl$ and $\mathbf{X}$ iff $\{G_z \mid z \in \mathbf{Z}\} = \pointcl \upharpoonright \mathbf{X}$.

If $\mathbf{Z} = (Z, \delta_\mathbf{Z})$ is a represented space and $G$ a $\mathbf{Z}$-universal set for $\pointcl$ and $\mathbf{X}$, then we obtain a representation $\gamma_{G}$ of $\pointcl \upharpoonright \mathbf{X}$ via $\gamma_{G}(p) = G_{\delta_\mathbf{Z}(p)}$. In this situation, we can safely assume that $\mathbf{Z} \subseteq \Baire$, and replace it by $(\dom(\delta_\mathbf{Z}),\id_{\dom(\delta_\mathbf{Z})})$ otherwise.

A $\mathbf{Z}$-universal system for $\pointcl$ is an assignment $(G^\mathbf{X})_{\mathbf{X}}$ of a $\mathbf{Z}$-universal set for $\pointcl$ and $\mathbf{X}$ for each Polish space $\mathbf{X}$. If $\mathbf{Z} = \Baire$, we suppress the explicit reference to $\mathbf{Z}$. A universal system $(G^\mathbf{X})_{\mathbf{X}}$ is \emph{good}, if for any space $\mathbf{Y}$ of the form $\mathbf{Y} = \mathbb{N}^l \times (\Baire)^k$ with $l,k \geq 0$ and any Polish space $\mathbf{X}$ there is a continuous function $S^{\mathbf{Y},\mathbf{X}} : \Baire \times \mathbf{Y} \to \Baire$ such that $(z, y, x) \in G^{\mathbf{Y} \times \mathbf{X}}\Leftrightarrow (S(z,y),x) \in G^{\mathbf{X}}$.

\newcommand{\vg}[1]{}

\emph{Comment.} In this section we consider the classical pointclasses e.g.~$\bolds^0_n$ rather than the corresponding ones $\Sigma^0_n$ in effective descriptive set theory. The classical pointclasses are also known as boldface pointclasses, because they typically arise from the effective (or else lightface) pointclasses through the process of ``boldification'', (see comments preceding 3H.1 in \cite{moschovakis}). To be more precise for every pointclass $\Gamma$ of sets in Polish spaces one defines the corresponding \emph{boldface} pointclass $\boldg$ as follows: a set $P \subseteq \mathcal{X}$, where $\mathcal{X}$ is Polish, belongs to $\boldg$ if there exists some $Q \subseteq \Baire \times \mathcal{X}$ in $\Gamma$ and some $\varepsilon \in \Baire$ such that $P$ is the $\varepsilon$-section of $Q$,
\[
P = \{x \in \mathcal{X} \mid (\varepsilon,x) \in Q\}.
\]
It is well-known that the boldface pointclasses constructed by the lightface $\Sigma^0_n, \Sigma^1_n$ are the classical Borel pointclasses $\bolds^0_n$ and $\bolds^1_n$ respectively. In fact the sets belonging to the effective pointclasses $\Gamma$ are all $\varepsilon$-sections of a set in $\Gamma$ for some recursive $\varepsilon \in \Baire$ (see 3H.1 in \cite{moschovakis}), hence the effective notion can always be recovered by the classical one\footnote{It is also worth pointing out that the effective hierarchy of lightface $\Sigma^0_n$ pointclasses can be extended transfinitely to recursive ordinals $\xi$, but it is still not known if the corresponding boldface pointclass of $\Sigma^0_\xi$ is actually the classical Borel pointclass $\bolds^0_\xi$. We nevertheless keep the latter notation for the classical Borel pointclasses with the danger of abusing the notation.}.

\begin{observation}
Let $\gamma_H$ be a representation of $\pointcl \upharpoonright \mathbf{X}$ obtained from the universal set $H$. Further let $(G^{\mathbf{X}})_{\mathbf{X}}$ be a good universal system, and let $\gamma_G$ be the induced representation of $\pointcl \upharpoonright \mathbf{X}$. Then $\id : (\pointcl \upharpoonright \mathbf{X}, \gamma_H) \to (\pointcl \upharpoonright \mathbf{X},\gamma_G)$ is continuous\footnote{This is continuity in the sense of represented spaces, generally not continuity in a topological setting.}.
\begin{proof}
By assumption, $H \in \pointcl \upharpoonright (\Baire \times \mathbf{X})$. Hence, there is some $h \in \Baire$ such that $G^{\Baire \times \mathbf{X}}_h = H$. Now $p \mapsto S^{\Baire,\mathbf{X}}(h,p)$ is a continuous realizer of $\id$.
\end{proof}
\end{observation}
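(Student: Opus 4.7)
The plan is to exhibit a continuous function $F : \Baire \to \Baire$ realizing $\id$, that is, satisfying $\gamma_G(F(p)) = \gamma_H(p)$ for every $p \in \dom(\gamma_H)$. Unwinding the representations, this amounts to arranging $G^{\mathbf{X}}_{F(p)} = H_{\delta_{\mathbf{Z}}(p)}$.

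First I would exploit universality of $(G^{\mathbf{Y}})_{\mathbf{Y}}$ at the product space $\mathbf{Y} = \Baire \times \mathbf{X}$. Using the reduction mentioned in the paragraph preceding the observation, we may assume $\mathbf{Z} \subseteq \Baire$ with the identity representation, so that $H$ is a $\pointcl$-subset of $\Baire \times \mathbf{X}$. Universality of $G^{\Baire \times \mathbf{X}}$ then supplies a fixed parameter $h \in \Baire$ with $G^{\Baire \times \mathbf{X}}_h = H$. This $h$ is the ``code'' of the whole universal set $H$ inside the richer universal set $G^{\Baire \times \mathbf{X}}$.

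Next I invoke the good-universality property to flatten the extra $\Baire$-coordinate. It provides a continuous $S^{\Baire,\mathbf{X}} : \Baire \times \Baire \to \Baire$ such that $(z,y,x) \in G^{\Baire \times \mathbf{X}}$ iff $(S^{\Baire,\mathbf{X}}(z,y),x) \in G^{\mathbf{X}}$. Specializing $z := h$ gives $H_y = G^{\mathbf{X}}_{S^{\Baire,\mathbf{X}}(h,y)}$ for all $y \in \Baire$. Consequently $F(p) := S^{\Baire,\mathbf{X}}(h,p)$ is continuous (it is a section of a continuous two-variable map through the constant first coordinate $h$), and by the computation just displayed it sends a $\gamma_H$-name of a set to a $\gamma_G$-name of the same set.

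The only delicate point is the preliminary step: to apply universality of $G^{\Baire \times \mathbf{X}}$ at $H$ we must view $H$ as a $\pointcl$-subset of $\Baire \times \mathbf{X}$, not merely of $\mathbf{Z} \times \mathbf{X}$. The author's identification $\mathbf{Z} = (\dom(\delta_{\mathbf{Z}}),\id_{\dom(\delta_{\mathbf{Z}})}) \subseteq \Baire$ is precisely what makes this step automatic, so I expect no substantive obstacle once that convention is in force; the remainder is a direct unwinding of the definition of a good universal system.
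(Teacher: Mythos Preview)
Your proposal is correct and follows exactly the same line as the paper's proof: obtain $h \in \Baire$ with $G^{\Baire \times \mathbf{X}}_h = H$ by universality, then use the section map $p \mapsto S^{\Baire,\mathbf{X}}(h,p)$ from the good-universality property as the continuous realizer. Your explicit discussion of why $H$ may be regarded as a $\pointcl$-subset of $\Baire \times \mathbf{X}$ (via the convention $\mathbf{Z} \subseteq \Baire$) simply makes explicit what the paper's proof takes for granted in its opening ``By assumption''.
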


As such, we see that the representations obtained from good universal system for some fixed pointclass are the weakest one (w.r.t.~continuous reducibilities) among those obtained from universal systems in general. Consequently the particular choice of a good universal system can only ever matter for computability considerations, but not for continuity.

We can now contrast the approach to representations of pointclasses via good universal system with the approach via function spaces and \Sierp-like spaces underlying \cite{pauly-descriptive,pauly-descriptive-lics}. A \Sierp-like space is a represented space $\mathbf{S}$ with underlying set $\{\top, \bot\}$ - no assumptions on the representation are made. Any such space $\mathbf{S}$ induces a pointclass $\mathcal{S}$ over the represented spaces via $U \in \mathcal{S} \upharpoonright \mathbf{X}$ iff $\chi_U : \mathbf{X} \to \mathbf{S}$ is continuous (computable), where $\chi_U(x) = \top$ iff $x \in U$. Note that this approach simultaneously provides for the effective and the classical version of $\mathcal{S}$. This pointclass comes with a represented space $\mathcal{S}(\mathbf{X})$ via the function space constructor $\mathcal{C}(-,-)$ and identification of a set and its characteristic function.

By the properties of the function space construction, we see that $\mathalpha{\ni} : \mathcal{S}(\mathbf{X}) \times \mathbf{X} \to \mathbf{S}$ is computable, which immediately implies that we may interpret $\mathalpha{\ni}$ as a $\mathcal{S}$-subset of $\mathcal{S}(\mathbf{X}) \times \mathbf{X}$. Thus, any representation of a fixed slice $\mathcal{S} \upharpoonright \mathbf{X}$ arises from some $\mathcal{S}(\mathbf{X})$-universal set. By moving along the representation, we may replace $\mathcal{S}(\mathbf{X})$ with some suitable $\mathbf{Z} \subseteq \Baire$ here.

Next, we may relax the requirements for $\mathbf{Z}$-universal systems for $\pointcl$ to allow $\mathbf{Z}$ to vary as $\mathbf{Z}_\mathbf{X}$
with the space $\mathbf{X}$, and will also let $\mathbf{X}$ range over all represented spaces, rather than just Polish spaces. The resulting notion shall be called a generalized universal system. Such a system $(\mathbf{Z}_\mathbf{X}, G^{\mathbf{X}})_{\mathbf{X}}$ is good, if for any represented spaces $\mathbf{Y}$, $\mathbf{X}$ there is a continuous function $S^{\mathbf{Y},\mathbf{X}} : \mathbf{Z}_{\mathbf{X} \times \mathbf{Y}} \times \mathbf{Y} \to \mathbf{Z}_\mathbf{X}$ such that $(z, y, x) \in G^{\mathbf{Y} \times \mathbf{X}}\Leftrightarrow (S(z,y),x) \in G^{\mathbf{X}}$.

\begin{observation}
Let the generalized universal system $(\mathbf{Z}_\mathbf{X}, G^{\mathbf{X}})_{\mathbf{X}}$ be obtained from the \Sierp-like space $\mathbf{S}$. Then it is good.
\begin{proof}
$S^{\mathbf{Y},\mathbf{X}} : \mathcal{C}(\mathbf{X} \times \mathbf{Y}, \mathbf{S}) \times \mathbf{Y} \to \mathcal{C}(\mathbf{X},\mathbf{S})$ is realized via partial function application.
\end{proof}
\end{observation}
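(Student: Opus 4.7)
The plan is to unpack what the induced generalized universal system actually is, and then show that the required map $S^{\mathbf{Y},\mathbf{X}}$ is just partial application, which is automatically continuous by the cartesian closed structure on represented spaces.

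First I would spell out the construction: given a \Sierp-like space $\mathbf{S}$, the induced pointclass $\mathcal{S}$ has the property that each slice $\mathcal{S} \upharpoonright \mathbf{X}$ corresponds (via characteristic functions) to $\mathcal{C}(\mathbf{X}, \mathbf{S})$. So the natural choice of indexing space is $\mathbf{Z}_\mathbf{X} := \mathcal{C}(\mathbf{X}, \mathbf{S})$, and the universal set is $G^\mathbf{X} := \{(z,x) \in \mathcal{C}(\mathbf{X},\mathbf{S}) \times \mathbf{X} \mid z(x) = \top\}$, i.e.\ the membership relation $\mathalpha{\ni}$, whose $\mathcal{S}$-measurability was already noted just before the observation (it is the characteristic function of evaluation, which is computable).

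Next I would define $S^{\mathbf{Y},\mathbf{X}} : \mathcal{C}(\mathbf{X} \times \mathbf{Y}, \mathbf{S}) \times \mathbf{Y} \to \mathcal{C}(\mathbf{X}, \mathbf{S})$ by $S^{\mathbf{Y},\mathbf{X}}(f,y)(x) := f(x,y)$, that is, partial evaluation of $f$ in its second argument. The continuity (indeed computability) of this map is a direct consequence of cartesian closedness: it is the transpose of the composition $(f,x,y) \mapsto f(x,y)$, which factors through the computable evaluation map $\mathrm{ev} : \mathcal{C}(\mathbf{X} \times \mathbf{Y}, \mathbf{S}) \times (\mathbf{X} \times \mathbf{Y}) \to \mathbf{S}$ after a trivial rearrangement of coordinates.

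Finally I would verify the required identity, which is immediate from the definitions: for any $(z,y,x) \in \mathcal{C}(\mathbf{X}\times\mathbf{Y},\mathbf{S}) \times \mathbf{Y} \times \mathbf{X}$ one has
\[
(z,(y,x)) \in G^{\mathbf{Y} \times \mathbf{X}} \iff z(x,y) = \top \iff S^{\mathbf{Y},\mathbf{X}}(z,y)(x) = \top \iff (S^{\mathbf{Y},\mathbf{X}}(z,y), x) \in G^{\mathbf{X}}.
\]
There is no real obstacle here; the only thing to watch is matching conventions on the order of the product $\mathbf{X} \times \mathbf{Y}$ versus $\mathbf{Y} \times \mathbf{X}$ used in the definition of a good generalized universal system, which is handled by inserting the obvious computable swap. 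The whole argument is a direct application of cartesian closedness, with the $\mathcal{S}$-measurability of $G^\mathbf{X}$ being exactly the previously observed fact that $\mathalpha{\ni} : \mathcal{S}(\mathbf{X}) \times \mathbf{X} \to \mathbf{S}$ is computable.
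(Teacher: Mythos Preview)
Your proposal is correct and follows exactly the paper's approach: the paper's one-line proof simply states that $S^{\mathbf{Y},\mathbf{X}}$ is realized via partial function application, and you have spelled out precisely this, together with the routine verification of the defining equivalence. Your remark about the coordinate swap is a harmless bookkeeping detail that the paper leaves implicit.
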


For most natural choices of a \Sierp-like space $\mathbf{S}$, we may actually replace the occurrence of $\mathcal{C}(\mathbf{X},\mathbf{S})$ in the induced generalized universal system by $\Baire$ again, thus closing the distance between the two approaches. We recall from \cite{kreitz} that a representation $\delta : \subseteq \Baire \to \mathbf{X}$ is called precomplete, if for any computable partial $F : \subseteq \Baire \to \Baire$ there is a computable total $\overline{F} : \Baire \to \Baire$ such that $\delta \circ F(p) = \delta \circ \overline{F}(p)$ for all $p \in \dom(\delta \circ F(p))$. Now note that if $\mathbf{S}$ admits a precomplete representation, then $\mathcal{C}(\mathbf{X}, \mathbf{S})$ admits a total representation for any $\mathbf{X}$. Subsequently, we note:

\begin{observation}
Let the \Sierp-like space $\mathbf{S}$ admit a precomplete representation. Then it induces a pointclass $\mathcal{S}$ together with a good universal system.
\end{observation}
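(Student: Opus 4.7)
The plan is to combine the previous observation (that the generalized universal system arising from $\mathbf{S}$ is good) with the remark immediately preceding the statement: if $\mathbf{S}$ admits a precomplete representation, then for every represented space $\mathbf{X}$ the space $\mathcal{C}(\mathbf{X},\mathbf{S})$ admits a total representation $\delta_{\mathbf{X}\to\mathbf{S}} : \Baire \to \mathcal{C}(\mathbf{X},\mathbf{S})$. This total representation lets us replace each $\mathbf{Z}_\mathbf{X} = \mathcal{C}(\mathbf{X},\mathbf{S})$ in the generalized system by $\Baire$.

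Concretely, I would define $G^\mathbf{X} \subseteq \Baire \times \mathbf{X}$ by $(p,x) \in G^\mathbf{X} \Longleftrightarrow x \in \delta_{\mathbf{X}\to\mathbf{S}}(p)$, i.e., by precomposing evaluation with $\delta_{\mathbf{X}\to\mathbf{S}} \times \id_\mathbf{X}$. Since the evaluation map $\mathalpha{\ni} : \mathcal{C}(\mathbf{X},\mathbf{S}) \times \mathbf{X} \to \mathbf{S}$ is computable and $\delta_{\mathbf{X}\to\mathbf{S}}$ is computable (being a representation), the characteristic function of $G^\mathbf{X}$ is a computable map $\Baire \times \mathbf{X} \to \mathbf{S}$, so $G^\mathbf{X}$ lies in $\mathcal{S} \upharpoonright (\Baire \times \mathbf{X})$. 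For universality, any $U \in \mathcal{S}\upharpoonright \mathbf{X}$ is encoded by its characteristic function $\chi_U \in \mathcal{C}(\mathbf{X},\mathbf{S})$; totality of $\delta_{\mathbf{X}\to\mathbf{S}}$ yields some $p \in \Baire$ with $\delta_{\mathbf{X}\to\mathbf{S}}(p) = \chi_U$ (identified with $U$), so $G^\mathbf{X}_p = U$.

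For goodness, I would pull back the generalized system's $S^{\mathbf{Y},\mathbf{X}} : \mathcal{C}(\mathbf{X}\times\mathbf{Y},\mathbf{S}) \times \mathbf{Y} \to \mathcal{C}(\mathbf{X},\mathbf{S})$ (continuity of which is the content of the previous observation, realized by partial application) along the two total representations. Explicitly, fix a continuous realizer $\tilde S : \Baire \times \Baire \to \Baire$ of $S^{\mathbf{Y},\mathbf{X}}$ with respect to $\delta_{\mathbf{X}\times\mathbf{Y} \to \mathbf{S}}$ and $\delta_{\mathbf{X}\to\mathbf{S}}$; because these representations are total, $\tilde S$ is total, and postcomposing with the canonical representation of $\mathbf{Y} = \mathbb{N}^l \times \Baire^k$ gives a continuous map $\Baire \times \mathbf{Y} \to \Baire$ serving as the required $S^{\mathbf{Y},\mathbf{X}}$ for the (non-generalized) good universal system. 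The relation $(p,y,x) \in G^{\mathbf{X}\times\mathbf{Y}} \Longleftrightarrow (S^{\mathbf{Y},\mathbf{X}}(p,y),x) \in G^{\mathbf{X}}$ is then a direct translation of the analogous identity in the generalized setting through the representations.

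I expect the only delicate step to be bookkeeping around the total representations: one must be careful that $\delta_{\mathbf{X}\to\mathbf{S}}$ is built via precompleteness in a way compatible with partial application (so that the realizer of partial application really is total $\Baire \times \Baire \to \Baire$), but this is standard for precomplete representations — any computable partial map between them extends to a computable total map with the same behaviour modulo the representation, which is exactly what is required. Everything else is a routine unraveling of the definitions.
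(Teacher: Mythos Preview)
Your proposal is correct and follows exactly the approach the paper intends: the paper states this observation without explicit proof, relying on the preceding remark that precompleteness of $\mathbf{S}$ yields a total representation of each $\mathcal{C}(\mathbf{X},\mathbf{S})$, together with the previous observation that the generalized universal system induced by $\mathbf{S}$ is good (via partial application). Your write-up simply fills in the bookkeeping the paper leaves implicit, and the only delicate point you flag---that the realizer of partial application can be taken total because the representations involved are total---is precisely the content of the precompleteness hypothesis.
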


We will now explore which pointclasses on Polish spaces are obtainable from some \Sierp-like space. First, note that any such class $\mathcal{S}$ is closed under taking preimages under continuous functions. Then, for any Polish space $\mathbf{X}$ and total representation $\delta : \Baire \to \mathbf{X}$, we observe that $A \in \mathcal{S} \upharpoonright \mathbf{X}$ iff $\delta^{-1}(A) \in \mathcal{S} \upharpoonright \Baire$. Generally, we shall call any pointclass satisfying this property for all total admissible representations of Polish spaces to be \emph{$\Baire$-determined}.

\begin{proposition}
Let $\pointcl$ be $\Baire$-determined, closed under continuous preimages and admit a good universal system. Then there is some \Sierp-like space $\mathbf{S}$ with $\pointcl = \mathcal{S}$.
\begin{proof}
Let $G \subseteq \Baire \times \Baire$ be a universal set for $\pointcl$ and $\Baire$. We define a representation $\delta_G : \Baire \to \{\top,\bot\}$ via $\delta_G(\langle p,q\rangle) = \top$ iff $(p, q) \in G$. Let the resulting space be $\mathbf{S}$. We claim that the pointclass induced by $\mathbf{S}$ coincides with $\pointcl$.

Let $A \in \pointcl \upharpoonright \mathbf{X}$. Then $\delta_\mathbf{X}^{-1}(A) \in \pointcl \upharpoonright \Baire$. Thus, there is some $a \in \Baire$ with $q \in A \Leftrightarrow (a, q) \in G$. Now $q \mapsto \langle a, q\rangle$ is a continuous realizer of $\chi_A : \mathbf{X} \to \mathbf{S}$.

Conversely, assume $\chi_A \in \mathcal{C}(\mathbf{X},\mathbf{S})$. Let $c_A : \Baire \to \Baire$ be a continuous realizer of $\chi_A$. Note $\left ((\pi_1, \pi_2) \circ c_A\right )^{-1}(G) = \delta_\mathbf{X}^{-1}(A)$. The left hand side of this equation shows that the set is in $\pointcl$, as $\pointcl$ is closed under continuous preimages. The right hand side then implies that $A \in \pointcl \upharpoonright \mathbf{X}$, as $\pointcl$ is $\Baire$-determined.
\end{proof}
\end{proposition}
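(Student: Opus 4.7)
The plan is to extract a \Sierp-like space directly from a universal set for $\pointcl$ at the space $\Baire$, and then use the two stated closure properties of $\pointcl$ to identify membership in $\mathcal{S}\upharpoonright \mathbf{X}$ with membership in $\pointcl \upharpoonright \mathbf{X}$ for every represented space $\mathbf{X}$ that carries an admissible total representation of a Polish space (this is where $\Baire$-determinedness is used).

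Concretely, I would fix a $\Baire$-universal set $G \subseteq \Baire \times \Baire$ provided by the good universal system for $\pointcl$. I would then define a representation $\delta_G :\Baire \to \{\top,\bot\}$ of a \Sierp-like space $\mathbf{S}$ by pairing, for instance $\delta_G(\langle p,q\rangle) = \top$ iff $(p,q) \in G$, and $\delta_G(r) = \bot$ otherwise. The \Sierp-like space $\mathbf{S}$ so defined induces a pointclass $\mathcal{S}$ in the sense already introduced, and the claim to verify is $\mathcal{S} = \pointcl$.

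For the inclusion $\pointcl \subseteq \mathcal{S}$, start with $A \in \pointcl \upharpoonright \mathbf{X}$ and apply $\Baire$-determinedness to conclude $\delta_\mathbf{X}^{-1}(A) \in \pointcl \upharpoonright \Baire$. Universality of $G$ then yields some $a \in \Baire$ with $q \in \delta_\mathbf{X}^{-1}(A) \Leftrightarrow (a,q) \in G$, and the map $q \mapsto \langle a,q\rangle$ is a continuous realizer of $\chi_A : \mathbf{X} \to \mathbf{S}$, so $A \in \mathcal{S} \upharpoonright \mathbf{X}$. For the reverse inclusion, suppose $\chi_A : \mathbf{X} \to \mathbf{S}$ has a continuous realizer $c_A : \Baire \to \Baire$. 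Composing $c_A$ with the unpairing maps yields a continuous map into $\Baire \times \Baire$ whose preimage of $G$ is precisely $\delta_\mathbf{X}^{-1}(A)$; closure of $\pointcl$ under continuous preimages then places $\delta_\mathbf{X}^{-1}(A)$ in $\pointcl \upharpoonright \Baire$, and invoking $\Baire$-determinedness once more delivers $A \in \pointcl \upharpoonright \mathbf{X}$.

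The only real subtlety, and the step I would expect to require the most care, is making sure the $\Baire$-determinedness hypothesis is applied at the right granularity: the definition speaks of total admissible representations of Polish spaces, so one must either restrict attention to that class of spaces or argue that the pointclass $\mathcal{S}$ induced by $\mathbf{S}$ is automatically $\Baire$-determined for the same reason (continuous preimage stability combined with the fact that $\delta_\mathbf{X}$ is a continuous surjection from $\Baire$). Beyond this, the argument is essentially an unpacking of definitions, and no further ingredient is needed.
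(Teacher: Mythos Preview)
Your proposal is correct and follows essentially the same approach as the paper: define $\mathbf{S}$ from the universal set $G$ via $\delta_G(\langle p,q\rangle) = \top \Leftrightarrow (p,q) \in G$, then verify both inclusions exactly as you describe, using universality for $\pointcl \subseteq \mathcal{S}$ and closure under continuous preimages together with $\Baire$-determinedness for $\mathcal{S} \subseteq \pointcl$. Your remark about the scope of $\Baire$-determinedness is apt; the paper leaves this implicit as well.
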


\begin{proposition}
\label{prop:bairedetermined}
Let $\pointcl$ be a pointclass.
\begin{enumerate}
\item If $\pointcl$ is $\Baire$-determined, then so are $\pointcl^\mathfrak{C}$ and $\pointcl \cap \pointcl^\mathfrak{C}$; where $\pointcl^\mathfrak{C} := \{A^C \mid A \in \pointcl\}$.
\item For countable ordinals $\alpha$, $\bolds^0_\alpha$ is $\Baire$-determined.
\item $\bolds^1_n$ is $\Baire$-determined, $n \geq 1$.
\end{enumerate}
\begin{proof}
\begin{enumerate}
\item Just observe that $\delta^{-1}(A^C) = (\delta^{-1}(A))^C$.
\item This is a result by \name{Saint Raymond} \cite{SaintRaymond} (cf.~\cite{debrecht4})
\item Let $\mathbf{X}$ be a Polish space and $A \subseteq \mathbf{X}$ be $\bolds^1_n$. Since the continuous preimage of a $\bolds^1_n$ set is $\bolds^1_n$, and $\delta$ is continuous it follows that $\delta^{-1}[A] \in \bolds^1_n \upharpoonright \Baire$. Conversely using that as a representation, $\delta$ is surjective, we have that $A = \delta[\delta^{-1}[A]]$. So if $\delta^{-1}[A]$ is a $\bolds^1_n$ subset of $\Baire$ it follows from the closure of $\bolds^1_n$ under continuous images that $A \in \bolds^1_n \upharpoonright \mathbf{X}$.
\end{enumerate}
\end{proof}
\end{proposition}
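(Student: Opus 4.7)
The plan is to address the three parts separately, since they require rather different inputs. Throughout, let $\delta : \Baire \to \mathbf{X}$ denote an arbitrary total admissible representation of a Polish space $\mathbf{X}$.

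For part (1), the argument is purely set-theoretic: for any map $\delta$ one has $\delta^{-1}(A^C) = (\delta^{-1}(A))^C$ and $\delta^{-1}(A_1 \cap A_2) = \delta^{-1}(A_1) \cap \delta^{-1}(A_2)$. Hence $A \in \pointcl^{\mathfrak C} \upharpoonright \mathbf{X}$ iff $A^C \in \pointcl \upharpoonright \mathbf{X}$ iff (by $\Baire$-determinedness of $\pointcl$) $\delta^{-1}(A^C) \in \pointcl \upharpoonright \Baire$ iff $\delta^{-1}(A) \in \pointcl^{\mathfrak C} \upharpoonright \Baire$. Applying this together with the same equivalence for $\pointcl$ itself settles $\pointcl \cap \pointcl^{\mathfrak C}$.

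For part (3), I would use continuous preimages in one direction and continuous images in the other. The forward direction is immediate: $\delta$ is continuous and $\bolds^1_n$ is closed under continuous preimages, so $\delta^{-1}(A) \in \bolds^1_n \upharpoonright \Baire$ whenever $A \in \bolds^1_n \upharpoonright \mathbf{X}$. For the converse, recall that every $\bolds^1_n$ set is by definition a continuous image of a $\boldp^1_{n-1}$ set (taking $\boldp^1_0$ to mean Borel for $n=1$); thus $\bolds^1_n$ is closed under continuous images, because composing with a continuous map just changes the outer continuous surjection. Since $\delta$ is surjective, $A = \delta(\delta^{-1}(A))$, so $A \in \bolds^1_n \upharpoonright \mathbf{X}$ as soon as $\delta^{-1}(A) \in \bolds^1_n \upharpoonright \Baire$. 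This is exactly the argument already recorded by the authors in the statement.

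Part (2) is the main obstacle, and I do not intend to reprove it; it is the content of a theorem of Saint Raymond \cite{SaintRaymond}, later reworked for the Quasi-Polish setting by de Brecht \cite{debrecht4}. The forward direction is trivial (continuous preimages preserve Borel rank), but the converse is subtle: a general continuous surjection between Polish spaces can raise Borel rank under direct images. What saves us is the specific structure of total admissible representations of Polish spaces, which behave as sufficiently nice quotient maps; Saint Raymond's theorem shows that for precisely these maps the Borel rank is preserved under images, by an induction on $\alpha$ using a selection/extraction device. I would simply invoke the cited theorem, note that total admissible representations of Polish spaces satisfy its hypotheses, and conclude that $A \in \bolds^0_\alpha \upharpoonright \mathbf{X}$ iff $\delta^{-1}(A) \in \bolds^0_\alpha \upharpoonright \Baire$ for every countable ordinal $\alpha$.
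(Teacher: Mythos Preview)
Your proposal is correct and follows essentially the same approach as the paper: the same set-theoretic identity for part (1), the same appeal to Saint Raymond's theorem for part (2), and the same continuous-preimage/continuous-image argument exploiting surjectivity of $\delta$ for part (3). Your added remarks explaining why part (2) genuinely requires Saint Raymond's result (rather than a naive image argument) are helpful context but do not alter the underlying strategy.
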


{\bf Conclusion:} The approaches to continuity and computability for $\bolds^0_\alpha$ and $\bolds^1_1$ from effective descriptive set theory and synthetic descriptive set theory coincide.

\quad

A very important pointclass not yet proven to receive equivalent treatment are the Borel sets $\BC$, alternatively $\boldd_1^1$ by Suslin's theorem (e.g.~\cite{moschovakis2}). There cannot be any $\Baire$-universal Borel sets\footnote{Any such set would fall into $\bolds^0_\alpha$ for some countable ordinal $\alpha$, but then cannot have any set $A \in \bolds_{\alpha+1}^0 \setminus \bolds_{\alpha}^0$ as a section.} -- however, there are $\mathbf{B}$-universal sets for $\boldd_1^1$, with non-Polish $\mathbf{B}$. Such a set can be obtained from the Borel codes used in effective descriptive set theory. We currently cannot prove uniform equivalence of the two approaches for Borel sets on arbitrary Polish spaces, as this would require a uniform version of \name{Saint Raymond}'s result in \cite{SaintRaymond}\footnote{For our purposes, this result is that if $\delta$ is a standard represention of a computable Polish space $\mathbf{X}$, and $A$ is a Borel subset of Baire, then $\delta[A]$ is a Borel subset of $\mathbf{X}$. A uniform version would allow us to compute a Borel code for $\delta[A]$ from a Borel code for $A$.}. Thus, we first provide a non-uniform treatment of Borel sets on arbitrary Polish spaces, and then a uniform treatment of Borel subsets of $\Baire$.

\begin{definition}(\cite{moschovakis} 3H)
The set of \emph{Borel codes} $\bcode \subseteq \Baire$ is defined by recursion as follows
\begin{align*}
p \in \bcode_0 &\iff p(0) = 0\\
p \in \bcode_\alpha &\iff p = 1\langle p_0, p_1, \ldots, \rangle \ \& \ (\forall n)(\exists \beta < \alpha)[p_n \in \bcode_\beta]\\
\bcode = \cup_{\alpha} \bcode_\alpha& \hspace{2mm} \textrm{for all countable ordinals $\alpha$.}
\end{align*}
\end{definition}
With an easy induction one can see that $\bcode_\alpha \subseteq \bcode_\beta$ for all $\alpha < \beta$ and that $\bcode_\alpha$ is a Borel set.

For all $p \in \bcode$ we denote by $|p|$ the least ordinal $\alpha$ such that $p \in \bcode_\alpha$ (\footnote{This essentially provides a representation of the space of all countable ordinals. This idea is investigated in some detail in \cite{pauly-ordinals-arxiv,pauly-ordinals-mfcs}.}). It is not hard to verify that
\[
|1\langle p_0,p_1,\ldots\rangle| = \sup_{n \in \mathbb{N}} |p_n| + 1
\]
Let $\mathbf{X}$ be a Polish space, and $\delta_\mathcal{O} : \Baire \to \mathcal{O}(\mathbf{X})$ a standard representation of its open sets. For some subset $A \subseteq \mathbf{X}$, let $A^C$ denote its complement $\mathbf{X} \setminus A$. For all countable ordinals $\alpha$ we define the function $\bcodefun_\alpha^{\mathbf{X}}: \bcode_\alpha \to \BC \upharpoonright \mathbf{X}$ recursively by
\begin{align*}
\bcodefun_0^\mathbf{X}(0p) =& \delta_\mathcal{O}(p)\\
\bcodefun_\alpha^\mathbf{X}(1\langle p_0, p_1, \ldots\rangle) =& \bcodefun_{|\bigcup_{n}p_n|}^{\mathbf{X}}\left(\bigcup_{n}p_n\right)^C.
\end{align*}
An easy induction shows that the function $\bcodefun_\alpha^\mathbf{X}$ is onto $\bolds^0_\alpha \upharpoonright \mathbf{X}$, and that $\bcodefun_\beta^\mathbf{X} \upharpoonright \bcode_\alpha = \bcodefun_\alpha^\mathbf{X}$ for all $\alpha < \beta$. So one can define the \emph{Borel coding} $\bcodefun^{\mathbf{X}}: \bcode \to \BC \upharpoonright \mathbf{X}$ by
\[
\bcodefun^{\mathbf{X}}(p) = \bcodefun_{|p|}^{\mathbf{X}}(p).
\]
so that the family $\bolds^0_\alpha \upharpoonright \mathbf{X}$ is exactly the family of all $\bcodefun^{\mathbf{X}}(p)$ for $p \in \bcode_\alpha$, in particular a set $A \subseteq \mathbf{X}$ is Borel exactly when $A = \bcodefun^{\mathbf{X}}(p)$ for some $p \in \bcode$.

The following are more or less well-known facts in descriptive set theory:

\begin{lemma}
\label{lemma:dstfacts}
\begin{enumerate}
\item For all countable ordinals $\alpha$ the set $\{p \in \bcode \mid |p| \leq \alpha\}$ is Borel.
\begin{proof}
This is because $\{p \in \bcode \mid |p| \leq \alpha\} = \bcode_\alpha$.
\end{proof}
\item The set $\bcode$ is a $\Pi^1_1$ subset of $\Baire$ and so in particular it is a $\boldp^1_1$ set.
\begin{proof}
The latter is a consequence of 7C.8 in \cite{moschovakis}, since one can see that the set $\bcode$ is the least fixed point of a suitably chosen monotone operation.\smallskip
\end{proof}
\item There exists a $\bolds^1_1$ relation $\leq_{\Sigma} \subseteq \Baire \times \Baire$ such that for all $p \in \bcode$ and all $q \in \Baire$ we have that
\[
[q \in \bcode \ \& \ |q| \leq |p|] \iff q \leq_{\Sigma}  p.\footnote{The idea behind this condition can be found in the notion of $\Gamma$-norms, see \cite{moschovakis} 4B. In the same way, one could also obtain a $\boldp^1_1$-relation with the same property (note that it does not follow that there is a $\boldd_1^1$-relation).}\]
\begin{proof}
Note that $|1\langle q_0, q_1, \ldots\rangle| \leq |1\langle p_0, p_1,\ldots\rangle|$ iff $\exists t \in \Baire$ s.t.~$\forall n \in \mathbb{N} \ |q_n| \leq |p_{t(n)}|$, assuming $q_i, p_i \in \bcode$. Building upon this idea, consider the closed relation $R$ defined as the least fixed point of:
\[R(p,q,\langle t',\langle t_0, t_1, \ldots \rangle\rangle) :\Leftrightarrow q(0) = 0 \vee \left (p = 1\langle p_0,p_1,\ldots\rangle \wedge q = 1\langle q_0,q_1,\ldots\rangle \wedge \forall n \in \mathbb{N} \ R(p_n,q_{t'(n)},t_n)\right )\]
Now $q \leq_\Sigma p :\Leftrightarrow \exists t \in \Baire \ R(p,q,t)$ is a $\bolds^1_1$ relation, and satisfies our criterion.
\end{proof}

\item The set $\bcode$ is not a Borel subset of $\Baire$.
\begin{proof}
We will show that if $\bcode$ were Borel, then the set of well-founded trees would be analytic, which is a contradiction (as shown e.g.~in \cite[Section 11.8]{bruckner}).

Note that a tree\footnote{Here we understand a tree to be a subset of $\mathbb{N}^*$ that is closed under taking prefixes.} $T$ is well-founded iff there exists an assignment $P: T \to \bcode$ such that
for all $u,v \in T$ if $v$ extends $u$ then $|P(v)| < |P(u)|$.

This is easy to see: If $T$ is well-founded then we use bar recursion to get
$P$ such that $|P(u)| = \sup |(un)| + 1$. Conversely if $P$ is such an
assignment and $T$ contained an infinite branch then we would get a strictly
decreasing sequence of ordinals, a contradiction.

Now condition $|P(v)| < |P(u)|$ can be replaced by $S(P(v)) \leq_\Sigma P(u)$,
with $\leq_\Sigma$ as above, and $S$ is a continuous
function such that $|S(q)| = |q|+1$. Thus, we have
\begin{align*}
&T \textnormal{ is well-founded} \ \Leftrightarrow\\
& \hspace*{5mm} \exists P \ \ \forall u, v \in \mathbb{N}^*. u \in T \Rightarrow P(u) \in \bcode \ \text{and} \ v \text{ extends } u \Rightarrow \ S(P(v)) \leq_\Sigma P(u).
\end{align*}
The preceding $P$ varies through the set of all functions from $\mathbb{N}^*$ to $\Baire$, and the latter set is homeomorphic to $\Baire$. If the set $\bcode$ were Borel, then the right-hand side of the preceding equivalence would define a $\bolds^1_1$ set, and hence the set of all well-founded trees would be $\bolds^1_1$, a contradiction.
\end{proof}

\item Let $f : \Baire \to \bcode$ be Borel measurable. Then there is a countable ordinal $\alpha$ such that $\forall p \in \Baire \ |f(p)| \leq \alpha$.
\begin{proof}
If this were not the case we would have that
\[
q \in \bcode \iff (\exists p)[q \leq_\Sigma f(p)],
\]
where $\leq_{\Sigma}$ is as above. Since $f$ is Borel measurable the preceding equivalence would imply that the set $\bcode$ is a $\bolds^1_1$ subset of $\Baire$. Hence from the Suslin Theorem it would follow that $\bcode$ is a Borel set, a contradiction and our claim is proved.
\end{proof}
\end{enumerate}
\end{lemma}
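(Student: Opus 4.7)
The plan is to proceed by contradiction, using the $\bolds^1_1$ relation $\leq_\Sigma$ from item 3 to show that if $f$ is unbounded on $\Baire$, then $\bcode$ itself is $\bolds^1_1$, and thus (by item 2 and Suslin's theorem) Borel, contradicting item 4.

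First I would translate ``no bound'' into cofinality: if no countable $\alpha$ majorizes $|f(p)|$ for all $p$, then $\{|f(p)| \mid p \in \Baire\}$ is cofinal in $\omega_1$, since otherwise its supremum would itself be a countable ordinal bounding the ranks. Under this cofinality assumption, I would then characterize membership in $\bcode$ via $f$ and $\leq_\Sigma$: for any $q \in \Baire$,
\[
q \in \bcode \iff \exists p \in \Baire \ \ q \leq_\Sigma f(p).
\]
The forward direction uses cofinality to pick $p$ with $|q| \leq |f(p)|$ and then invokes item 3 (recall $f(p) \in \bcode$ always, so the premise of the equivalence in item 3 is met). The reverse direction is immediate from item 3, since $q \leq_\Sigma r$ forces $q \in \bcode$.

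The right-hand side is a projection (over $p$) of the relation $(p,q) \mapsto q \leq_\Sigma f(p)$, which is the preimage of the $\bolds^1_1$ set $\leq_\Sigma$ under the Borel-measurable map $(p,q) \mapsto (q,f(p))$. Since $\bolds^1_1$ is closed under Borel preimages and under existential quantification over $\Baire$, this exhibits $\bcode$ as a $\bolds^1_1$ subset of $\Baire$. Combined with item 2, which tells us $\bcode$ is $\boldp^1_1$, Suslin's theorem yields that $\bcode$ is Borel, contradicting item 4 and thus establishing the bound.

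The main technical point, and the step I expect to require the most care, is verifying that the equivalence above is really equivalent to the $\bolds^1_1$ form — that is, checking that ``$q \leq_\Sigma f(p)$'' can legitimately be viewed as a Borel preimage of $\leq_\Sigma$ through $f$, so that the $\bolds^1_1$ class absorbs it. Everything else (cofinality reformulation, the two implications in the displayed equivalence, and the Suslin/contradiction step) is routine given items 2, 3 and 4 of the lemma.
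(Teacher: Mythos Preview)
Your proposal is correct and follows essentially the same route as the paper: assume the ranks $|f(p)|$ are unbounded, use item~3 to express $\bcode$ as $\{q \mid \exists p\; q \leq_\Sigma f(p)\}$, observe this is $\bolds^1_1$ (by closure under Borel preimages and projection), and combine with item~2 and Suslin's theorem to contradict item~4. The paper's proof is terser but structurally identical; your added remarks on cofinality and on why the substitution of $f$ into $\leq_\Sigma$ preserves $\bolds^1_1$ are exactly the details the paper leaves implicit.
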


\begin{definition}
We define the \Sierp -like space $\mathbf{S}_{\BC} = (\{\bot, \top\}, \delta_\BC)$ recursively via
\begin{align*}
\delta_\BC(p) \ \textrm{is defined}& \ \iff \ p \in \bcode\\
\delta_\BC(0p) = & \delta_\mathbb{S}(p)\\
\delta_\BC(1\langle p_0, p_1, \ldots\rangle) = & \bigvee_{i \in \mathbb{N}} \neg \delta_\BC(p_i).
\end{align*}
\end{definition}

Note that by construction of $\mathbf{S}_\BC$, we find that $\mathalpha{\in} : \Baire \times \BC \to \mathbf{S}_\BC$ is computable.

\begin{proposition}
\label{prop:sbworks}
Fix a Polish space $\mathbf{X}$. For $A \subseteq \mathbf{X}$ we find the following to be equivalent:
\begin{enumerate}
\item $A \in \BC \upharpoonright \mathbf{X}$
\item $\chi_A : \mathbf{X} \to \mathbb{S}_\BC$ is continuous.
\item $\chi_A : \mathbf{X} \to \mathbb{S}_\BC$ is Borel measurable.
\end{enumerate}
\begin{proof}
\begin{description}
\item[$1. \Rightarrow 2.$]
Fix a total admissible representation $\delta_\mathbf{X} : \Baire \to \mathbf{X}$. Let us assume that $A \in \BC \upharpoonright \mathbf{X}$. Then $\delta_\mathbf{X}^{-1}(A) \in \BC \upharpoonright \Baire$. If $a$ is a Borel code for $\delta_\mathbf{X}^{-1}(A)$, then $q \mapsto \mathalpha{\in}(q,a)$ is a continuous realizer for $\chi_A : \mathbf{X} \to \mathbb{S}_\BC$.
\item[$2. \Rightarrow 3.$] Trivial.
\item[$3. \Rightarrow 1.$]
Now let us assume that $\chi_A : \mathbf{X} \to \mathbf{S}_\BC$ is Borel measurable. Let $c_A : \Baire \to \Baire$ be a Borel measurable realizer of $\chi_A$. We remark that $c_A(p) \in \bcode$ for all $p \in \Baire$. Consider now some countable ordinal $\alpha_A$ such that $|c_A(p)| < \alpha_A$ for all $p \in \Baire$, which we may obtain from Lemma \ref{lemma:dstfacts} (5). The set $S_{\alpha_A} := \{p \in \Baire \mid \delta_{\BC}(p) = \top \wedge |p| \leq \alpha_A\}$ is a Borel subset of $\Baire$. Then $c_A^{-1}(S_{\alpha_A}) = \delta_\mathbf{X}^{-1}(A)$ is Borel as well and hence it is $\bolds^0_{\beta_A}$ for some countable ordinal $\beta_A$.  By Proposition \ref{prop:bairedetermined} (2), we find that $A \in \bolds_{\beta_A}^0 \upharpoonright \mathbf{X}$, in particular, $A$ is Borel.
\end{description}
\end{proof}
\end{proposition}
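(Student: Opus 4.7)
The plan is to prove the equivalence by a cycle $1 \Rightarrow 2 \Rightarrow 3 \Rightarrow 1$, exploiting the computability fact noted just before the proposition that $\mathalpha{\in} : \Baire \times \BC \to \mathbf{S}_\BC$ is computable, together with Lemma \ref{lemma:dstfacts} and Proposition \ref{prop:bairedetermined}.

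For $1 \Rightarrow 2$, fix a total admissible representation $\delta_\mathbf{X} : \Baire \to \mathbf{X}$. Since $\bolds^0_\alpha$ is $\Baire$-determined (Proposition \ref{prop:bairedetermined} (2)), if $A \in \BC \upharpoonright \mathbf{X}$ then $\delta_\mathbf{X}^{-1}(A) \in \BC \upharpoonright \Baire$, so it has some Borel code $a \in \bcode$. The map $q \mapsto \mathalpha{\in}(q, a)$ is then a continuous realizer of $\chi_A$, using the cited computability of $\mathalpha{\in}$. Step $2 \Rightarrow 3$ is immediate because continuity of the realizer implies Borel measurability.

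The main work is in $3 \Rightarrow 1$, and the key obstacle there is that the codomain $\mathbf{S}_\BC$ is not a standard Polish representation, so one must translate Borel measurability of a realizer into Borelness of a preimage in a controlled way. Suppose $c_A : \Baire \to \Baire$ is a Borel measurable realizer of $\chi_A$. By definition of the representation $\delta_\BC$, $c_A$ takes values in $\bcode$. This is where Lemma \ref{lemma:dstfacts} (5) becomes essential: there is a countable ordinal $\alpha_A$ bounding $|c_A(p)|$ for all $p \in \Baire$. I would then define $S_{\alpha_A} = \{p \in \bcode \mid \delta_\BC(p) = \top \text{ and } |p| \leq \alpha_A\}$ and observe that $S_{\alpha_A}$ is Borel, since $\{p \in \bcode \mid |p| \leq \alpha_A\} = \bcode_{\alpha_A}$ is Borel by Lemma \ref{lemma:dstfacts} (1), and the condition $\delta_\BC(p) = \top$ restricted to $\bcode_{\alpha_A}$ is a Borel condition by induction on rank.

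Finally, since $c_A$ is Borel measurable and $S_{\alpha_A}$ is Borel, $c_A^{-1}(S_{\alpha_A}) = \delta_\mathbf{X}^{-1}(A)$ is Borel in $\Baire$, hence $\bolds^0_{\beta_A}$ for some countable $\beta_A$. Applying Proposition \ref{prop:bairedetermined} (2) once more transfers this through $\delta_\mathbf{X}$ to yield $A \in \bolds^0_{\beta_A} \upharpoonright \mathbf{X} \subseteq \BC \upharpoonright \mathbf{X}$. The delicate point to get right is the ordinal-rank bound from Lemma \ref{lemma:dstfacts} (5), without which one could not package the realizer's action into a single Borel set; everything else is then routine boldface descriptive set theory.
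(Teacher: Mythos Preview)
Your proposal is correct and follows essentially the same route as the paper's proof: the cycle $1 \Rightarrow 2 \Rightarrow 3 \Rightarrow 1$, the use of the computable membership map $\mathalpha{\in}$ for $1 \Rightarrow 2$, and the appeal to Lemma~\ref{lemma:dstfacts}~(5) together with Proposition~\ref{prop:bairedetermined}~(2) for $3 \Rightarrow 1$ all match exactly. If anything, you supply slightly more justification than the paper does for why $S_{\alpha_A}$ is Borel (invoking Lemma~\ref{lemma:dstfacts}~(1) and an induction on rank), which the paper simply asserts.
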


As announced above, we will proceed to show that for Baire space the representation of $\BC$ via Borel codes is computably equivalent to the representation via the function space into $\mathbf{S}_\BC$. In this, we will consider the Borel codes to be the default representation of $\BC \upharpoonright \Baire$. A new ingredient of the proof will be:
\begin{lemma}
\label{lemma:r}
The operation $r : \subseteq \BC \upharpoonright \Baire \to \BC \upharpoonright \Baire$ with $\dom(r) = \{A \in \BC \upharpoonright \Baire \mid A \subseteq \bcode\}$ and $r(A) = \{p \in A \mid \delta_\BC(p) = \top\}$ is well-defined and computable.
\begin{proof}
We start by providing $\Sigma_1^1$-sets $T$ and $B$, such that $\delta_\BC(p) = \top \Leftrightarrow p \in \bcode \cap T$ and $\delta_\BC(p) = \bot \Leftrightarrow p \in \bcode \cap B$. This is done by constructing two $\Pi^0_1$-sets $P, Q \subseteq \Baire \times \Baire$ via an interleaving fixed point construction\footnote{The reader coming from a computable analysis background may prefer to see the following as instructions for a dove-tailing programme trying to disprove $(p,q) \in P$ or $(p,q) \in Q$ by unraveling the instructions. If ever one of the first two cases is reached and yields a negative answer, this is propagated back and disproves the original membership query. It is perfectly fine to have ill-founded computation paths, these can never yield contradictions and thus may cause queries to fall in $P$ or $Q$ where the first parameter is not a Borel code.}:
\[\begin{array}{rcl}
(0p, q) \in P & :\Leftrightarrow & p = 0^\mathbb{N}\\
(0p, nq) \in Q & :\Leftrightarrow & p(n) = 1\\
(1\langle p_0, p_1, \ldots\rangle, \langle q_0, q_1, \ldots\rangle) \in P & :\Leftrightarrow & \forall n \in \mathbb{N} \ (p_n, q_n) \in Q\\
(1\langle p_0, p_1, \ldots \rangle, nq) \in Q & :\Leftrightarrow & (p_n,q) \in P\\
\end{array}\]
Now $p \in T :\Leftrightarrow \exists q \in \Baire \ (p,q) \in Q$ and $p \in B : \Leftrightarrow \exists q \in \Baire \ (p,q) \in P$ are our desired sets.

Given $A \in \BC \upharpoonright \Baire$ we can compute $A \cap T \in \bolds_1^1 \upharpoonright \Baire$ and $A^C \cup B \in \bolds_1^1 \upharpoonright \Baire$, and note that $A \subseteq \bcode$ implies $(A \cap T)^C = A^C \cup B$, so by applying the effective Suslin theorem (\name{Moschovakis} \cite{moschovakis2}) we can obtain $r(A) = A \cap T \in \BC$.
\end{proof}
\end{lemma}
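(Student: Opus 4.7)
My plan is to show that for $A \in \BC \upharpoonright \Baire$ with $A \subseteq \bcode$ the set $r(A) = \{p \in A \mid \delta_\BC(p) = \top\}$ lies in $\bolds^1_1 \cap \boldp^1_1$, and then invoke the effective Suslin theorem to uniformly obtain a Borel code of $r(A)$ from the given Borel code of $A$. The two ingredients I need are \emph{(i)} $\bolds^1_1$ sets $\hat T, \hat B \subseteq \Baire$ that agree on $\bcode$ with $\{p \mid \delta_\BC(p) = \top\}$ and $\{p \mid \delta_\BC(p) = \bot\}$ respectively, and \emph{(ii)} the identity $(r(A))^C = A^C \cup (A \cap \hat B)$, which is forced because $\delta_\BC$ is total on $\bcode$.

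The central step is the construction of $\hat T$ and $\hat B$. The recursion
\[
\delta_\BC(0p) = \delta_\mathbb{S}(p), \qquad \delta_\BC(1\langle p_0, p_1, \ldots\rangle) = \bigvee_{i \in \mathbb{N}} \neg \delta_\BC(p_i),
\]
entangles existential quantifiers (on the $\top$ side) with universal quantifiers (on the $\bot$ side). To obtain $\bolds^1_1$-representations for both sides simultaneously, I will absorb the evaluation witness into a single auxiliary $q \in \Baire$ and define two closed relations $P, Q \subseteq \Baire \times \Baire$ by interleaved clauses on the first coordinate: on atomic codes ($p = 0p'$) the clauses track the behaviour of $\delta_\mathbb{S}$; on compound codes ($p = 1\langle p_0, p_1, \ldots\rangle$), the $\bot$-witness supplies a $\top$-witness for every component $p_n$ (so the universal quantifier is absorbed into the coordinates of $q$), while the $\top$-witness selects a single index $n$ together with a $\bot$-witness for $p_n$. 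Setting $\hat T := \{p \mid \exists q \ (p,q) \in Q\}$ and $\hat B := \{p \mid \exists q \ (p,q) \in P\}$ yields $\bolds^1_1$ sets, and a well-founded induction along $|p|$ shows that they agree on $\bcode$ with the intended semantics; an ill-founded witness outside $\bcode$ is permitted and causes no harm there.

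With $\hat T$ and $\hat B$ in hand, the rest is bookkeeping. First, $r(A) = A \cap \hat T$ is the intersection of a Borel set with a $\bolds^1_1$ set, hence $r(A) \in \bolds^1_1$ uniformly in the Borel code of $A$. Second, because $A \subseteq \bcode$ forces each $p \in A$ into exactly one of $\hat T \cap \bcode$ or $\hat B \cap \bcode$, one has $(r(A))^C = A^C \cup (A \cap \hat B)$; the right-hand side is a union of a Borel set with a $\bolds^1_1$ set, so $r(A) \in \boldp^1_1$. Applying the effective Suslin theorem to these two representations yields a Borel code of $r(A)$. The main obstacle I foresee is the bookkeeping of step \emph{(i)}: the clauses defining $P$ and $Q$ must be tight enough that, when $p \in \bcode$, $(p, q) \in Q$ really forces $\delta_\BC(p) = \top$ (and analogously for $P$), so that the induction along $|p|$ actually closes without accidentally admitting codes of the wrong truth value.
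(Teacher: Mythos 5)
Your proposal is correct and follows essentially the same route as the paper's own proof: the same interleaved pair of closed relations $P,Q$ carrying a single witness $q$ (the $\bot$-witness packaging $\top$-witnesses for all components, the $\top$-witness selecting one component together with a $\bot$-witness), projected to the same $\bolds^1_1$ sets $T$ and $B$, followed by the same use of $A \subseteq \bcode$ to write the complement of $A \cap T$ as a $\bolds^1_1$ set and the same appeal to the effective Suslin theorem. The only difference is cosmetic: you write $(r(A))^C = A^C \cup (A \cap \hat B)$ where the paper writes $A^C \cup B$, which are equal here.
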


\begin{theorem}
\label{theo:boreleffectivecoincidence}
The map $A \mapsto \chi_A : \BC \upharpoonright \Baire \to \mathcal{C}(\Baire, \mathbf{S}_\BC)$ is a computable isomorphism.
\begin{proof}
That this map is computable follows by currying from the computability of $\mathalpha{\in} : \Baire \times \BC \to \mathbf{S}_\BC$; that it is a bijection from Proposition \ref{prop:sbworks}. It only remains to prove that its inverse is computable, too.

Given $\chi_A \in \mathcal{C}(\Baire, \mathbf{S}_\BC)$, we can compute the $\bolds_1^1$ set $\chi_A[\Baire]$. Then we use the effective Suslin theorem (\name{Moschovakis} \cite{moschovakis2}) on $\chi_A[\Baire]$ and the $\Sigma_1^1$-set $\bcode^C$ to obtain some $B \in \BC \upharpoonright \Baire$ with $\chi_A[\Baire] \subseteq B \subsetneq \bcode$. Using the computable map $r$ from Lemma \ref{lemma:r} we can then obtain $A \in \BC \upharpoonright \Baire$ as $A = \chi_A^{-1}(r[B])$.
\end{proof}
\end{theorem}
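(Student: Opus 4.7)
The plan is to establish the two directions separately. For the forward direction, I would observe that by construction of $\mathbf{S}_\BC$, the membership map $\mathalpha{\in}: \Baire \times \BC \upharpoonright \Baire \to \mathbf{S}_\BC$ sending $(p, A)$ to $\chi_A(p)$ is computable when $\BC \upharpoonright \Baire$ is represented via Borel codes (this is essentially unwinding the recursive definition of $\delta_\BC$). By the exponential law for the cartesian closed category of represented spaces, currying yields a computable map $A \mapsto \chi_A : \BC \upharpoonright \Baire \to \mathcal{C}(\Baire, \mathbf{S}_\BC)$. That this map is a bijection follows from Proposition \ref{prop:sbworks}, so only computability of the inverse requires work.

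For the inverse, suppose we are given $\chi_A \in \mathcal{C}(\Baire, \mathbf{S}_\BC)$, which by definition of $\mathcal{C}(-,-)$ comes equipped with a continuous realizer $c_A : \Baire \to \Baire$ satisfying $\delta_\BC(c_A(p)) = \chi_A(p)$ for all $p$. In particular $c_A[\Baire] \subseteq \bcode$. The image $c_A[\Baire]$ is a $\Sigma^1_1$ subset of $\Baire$ that we can compute from $c_A$, while $\bcode^C$ is $\Sigma^1_1$ by Lemma \ref{lemma:dstfacts} (2). These two $\Sigma^1_1$ sets are disjoint.

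Now I would invoke the effective Suslin separation theorem (Moschovakis \cite{moschovakis2}) to produce a Borel set $B \in \BC \upharpoonright \Baire$ with $c_A[\Baire] \subseteq B \subseteq \bcode$. Since $B \subseteq \bcode$, Lemma \ref{lemma:r} applies and we can compute the Borel set $r(B) = \{p \in B \mid \delta_\BC(p) = \top\}$. Finally, continuous preimages of Borel sets are computably Borel, so $c_A^{-1}(r(B)) \in \BC \upharpoonright \Baire$ is computable. A straightforward check using $c_A[\Baire] \subseteq B$ gives
\[
c_A^{-1}(r(B)) = \{p \mid c_A(p) \in B \wedge \delta_\BC(c_A(p)) = \top\} = \{p \mid \chi_A(p) = \top\} = A,
\]
so this produces the desired Borel code for $A$.

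The main obstacle is conceptual rather than technical: one has to recognize that although neither $c_A[\Baire]$ nor $\bcode$ is itself Borel, their separation by a Borel set via effective Suslin is computable and suffices to convert the ``extensional'' characteristic function $\chi_A$ back into a Borel code, and that Lemma \ref{lemma:r} is precisely what bridges the gap between a Borel set of codes and the set they actually represent. Once those two ingredients are identified, the rest is composition of already-established computable operations.
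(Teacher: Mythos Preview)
Your proposal is correct and follows essentially the same route as the paper: currying for the forward map, then effective Suslin separation of the (analytic) image of a realizer from $\bcode^C$, followed by Lemma~\ref{lemma:r} and a continuous preimage. Your write-up is in fact slightly more explicit than the paper's, which writes $\chi_A[\Baire]$ where your realizer-level notation $c_A[\Baire]$ is really what is meant.
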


Theorem \ref{theo:boreleffectivecoincidence} allows us to conclude some effective closure properties of $\BC$ either directly, or using some basic properties of $\mathbf{S}_\BC$. We start with the latter:

\begin{proposition}
\label{prop:sbcclosure}
The following maps are computable:
\begin{enumerate}
\item $\neg : \mathbf{S}_\BC \to \mathbf{S}_\BC$
\item $\wedge, \vee : \mathbf{S}_\BC \times \mathbf{S}_BC \to \mathbf{S}_\BC$
\item $\bigwedge, \bigvee : \mathcal{C}(\mathbb{N},\mathbf{S}_\BC) \to \mathbf{S}_\BC$
\end{enumerate}
\begin{proof}
\begin{enumerate}
\item This is realized by $p \mapsto \langle 1, p, p, p, \ldots\rangle$.
\item This follows from (3.).
\item $(p_i)_{i \in \mathbb{N}} \mapsto \langle 1, \langle 1, p_0, p_1, \ldots\rangle, \langle 1, p_0, p_1, \ldots\rangle, \ldots\rangle$ realizes $\bigwedge$. Computability of $\bigvee$ follows using de Morgan's law and $(1.)$.
\end{enumerate}
\end{proof}
\end{proposition}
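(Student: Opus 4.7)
The plan is to read off each operation directly from the single recursion clause
$\delta_\BC(1\langle p_0, p_1, \ldots\rangle) = \bigvee_{i \in \mathbb{N}} \neg \delta_\BC(p_i)$,
which is effectively a primitive countable disjunction of negations. The key observation is that duplicating a single argument in this clause already simulates ordinary negation, and every remaining connective will be obtained by nesting one level of this clause.

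First I would realize $\neg : \mathbf{S}_\BC \to \mathbf{S}_\BC$ by the map $N : p \mapsto 1\langle p, p, p, \ldots\rangle$. If $p \in \bcode$, then by the defining recursion $N(p) \in \bcode$ as well, and $\delta_\BC(N(p)) = \bigvee_{i} \neg \delta_\BC(p) = \neg \delta_\BC(p)$. Next, for the countable conjunction $\bigwedge : \mathcal{C}(\mathbb{N}, \mathbf{S}_\BC) \to \mathbf{S}_\BC$, given names $(p_i)_{i \in \mathbb{N}}$, I would form the intermediate name $q := 1\langle p_0, p_1, \ldots\rangle$, which by the recursion names $\bigvee_i \neg \delta_\BC(p_i)$, and then apply $N$: the resulting $N(q)$ names $\neg \bigvee_i \neg \delta_\BC(p_i) = \bigwedge_i \delta_\BC(p_i)$.

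For $\bigvee$, one can either compose by de Morgan, $\bigvee_i = \neg \bigwedge_i \neg$, or more directly output $1\langle N(p_0), N(p_1), \ldots\rangle$, which unfolds to $\bigvee_i \neg \neg \delta_\BC(p_i) = \bigvee_i \delta_\BC(p_i)$. The binary $\wedge$ and $\vee$ then follow by specialising the countable versions to a computable two-point enumeration of the inputs, e.g.\ by padding with the constant sequence formed by repeating the two inputs.

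There is no genuine obstacle here: the argument amounts to one-step unfolding of the recursion defining $\delta_\BC$, together with the routine observation that pairing and countable tupling on $\Baire$ are computable by the standard conventions. The only thing one needs to verify along the way is that all intermediate terms remain inside $\bcode$, which is immediate from the definition of Borel codes.
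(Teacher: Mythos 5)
Your proof is correct and follows essentially the same route as the paper: the same realizer $p \mapsto 1\langle p,p,\ldots\rangle$ for $\neg$, the same nested code for $\bigwedge$ (your $N(q)$ is literally the paper's realizer $\langle 1,\langle 1,p_0,p_1,\ldots\rangle,\langle 1,p_0,p_1,\ldots\rangle,\ldots\rangle$), and the binary connectives by specialising the countable ones. The only difference is cosmetic --- you additionally note the direct realizer $1\langle N(p_0),N(p_1),\ldots\rangle$ for $\bigvee$ where the paper just invokes de Morgan, and you spell out the (correct, routine) check that intermediate terms stay in $\bcode$.
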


\begin{corollary}
\label{corr:computableonbc}
The following maps are computable:
\begin{enumerate}
\item $(f, U) \mapsto f^{-1}(U) : \mathcal{C}(\Baire, \Baire) \times \BC \upharpoonright \Baire \to \BC \upharpoonright \Baire$
\item $U \mapsto U^C : \BC\upharpoonright \Baire \to \BC\upharpoonright \Baire$
\item $\cap, \cup : \BC\upharpoonright \Baire \times \BC\upharpoonright \Baire \to \BC\upharpoonright \Baire$
\item $\bigcap, \bigcup : \mathcal{C}(\mathbb{N},\BC\upharpoonright \Baire) \to \BC\upharpoonright \Baire$
\end{enumerate}
\begin{proof}
Use the characterization of $\BC$ given by Theorem \ref{theo:boreleffectivecoincidence}. The first map is realized by function composition, the remaining by composing with the appropriate map from Proposition \ref{prop:sbcclosure}.
\end{proof}
\end{corollary}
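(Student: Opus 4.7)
The plan is to transport every claim across the computable isomorphism of Theorem~\ref{theo:boreleffectivecoincidence}. That theorem identifies $\BC \upharpoonright \Baire$ with $\mathcal{C}(\Baire, \mathbf{S}_\BC)$ via $A \mapsto \chi_A$, so it suffices to show that each of the four operations, once rewritten on characteristic maps, is computable. Once inside the function space, the pointwise connective structure provided by Proposition~\ref{prop:sbcclosure} does all the remaining work, with cartesian closure used to lift pointwise operations on $\mathbf{S}_\BC$ to operations on $\mathcal{C}(\Baire,\mathbf{S}_\BC)$.

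Concretely, for item (1), given $f : \Baire \to \Baire$ and $U \in \BC \upharpoonright \Baire$, we note $\chi_{f^{-1}(U)} = \chi_U \circ f$, so the map is realized by composition $\mathcal{C}(\Baire,\mathbf{S}_\BC) \times \mathcal{C}(\Baire,\Baire) \to \mathcal{C}(\Baire,\mathbf{S}_\BC)$, which is computable by cartesian closure. For (2), $\chi_{U^C} = \neg \circ \chi_U$, so computability follows by post-composing with $\neg : \mathbf{S}_\BC \to \mathbf{S}_\BC$ from Proposition~\ref{prop:sbcclosure}(1). For (3), $\chi_{U \cap V}(x) = \chi_U(x) \wedge \chi_V(x)$, so $\cap$ is obtained by pairing and post-composing with the computable $\wedge : \mathbf{S}_\BC \times \mathbf{S}_\BC \to \mathbf{S}_\BC$; the case of $\cup$ is symmetric. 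For (4), $\chi_{\bigcap_n U_n}(x) = \bigwedge_n \chi_{U_n}(x)$, so I would curry a computable sequence $(U_n)_n \in \mathcal{C}(\mathbb{N}, \mathcal{C}(\Baire, \mathbf{S}_\BC))$ into a computable map $\Baire \to \mathcal{C}(\mathbb{N},\mathbf{S}_\BC)$ and post-compose with the infinitary $\bigwedge$ from Proposition~\ref{prop:sbcclosure}(3); $\bigcup$ is handled dually.

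The main conceptual obstacle has already been discharged by Theorem~\ref{theo:boreleffectivecoincidence}: this is precisely what allows one to avoid ever manipulating Borel codes explicitly. Without it, each operation would have to be realized by a direct recursion on codes (essentially plugging subcodes into the $1\langle \cdot, \cdot, \ldots\rangle$ constructor and discharging negations through $\neg\neg$-insertions), which would require verifying at each step that the output stays in $\bcode$. In the present setup all that residual bookkeeping is absorbed into the computability of the Boolean operations on $\mathbf{S}_\BC$, leaving only a routine assembly via cartesian closure.
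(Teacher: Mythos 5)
Your proposal is correct and follows exactly the paper's argument: transport everything along the isomorphism of Theorem~\ref{theo:boreleffectivecoincidence}, realize (1) by function composition, and obtain (2)--(4) by post-composing with the corresponding operations on $\mathbf{S}_\BC$ from Proposition~\ref{prop:sbcclosure}. You simply spell out the cartesian-closure bookkeeping that the paper leaves implicit.
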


While proving the results of Corollary \ref{corr:computableonbc} directly would not have been particularly cumbersome either, the present approach immediate generalizes to all represented spaces. In analogy to Theorem \ref{theo:boreleffectivecoincidence}, we could \emph{define} the space $\BC_\mathbf{X}$ of Borel subsets of some represented space $\mathbf{X}$ by identifying $U \subseteq \mathbf{X}$ with (continuous) $\chi_U : \mathbf{X} \to \mathbf{S}_\BC$. Corollary \ref{corr:computableonbc} then immediately shows that $\BC_\mathbf{X}$ has the expected effective closure properties.
\section{Conclusions}
We have demonstrated that the computability notions used in computable analysis (and synthetic descriptive set theory) and effective descriptive set theory respectively coincide for objects in the scope of both. When it comes to metric spaces, the scope of effective descriptive set theory is more restrictive, however, the difference disappears modulo a rescaling of the metric. While the requirements for pointclasses to be treatable in the two frameworks differ significantly, the computability notions for $\bolds_\alpha^0$, $\boldp_\alpha^0$, $\bolds^1_1$ and $\boldp^1_1$ coincide for Polish spaces, and $\BC$ ($\boldd^1_1$) is the same in both frameworks for Baire space.

\bibliographystyle{eptcs}
\bibliography{metricspaces}

\section*{Acknowledgements}
The work presented here benefited from the Royal Society International Exchange Grant IE111233. The second author was supported by an EPSRC-funded UROP bursary. The first author would like to thank Yiannis Moschovakis for valuable discussions, and is grateful to Ulrich Kohlenbach for his continuing substantial support.
\end{document}